\documentclass{article}
\usepackage{amsthm}
\usepackage{amsfonts}
\usepackage{fullpage} 
\usepackage{tikz-cd}
\usepackage{quiver}
\usepackage{amsmath}
\usepackage{amssymb}
\usepackage{bbm}
\usepackage{color}
\usepackage{MnSymbol} 
\usepackage{parskip}
\usepackage[section]{placeins} 
\usepackage{url}
\usepackage{comment}
\usepackage{trimclip}
\usepackage{hyperref}
\definecolor{navy}{RGB}{0,0,128}
\hypersetup{
    colorlinks=true,
    linkcolor=navy,
    urlcolor=navy,
    citecolor=navy
}

\theoremstyle{plain}
\newtheorem{theorem}{Theorem}[section]
\newtheorem{thm}[theorem]{Theorem}

\newtheorem{proposition}[theorem]{Proposition}

\newtheorem{corollary}[theorem]{Corollary}

\theoremstyle{definition}
\newtheorem{definition}[theorem]{Definition}
\newtheorem{example}[theorem]{Example}

\newtheorem{remark}[theorem]{Remark}

\newcommand{\dfn}[1]{\emph{#1}}
\renewcommand{\dfn}[1]{\textbf{#1}}

\newcommand{\twobb}{\mathbbm{2}}
\newcommand{\onebb}{\mathbbm{1}}
\newcommand{\ibb}{{\ensuremath{\mathbb{ I}}} }

\newcommand{\id}{\mathrm{id}}
\newcommand{\Id}{\mathrm{Id}}
\newcommand{\dom}{\mathrm{dom}}
\newcommand{\cod}{\mathrm{cod}}

\newcommand{\co}{\mathrm{co}}
\newcommand{\op}{\mathrm{op}}
\newcommand{\Set}{\mathsf{Set}}
\newcommand{\Pos}{\mathsf{Pos}}
\newcommand{\DL}{\mathsf{DL}}

\newcommand{\BA}{\mathsf{BA}}
\newcommand{\Pri}{\mathsf{Pri}}

\newcommand{\dProx}{{\mathbb P}\mathsf{rox}}
\newcommand{\PriPre}{\mathsf{PriPre}}
\newcommand{\dPriPre}{{\mathbb P}\mathsf{riPre}}

\newcommand{\Prox}{\mathsf{Prox}}
\newcommand{\ProxRel}{\mathsf{ProxRel}}
\newcommand{\SubDL}{\mathsf{SubDL}}
\newcommand{\SubBA}{\mathsf{SubBA}}
\newcommand{\DGF}{\mathsf{DGF}}

\newcommand{\KOHaus}{\mathsf{KOHaus}}

\DeclareMathOperator{\Spec}{Spec}

\newcommand{\then}{\mathrel{;}}
\newcommand{\dRel}{{\mathbb R}\mathsf{el}}

\newcommand{\Rel}{\mathsf{Rel}}

\newcommand{\Map}{\mathsf{Map}}
\newcommand{\Spl}{{\mathbb S}\mathsf{pl}}

\newcommand{\Filt}{\mathsf{Filt}}
\newcommand{\RFilt}{\mathsf{RFilt}}
\newcommand{\RPFilt}{\mathsf{RPFilt}}

\newcommand{\ddown}{%
\mathord{\mathchoice
  {\raise2pt\hbox{%
  \ooalign{\hss$\downarrow$\hss\cr\lower2pt\hbox{%
  $\downarrow$}}}}
  {\raise2pt\hbox{%
  \ooalign{\hss$\downarrow$\hss\cr\lower2pt\hbox{%
  $\downarrow$}}}}
  {\raise1.5pt\hbox{%
  \ooalign{\hss$\scriptstyle\downarrow$\hss\cr\lower1.5pt\hbox{%
  $\scriptstyle\downarrow$}}}}
  {\raise1.1pt\hbox{%
\ooalign{\hss$\scriptscriptstyle\downarrow$\hss\cr\lower1.1pt\hbox{%
  $\scriptscriptstyle\downarrow$}}}}
}}
\newcommand{\uup}{%
\mathord{\mathchoice
  {\raise2pt\hbox{%
  \ooalign{\hss$\uparrow$\hss\cr\lower2pt\hbox{%
  $\uparrow$}}}}
  {\raise2pt\hbox{%
  \ooalign{\hss$\uparrow$\hss\cr\lower2pt\hbox{%
  $\uparrow$}}}}
  {\raise1.5pt\hbox{%
  \ooalign{\hss$\scriptstyle\uparrow$\hss\cr\lower1.5pt\hbox{%
  $\scriptstyle\uparrow$}}}}
  {\raise1.1pt\hbox{%
\ooalign{\hss$\scriptscriptstyle\uparrow$\hss\cr\lower1.1pt\hbox{%
  $\scriptscriptstyle\uparrow$}}}}
}}

\makeatletter
\newcommand{\twoheaddownarrowCustom}{\mathrel{\mathpalette\twoheaddownarrow@\relax}}
\newcommand{\twoheaddownarrow@}[2]{%
  \begingroup
  \sbox\z@{$\m@th#1\downarrow$}%
  \vphantom{\copy\z@}%
  \ooalign{\copy\z@\cr\hidewidth\clipdownarrow@{#1}\hidewidth\cr}%
  \endgroup
}
\newcommand{\clipdownarrow@}[1]{%
  \raisebox{0.25\ht\z@}{\clipbox{-3pt 0pt -3pt {0.5\height}}{\copy\z@}}%
}
\newcommand{\twoheaduparrowCustom}{\mathrel{\mathpalette\twoheaduparrow@\relax}}
\newcommand{\twoheaduparrow@}[2]{%
  \begingroup
  \sbox\z@{$\m@th#1\uparrow$}%
  \vphantom{\copy\z@}%
  \ooalign{\copy\z@\cr\hidewidth\clipuparrow@{#1}\hidewidth\cr}%
  \endgroup
}
\newcommand{\clipuparrow@}[1]{%
  \raisebox{0.1\ht\z@}{\clipbox{-3pt {0.65\height} -3pt 0pt}{\copy\z@}}%
}
\makeatother
\newcommand{\dda}{\mathord{\twoheaddownarrowCustom}}
\newcommand{\uua}{\mathord{\twoheaduparrowCustom}}

\title{From Zero-Dimensional to Continuous Dualities: A Double-Categorical Account}

\author{Alexander Kurz, M. Andrew Moshier, Achim Jung}
\date{\today}

\begin{document}

\maketitle

\begin{abstract}
We investigate how to systematically construct continuous dualities from zero-dimensional dualities, employing well-known methods from algebra, topology, category theory, and domain theory. 
While our method is general, this paper focusses on the move from Stone spaces to compact Hausdorff spaces and the move from Priestley spaces to compact ordered Hausdorff spaces. 
The engine of our approach is Stone duality for relations: on the space side quotienting by a preorder turns zero-dimensional spaces into continuous ones, while distributive lattices with a proximity relation are their algebraic duals.
Our duality for relations is inherently order-enriched. Double categories organise both functional and relational morphism in the same structure. 
The move from zero-dimensional to continuous dualities is then a three-step construction: extend a duality from functional to relational morphism, split idempotents, restrict to maps.
\end{abstract}

\setcounter{tocdepth}{1}
\tableofcontents

\newpage
\section{Introduction}\label{sec:introduction}

We investigate how to systematically construct continuous dualities from zero-dimensional dualities, employing well-known methods from algebra, topology, category theory, and domain theory. 
To avoid overloading the paper with long preliminaries on these subjects, we look at the special case of the move from Stone spaces to compact Hausdorff spaces, or, rather, its order-enriched generalisation, the move from Priestley spaces to compact ordered spaces (also called compact pospaces or Nachbin spaces). 

Classical theorems connecting algebraic and continuous structures include Manes' theorem (every compact Hausdorff space is the quotient of a Stone space) \cite[III.2.4]{JohnstoneStoneSpaces}, Gleason's theorem characterising the minimal projective covers of compact Hausdorff spaces \cite[III.3.7]{JohnstoneStoneSpaces}, and the theorem that every continuous lattice is a quotient of an algebraic lattice \cite[Thm~I-4.17]{GierzEtAl}. 
An order-enriched generalisation of Manes' theorem is due to Flagg \cite{Flagg}, see also Goubault-Larrecq's blog post \cite{GoubaultLarrecq2022FilterMonads}.

From a technical point of view, our work derives from the Jung-S\"underhauf duality for stably compact spaces \cite{JungSunderhauf1996} and the Jung-Kegelmann-Moshier duality for proximity lattices \cite{JungKegelmannMoshier1999,Kegelmann1999}. 
One of the main insights of those works is to base the equivalence between algebraic and continuous structures on (1) equipping the algebraic structures with a relation (standing in for way-below \cite[VII.2.2]{JohnstoneStoneSpaces}) and (2) weakening isomorphism of algebraic structures to isomorphism with respect to relational morphisms.

In order to bring these results into a unified framework, we add the following ingredients to classical Stone-type dualities.

\begin{itemize}
  \item We work in the context of order-enriched categories. For example, as is the case already in \cite{JohnstoneStoneSpaces}, the duality between Stone spaces and Boolean algebras is understood as the restriction of the duality between Priestley spaces and distributive lattices. 
  But we make crucial use of the fact that these (and similar) dualities involve equivalences between order-enriched categories (making the order between the arrows part of the categorical structure). 
  \item We extend the classical Stone-type dualities to dualities for relations as carried out in some detail in \cite{KMJ}. This allows us to account for dualities that involve algebraic structures  containing relations (such as proximity lattices and subordination algebras) and to account for relational isomorphism.
  \item The move from zero-dimensional to continuous structures can then be explained via the following general category-theoretic  constructions:
  \begin{enumerate}
    \item Extend a zero-dimensional duality to relations.
    \item Split idempotents in the category of relations.
    \item Restrict to adjoint pairs (maps).
  \end{enumerate}
  These steps form the middle layer of the synopsis in the next section.
  \item In order to formalise the constructions outlined in the previous item at the right level of abstraction, we rely on the formalism of double categories which allows us to have functional morphisms and relational morphisms in the same mathematical structure.
\end{itemize}

We assume that the reader is familiar with Stone and Priestley duality
\cite{DaveyPriestley2002,GehrkeVanGool2024} as well as basic category theory
\cite{MacLane1998,KellyEnriched,Grandis2019}.
We will explain some of the key ideas of our construction using the example of the Cantor space in \emph{Section~\ref{sec:cantor}} and review in \emph{Section~\ref{sec:preliminaries}}  preliminaries on order-enriched and double categories.

\emph{Section~\ref{sec:dualities-relations}} presents an interlude showing  how to use double categorical Stone duality to recover and extend known results on subordination algebras. This is based on having subordinations as relational arrows and algebra morphisms as functional arrows in the same double category. 
\emph{Section~\ref{sec:splitting-idempotents}} starts on the path towards continuous dualities. It splits idempotents in the double categories of Priestley relations and of distributive lattice relations. This construction preserves duality.
\emph{Section~\ref{sec:proxlat}} spells out the algebra side as proximity lattices and reconstructs the (strong) proximity lattice of a compact ordered space. 
\emph{Section~\ref{sec:priestley-preorders}} spells out the space side as Priestley preorders and shows that every Priestley preorder quotients to a compact ordered space. 
\emph{Section~\ref{sec:dedekind-cuts}} characterises points of proximity lattices as adjoint pairs of relations (cuts) and, in the strong case, as round prime filters. 
\emph{Section~\ref{sec:round-filters}} studies round filters and roundification, and shows that for a strong proximity lattice the quotient of $\Spec(A)$ by the dual preorder is a compact ordered space (isomorphic to the space of round prime filters). 
\emph{Section~\ref{sec:adjoint-relations-in-priestley-preorders}} shows that adjoint relations between Priestley preorders correspond to continuous monotone maps between compact ordered spaces. 
\emph{Section~\ref{sec:adjoint-relations-proximity-lattices}} dualises to adjoint relations between proximity lattices. 
\emph{Section~\ref{sec:conclusion}} summarises and points to further work.

\paragraph{Notation} We use $X$, $x$, etc.\ for entities on the space side of a duality and $A$, $a$ etc.\ for entities on the algebra side.
When referring to an object of a general category (or double category), we use $C$, $D$, etc.
The two-element dualising object is denoted by $\twobb$ on either side. For example, $\twobb^X$ denotes the set (or distributive lattice) of continuous order-preserving maps from a Priestley space $X$ to the Priestley space $\twobb$ while $\twobb^A$ denotes the set (or Priestley space) of homomorphisms from a distributive lattice $A$ to the distributive lattice $\twobb$. A relation $R$ on spaces will be written $xRy$ and its dual $a\twobb^R b$. Similarly, $aRb$ for a relation on algebras and $x\twobb^R y$ for its dual.

We write $\mathcal{C}$ for general categories, $\mathbb{C}$ for general double categories, and $\bf C$ for the Cantor space. 

We use $\id$ for identity functional arrows and $\Id$ for identity relational arrows.
We write $C^o$ for the order-dual of a poset (thin category) $C$ and $\mathcal C^{\mathrm{op}}$ for the opposite of a category $\mathcal C$.

By definition of $\twobb^A$ for distributive lattices $A$, we may identify a point $p\in\twobb^A$ with its prime filter $p^{-1}(1)$. The natural order on points is the element-wise order on $\twobb^A$, so the order is inclusion of prime filters. The algebraic (ring-theoretic) tradition of Davey and Priestley favours identifying a point with its kernel $p^{-1}(0)$. In ordinary categories, this is just a choice of convention: is $1$ drawn at the top or at the bottom? It matters only when order-enrichment comes into play. Most of the time in this paper, the reader may safely ignore this convention. We will draw attention to it only when needed in Section~\ref{sec:round-filters}.

\section{Synopsis}\label{sec:synopsis}

At the bottom, we have the duality between Priestley spaces and distributive lattices. 
Going up, we embed into their double categories of relations. 
Going up again, we complete by splitting idempotents, on spaces by splitting idempotents above identity (Priestley preorders) and on algebras by splitting idempotents below identity (proximity lattices). 
Next, the dashed arrows again embed categories of maps into categories of relations, so that going up corresponds to restricting to maps.
The top row reconstructs the dual equivalence between compact ordered spaces and proximity lattices. Note that each of the three steps (category of relations, splitting idempotents, restricting to maps) is equivalence-preserving.

\[\begin{tikzcd}
	{\quad\KOHaus\quad} && {{\Map(\ProxRel)}^{\rm op}} \\
	\\
	{\sf \Map(\Spl_\ge(\dRel(\Pri)))} && {\sf \Map(\Spl_\le(\dRel(\DL)))^{\rm op}} \\
	\\
	{\Spl_\ge(\dRel(\Pri))} && {\Spl_\le(\dRel(\DL))^{\rm op} = \dProx^\op} \\
	\\
	{\sf \dRel(Pri)} && {\sf \dRel(DL)^{\rm op}} \\
	\\
	{\sf Pri} && {{\sf DL}^{\rm op}}
	\arrow[curve={height=-12pt}, from=1-1, to=1-3]
	\arrow["\cong", from=1-1, to=3-1]
	\arrow[curve={height=-12pt}, from=1-3, to=1-1]
	\arrow["=", from=1-3, to=3-3]
	\arrow[blue, curve={height=-12pt}, from=3-1, to=3-3]
	\arrow[blue, dashed, "{(-)_\diamond}"', from=3-1, to=5-1]
	\arrow[blue, curve={height=-12pt}, from=3-3, to=3-1]
	\arrow[blue, dashed, "{(-)^\diamond}", from=3-3, to=5-3]
	\arrow[blue, curve={height=-12pt}, from=5-1, to=5-3]
	\arrow[blue, curve={height=-12pt}, from=5-3, to=5-1]
	\arrow[blue, from=7-1, to=5-1]
	\arrow[blue, curve={height=-12pt}, from=7-1, to=7-3]
	\arrow[blue, from=7-3, to=5-3]
	\arrow[blue, curve={height=-12pt}, from=7-3, to=7-1]
	\arrow[dashed, "{(-)_\diamond}", from=9-1, to=7-1]
	\arrow[curve={height=-12pt}, from=9-1, to=9-3]
	\arrow[dashed, "{(-)^\diamond}"', from=9-3, to=7-3]
	\arrow[curve={height=-12pt}, from=9-3, to=9-1]
\end{tikzcd}\]%

\section{The unit interval as a quotient of Cantor space}\label{sec:cantor}

This is the motivating example that is worth keeping in mind. Let ${\bf C}$ be the Cantor space. ${\bf C}$ can be represented as the set of infinite sequences $\{0,1\}^\omega$, or as the middle third subset of the unit interval $\ibb$.

Let $\equiv$ be the equivalence relation identifying the two endpoints of each middle third gap. 
Quotienting ${\bf C}$ by ${\equiv}$ gives (a space homeomorphic to) the unit interval $\ibb$.
For example $0111 \ldots$ and $1000\ldots$ are identified.

It will be important that $\equiv$ is idempotent, that is $({\equiv}\mathrel{;}{\equiv}) = {\equiv}$, and above identity, that is, $\Id_{\bf C}\subseteq{\equiv}$.

The dual ${\bf A}$ of ${\bf C}$ is the set of clopens of ${\bf C}$. ${\bf A}$ can be described as generated by finite sequences $\{0,1\}^\ast$ in the following way. Each $w\in\{0,1\}^\ast$ is identified with the clopen $\underline w$ consisting of all infinite words starting with $w$. ${\bf A}$ is then the closure of all $\big\{\underline w \mid w\in\{0,1\}^\ast\big\}$ under finite unions and intersections.

The dual $\prec$ of $\equiv$ is the relation ${\prec} : {\bf A} \looparrowright {\bf A}$ given by $a\prec b$ if the closure of $a$ under $\equiv$ is contained in $b$. Thus $a\prec b$ if $a$ is contained in the $\mathbb{I}$-interior of $b$, that is, if $a$ is \dfn{way-below} $b$. Intuitively, $a\prec b$ if there is some room on the left and right of $a$ to find $c$ such that $a\prec c\prec b$. Indeed, we can always find clopens $a_l$ and $a_r$ such that $a\prec a_l\cup a\cup a_r\prec b$.

Formally, as we will discuss in more detail in the next section, the dual of a relation $R\subseteq X\times Y$ is given, for $a\subseteq X$ and $b\subseteq Y$, by $a \twobb^R b$ if $R[a] \subseteq b$. In particular, the dual of the identity relation on ${\bf C}$ is subset inclusion of clopens of ${\bf A}$. 

If we order elements of ${\bf A}$ (clopens) by inclusion $\subseteq$, then ${\prec}$ is below the identity of ${\bf A}$ which we take to be the ordering of clopens by inclusion ${\le_{\bf A}} = {\subseteq}$. In fact, ordering clopens by $\prec$, the only reflexive elements are bottom and top (empty set and ${\bf C}$).
$\prec$ is idempotent and ${\prec}\subseteq {\Id_{\bf A}}$. As noted above, this yields interpolants between related clopens.

The Boolean algebra ${\bf A}$ of clopens is the Stone dual of the Cantor space (equivalently, the free Boolean algebra on countably many generators). The proximity relation $\prec$ is dual to the equivalence relation $\equiv$. 
As we will see later in detail, the category of so-called proximity lattices $(A,\prec)$ is dual to the category of zero-dimensional spaces equipped with suitable preorders. For the special case of equivalence relations see Remark~\ref{rem:ba-prox-equiv}.

Let us briefly indicate here how that works in this example. It is well-known from Stone/Priestley duality that we can obtain ${\bf C}$ from ${\bf A}$ by ${\bf C}=\DL({\bf A},\twobb)$ and ${\bf A}$ from ${\bf C}$ by ${\bf A}=\Pri({\bf C},\twobb)$. We now show that $({\bf A},\prec)$ and $({\bf C},\equiv)$ are dual in a similar way.

The dual of $({\bf A},\prec)$ is given by all `cuts' of $({\bf A},\prec)$ which, as we will show, are `maps' $({\bf A},\prec)\to\twobb$. A cut, in turn, is a roundification%
\footnote{A filter $F\subseteq {\bf A}$ is round if $a\in F \ \Rightarrow \ \exists b\in F\,.\, b\prec a$.} 
of a prime filter. 
For example, a clopen $[x,1/3]$ can only be in a roundification if there is also a smaller clopen not containing the endpoint 1/3. 
Therefore, in order to approximate 1/3 one needs to use intervals $[x,1/3]\cup[2/3,y]$. It follows that roundifications of prime filters cannot distinguish 1/3 and 2/3, which, therefore, are glued together in the dual.

Conversely, the dual of $({\bf C},\equiv)$ is again the cuts of ${\bf C}$, which in this case correspond to clopens.

As we will explain in the next section, the right setting for duality of relations is order-enriched, so it takes no extra effort to develop our theory for Priestley spaces and distributive lattices instead of Stone spaces and Boolean algebras. But the relevant topological concepts are already visible in the order-discrete examples. 

\section{Preliminaries on order-enriched and double categories}\label{sec:preliminaries}

This section collects the preliminaries needed for the paper. We do not aim for a comprehensive treatment of order-enriched or double categories, but rather explain why these notions arise naturally in the study of dualities for relations and introduce only the terminology helpful for this paper.

\subsection{Why order-enriched categories?}
\label{sec:why-order-enriched}

The main reason we work in the order-enriched setting is the following: in an order-enriched category, every (weakening) relation can be both \dfn{tabulated} as a span and \dfn{co-tabulated} as a cospan. In the set-enriched setting this fails: if one tabulates a relation as a span and then takes the pullback of the pushout, the resulting relation does not in general coincide with the original; similarly, if one sends a $\Set$-relation to a $\BA$-relation and back. 

In the order-enriched setting, using comma and cocomma objects in place of pullbacks and pushouts, tabulation and co-tabulation are mutually inverse in the appropriate sense, and the duality of distributive lattices and Priestley spaces extends to a duality of their categories of relations \cite{KMJ}.

To see how this works, let us denote by $\twobb^-$ the dualising functor given by homming into the two-element chain $\twobb$ (considered, depending on context, as a Boolean algebra, Stone space, Priestley space, distributive lattice). Let $R \subseteq X \times Y$ be a relation and let $p:R\to X$ and $q:R\to Y$ be the projections. Then the dual of $R$ is the relation $\twobb^R$ given for all $a\in \twobb^X$, $b\in\twobb^Y$ by

\begin{equation}\label{eq:dual-relation}
a \twobb^R b \ \Leftrightarrow \ \twobb^p(a)\subseteq\twobb^q(b) \ \Leftrightarrow \ R[a] \subseteq b.
\end{equation}

The first equivalence arises from dualising the tabulating span $X\xleftarrow{p} R\xrightarrow{q} Y$ into the cospan $\twobb^X\xrightarrow{\twobb^p} \twobb^R\xleftarrow{\twobb^q} \twobb^Y$ and taking the comma object, while the second equivalence is familiar from weakest precondition semantics \cite{DijkstraScholten90,BonsangueKok1994,Hasuo2014,Jacobs2014} and modal logic, reading $R[a] \subseteq b$ as `all $R$-successors of $a$ are in $b$'.

Note that even if $X,Y,R$ are discrete, the dual relation $\twobb^R$ is ordered by inclusion due to $\twobb^p(a)\subseteq\twobb^q(b)$. In other words, even if one was only interested in the duality between Boolean relations and Stone relations, the duality passes by distributive lattices and Priestley spaces. 

More generally, in our setting, a \dfn{relation} $R : X \looparrowright Y$ is a subobject of the product that is \dfn{weakening} in the sense that

\[ x' \le_X x \ , \ x \, R \, y\ , \ y \le_Y y' \quad\Longrightarrow\quad \ x' \, R \, y'. \] 

Equivalently, the characteristic function of the relation $X^\op \times Y \to \twobb$ is monotone. Note that weakening relations are closed under composition and that the identity relation on $X$ is given by $\le_X$.

Since the notion of subobject depends on the category in question we also speak of Priestley relations (or $\Pri$-relations) and distributive lattice relations (or $\DL$-relations). In particular, in addition to being weakening-closed,

\begin{itemize}
\item a $\Pri$-relation is a topologically closed subspace;
\item a $\DL$-relation $R:A\looparrowright A'$ satisfies
  \begin{gather*}
  0 \mathrel R 0 \quad\quad 1 \mathrel R 1 \\[0.7em]
  \frac{a \mathrel R b \quad\quad c\mathrel R d}{a \vee c \, \mathrel R \, b \vee d} \quad\quad \frac{a   \mathrel R b \quad\quad c\mathrel R d}{a \wedge c \, \mathrel R \, b \wedge d}
   \end{gather*}
\end{itemize}
In particular, a $\DL$-equivalence relation is a congruence relation. 

In \cite{KMJ}, we investigate the theory of weakening relations in a general order-enriched setting. In this paper, we focus on particular order-enriched categories and do not need the general theory. It therefore suffices for us to say the following.

An \dfn{order-enriched category} is a category where each hom-set carries a partial order preserved by composition. Functors between order-enriched categories are locally monotone, that is, they preserve the order on hom-sets. 

For our investigation, the central example of an order-enriched category is that of compact ordered spaces.

\begin{definition}\label{def:compact-ordered-space}
A \dfn{compact ordered space} \cite[I.3]{NachbinTopologyAndOrder} is a compact topological space $X$ equipped with a partial order that is closed as a subset of $X\times X$.
\end{definition}

\begin{remark}\label{rem:nachbin-separation}
Compact ordered spaces are Hausdorff, since they are normally ordered \cite[I.2]{NachbinTopologyAndOrder}, that is, they satisfy \dfn{Nachbin separation}
\begin{equation}\label{eq:nachbin-separation}
\begin{gathered}
D \text{ closed down-set},\;
U \text{ closed up-set},\;
D\cap U=\emptyset
\\[0.3em]
\Longrightarrow\quad
\exists\, d\text{ lower open},
\exists\, u\text{ upper open}\text{ such that } \;
D\subseteq d,\;
U\subseteq u,\;
d\cap u=\emptyset.
\end{gathered}
\end{equation}
Synonyms for compact ordered space include compact order-Hausdorff space \cite[VII.2.1]{JohnstoneStoneSpaces}, compact pospace \cite[9.1]{Goubault-LarrecqNonHausdorff}, and Nachbin space.
\end{remark}

The most important examples of order-enriched categories for us are:
\begin{itemize}
  \item $\KOHaus$: Compact ordered spaces with continuous monotone maps (Definition~\ref{def:compact-ordered-space}).
  \item $\Pri$: Priestley spaces with continuous monotone maps.%
  \footnote{A Priestley space $X$ is a compact ordered space satisfying Priestley separation: for every closed downset $D$ and disjoint closed upset $U$ there is a clopen downset $C$ with $D\subseteq C$ and $(X\setminus C)\supseteq U$.}. 
  Priestley separation is stronger than Nachbin separation.
  \item $\DL$: Distributive lattices with homomorphisms.
  \item $\Rel(\Pri)$: Priestley spaces with closed weakening relations.
  \item $\Rel(\DL)$: Distributive lattices with $\DL$-relations.
\end{itemize}

In all of these categories, the order between functional arrows is pointwise and the order between relational arrows is by inclusion. The identity relation on an object $(C,\le_C)$ is given by ${\le_C}$. It is co-tabulated by the cospan $(\id_C,\id_C)$ and tabulated by the comma object of $(\id_C,\id_C)$.

\begin{remark}[Dual properties of relations]\label{rem:dual-properties-relations}
A useful consequence of the order on relations is that it allows us to express how some important properties dualise: duality reverses the order between relations, so we have the following dual properties of relations $R: C\looparrowright C$:
\begin{itemize}
  \item \dfn{above identity} (${\le_C} \subseteq R$, reflexive) $\longleftrightarrow$ \dfn{below identity} ($R \subseteq {\le_C}$).
  \item \dfn{transitive} ($R\,;R \subseteq R$) $\longleftrightarrow$ \dfn{interpolative} ($R \subseteq R\,; R$).
  \item \dfn{monad} (idempotent above identity, closure operator) $\longleftrightarrow$ \dfn{comonad} (idempotent below identity, interior operator).
\end{itemize}
Hence a Priestley preorder (reflexive, transitive) corresponds dually to a proximity relation (interpolative below identity). Equivalently, preorders are idempotent relations above identity and proximity relations are idempotent relations below identity.
\end{remark}

The correspondences above depend only on the order-theoretic structure of the categories of relations. If one restricts to relations on order-discrete structures, one can also talk about symmetric relations:\footnote{A poset $X$ is discrete if and only if for all weakening relations $R:X\looparrowright X$ the converse of $R$ is weakening.}

\begin{remark}[Symmetry]\label{rem:ba-prox-equiv}
Recall that the dual $R$ of a closed relation $S$ on a Stone space is a $\DL$-relation on a Boolean algebra.
\begin{itemize}
  \item $S$ is \dfn{symmetric} iff $aRb \Leftrightarrow\neg b R\neg a$, 
\end{itemize}
see also (S6) in \cite[Lem.~4.6]{BBSV17}.
Thus a Boolean proximity dualises to an equivalence relation precisely when this self-duality holds. The Cantor space proximity in Section~\ref{sec:cantor} is of that kind.
\end{remark}

\subsection{A duality theorem for relations}

We start by observing that order-enriched Priestley duality

\[\begin{tikzcd}
	{\sf Pri} && {{\sf DL}^{\rm op}}
	\arrow["\twobb^-", curve={height=-12pt}, from=1-1, to=1-3]
	\arrow["\twobb^-", curve={height=-12pt}, from=1-3, to=1-1]
\end{tikzcd}\]

does (of course) reverse arrows, but does not reverse the order because composition (hence homming into $\twobb$) is order preserving.

We have seen above that Priestley duality when applied to a relation $R:X\looparrowright Y$ results in the dual relation $\twobb^R:\twobb^X\looparrowright \twobb^Y$ given by

$$a \twobb^R b \ \Leftrightarrow \ \twobb^p(a)\subseteq\twobb^q(b) \ \Leftrightarrow \ R[a] \subseteq b.$$

Our aim in this section is to formulate this result as an equivalence of order-enriched categories. To this end, we need to first introduce two functors mapping functions to relations. 

Given a concretely order-enriched category $\mathcal C$, for every arrow $f:C\to D$ there are relations
\[
  f_\diamond:C\looparrowright D,\quad (c,d)\mapsto D(fc,d),\qquad
  f^\diamond:D\looparrowright C,\quad (d,c)\mapsto D(d,fc).
\]

Here we follow category-theoretic conventions and identify relations $C\looparrowright D$ with maps 

$$C^o\times D\to \twobb,$$ 

order-reversing in the first argument and order-preserving in the second, so that we can identify $D(fc,d)$ with $f(c)\le_D d$ and $D(d,fc)$ with $d\le_D f(c)$. Sometimes $f_\diamond=\{(c,d)\in C\times D \mid f(c)\le d\}$ is called the hypergraph \cite{JungSunderhauf1996} while $f^\diamond=\{(c,d)\in C\times D \mid d\le_D f(c)\}$, the hypograph, appears for example  in Scott's theory of approximable mappings \cite{Scott1982}.
In the terminology of double categories we call $f_\diamond$ the \dfn{companion} of $f$ \cite{GP04,Grandis2019}, and $f^\diamond$ the \dfn{conjoint} of $f$ \cite{Shulman}.

As is well-known in the categorical theory of distributors (aka profunctors or bimodules), of which the order-enriched weakening relations are a special case, it follows from the definitions of $f_\diamond, f^\diamond$ that they are adjoint:

$$f_\diamond \dashv f^\diamond.$$ 

It is not hard to verify that $(-)_\diamond$ and $(-)^\diamond$ are functorial. In fact, $(-)_\diamond$ is covariant on arrows and contravariant on the order, whereas $(-)^\diamond$ is contravariant on arrows and covariant on the order. More formally,

$$(-)_\diamond: \Pri\to\Rel(\Pri)^\co
\quad\quad\quad\quad
(-)^\diamond: \DL^\op\to\Rel(\DL)$$

where the label $^\co$ refers to the category in which the order between the relations is reversed.

\begin{remark}\label{rem:companions-conjoints}
Abstractly, a companion (resp.\ conjoint) of an arrow in a double category is a proarrow equipped with unit and counit squares satisfying the triangle identities \cite{GP04,Shulman}.
A double category in which every arrow has both is an \dfn{equipment}, see eg \cite{Wood1982,CarboniKellyVerityWood1998,Shulman,Grandis2019,Koudenburg2019,Lambert2022,Nasu2025}.
Extending the current work to this axiomatic framework is ongoing and future work.
\end{remark}
We can now state Theorem 5.9 of \cite{KMJ}. 

\begin{theorem}[Duality for Relations]\label{thm:duality-for-relations}
The dual equivalence between $\Pri$ and $\DL$ extends to a dual equivalence between the order-enriched categories of $\Pri$-relations and $\DL$-relations
\[\begin{tikzcd}
	{\sf \Rel(Pri)^{\co}} && {\sf \Rel(DL)^{}} \\
	\\
	{\sf Pri} && {{\sf DL}^{\rm op}}
	\arrow["\twobb^-", curve={height=-12pt}, from=1-1, to=1-3]
	\arrow["\twobb^-", curve={height=-12pt}, from=1-3, to=1-1]
	\arrow["{(-)_\diamond}", from=3-1, to=1-1]
	\arrow["\twobb^-", curve={height=-12pt}, from=3-1, to=3-3]
	\arrow["{(-)^\diamond}"', from=3-3, to=1-3]
	\arrow["\twobb^-", curve={height=-12pt}, from=3-3, to=3-1]
\end{tikzcd}\]
\end{theorem}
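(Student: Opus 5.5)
We prove the statement by building the two functors on relations explicitly, checking functoriality and local order-reversal, and then showing that the unit and counit of Priestley duality are natural with respect to relations.

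\textbf{Functor on relations.} For a closed weakening relation $R:X\looparrowright Y$ we set $a\,\twobb^R\,b$ iff $R[a]\subseteq b$, for clopen upsets $a\in\twobb^X$ and $b\in\twobb^Y$, as in \eqref{eq:dual-relation}. Conversely, for a $\DL$-relation $S:A\looparrowright B$ we set, for prime filters $x\in\twobb^A$ and $y\in\twobb^B$,
\[ x\,\twobb^S\,y \iff \forall a,b\ (a\,S\,b \text{ and } a\in x \Rightarrow b\in y). \]
This matches the weakest-precondition reading.

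\textbf{Well-definedness.} We check that each dual is a relation in the right category.
\begin{itemize}
\item $\twobb^R$ is weakening, and it is closed under finite joins and meets. For example, $R[a\cup c]=R[a]\cup R[c]$ gives the join rule. The meet rule $R[a\cap c]\subseteq R[a]\cap R[c]$ also works, since it only needs an inclusion in one direction.
\item $\twobb^S$ is weakening, and it is closed as an intersection of the clopen conditions ``$a\notin x$ or $b\in y$''.
\item Both constructions reverse inclusion: if $R\subseteq R'$ then $\twobb^{R'}\subseteq\twobb^R$. This is why $\Rel(\Pri)^\co$ appears in the statement.
\end{itemize}

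\textbf{Functoriality.}
\begin{itemize}
\item Identities: $\twobb^{\le_X}$ is inclusion of clopen upsets, which is the identity $\le$ on $\twobb^X$.
\item Composition on the space side: $(R;R')[a]=R'[R[a]]$. The inclusion $\twobb^R;\twobb^{R'}\subseteq\twobb^{R;R'}$ is immediate. For the converse we need to interpolate: given $R'[R[a]]\subseteq c$, we must find a clopen upset $b$ with $R[a]\subseteq b$ and $R'[b]\subseteq c$. Here $R[a]$ is a closed upset because $R$ is closed and $X$ is compact. The set $\{y \mid R'[\{y\}]\subseteq c\}$ is an open upset containing it, by closedness of $R'$ and compactness. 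Priestley separation together with compactness then yields the clopen upset $b$.
\item Composition on the lattice side: the analogous interpolation is a prime-filter argument. Given $x\,\twobb^{S;S'}\,z$, we need a prime filter $y$ with $x\,\twobb^S\,y$ and $y\,\twobb^{S'}\,z$. Take $F=\{b \mid \exists a\in x,\ a\,S\,b\}$, which is a filter because $S$ is a $\DL$-relation, and $I=\{b \mid \exists c\notin z,\ b\,S'\,c\}$, which is an ideal. They are disjoint by hypothesis, so the prime filter theorem separates them.
\end{itemize}

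\textbf{Compatibility with functions.} We verify that $\twobb^{f_\diamond}=(\twobb^f)^\diamond$ and, symmetrically, that $\twobb^{g^\diamond}=(\twobb^g)_\diamond$. For example, $a\,\twobb^{f_\diamond}\,b$ iff ${\uparrow} f[a]\subseteq b$ iff $a\subseteq f^{-1}(b)$. This shows the square commutes.

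\textbf{Equivalence.} The Priestley isomorphisms $\eta_X:X\cong\twobb^{\twobb^X}$ and $\epsilon_A:A\cong\twobb^{\twobb^A}$ are used as functional arrows whose companions are inverse relations. It remains to check naturality with respect to relations, namely that $\twobb^{\twobb^R}$ corresponds to $R$ under $\eta$. The claim is that $x\,R\,y$ iff for all clopen upsets $a,b$ with $R[a]\subseteq b$, if $x\in a$ then $y\in b$. The forward direction is trivial. The converse is the crux: if $(x,y)\notin R$, then by closedness of $R$ and the weakening property we must find a clopen upset $a\ni x$ with $y\notin R[a]$, and take $b$ a clopen upset containing $R[a]$ but not $y$. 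This again uses Priestley separation applied to the closed upset $R[{\uparrow}x]$, followed by compactness to shrink ${\uparrow}x$ to a clopen upset. The lattice side is similar: $a\,S\,b$ iff every pair of prime filters in $\twobb^S$ respects $(a,b)$. Its converse needs the prime filter theorem applied to $F={\uparrow}\{b' \mid a\,S\,b'\}$ versus ${\downarrow}b$.

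\textbf{Main obstacle.} I expect the hardest part to be the reconstruction of $R$ from its dual together with the interpolation in the composition step. Both require a careful topological compactness argument that combines closedness of $R$ with Priestley separation, and both are exactly where order-enrichment and the weakening property are essential.
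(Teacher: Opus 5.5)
Your proposal is correct, but it takes a different route from the paper. The paper gives no argument of its own: it imports Theorem~5.9 of \cite{KMJ}, whose approach is only sketched in Section~\ref{sec:why-order-enriched}, and that approach is structural.

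\begin{itemize}
\item \emph{The cited route.} A relation $R:X\looparrowright Y$ is tabulated by its comma span $X\xleftarrow{p}R\xrightarrow{q}Y$. This span is dualised to a cospan $\twobb^X\to\twobb^R\leftarrow\twobb^Y$, and $\twobb^R$ is read off as its comma object. The theorem then rests on the general fact that, in an order-enriched (concretely order-regular) setting, tabulation by comma objects and co-tabulation by cocomma objects are mutually inverse. That fact is what makes the extension to relations work.
\item \emph{Your route.} You fix the concrete formulas ($R[a]\subseteq b$, respectively $aSb \wedge a\in x\Rightarrow b\in y$) and verify functoriality, order reversal, compatibility with $(-)_\diamond,(-)^\diamond$, and naturality of $\eta,\epsilon$ by hand. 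The work the abstract theory hides is thereby concentrated in two concrete lemmas: interpolating a clopen up-set (resp.\ a prime filter) in the composition step, and reconstructing $R$ (resp.\ $S$) from its double dual.
\end{itemize}

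What each buys: your argument is self-contained and shows exactly where compactness, Priestley separation, the prime filter theorem and the weakening property are used. The cited route is the one that generalises to other order-enriched dualities and to the double-categorical Theorem~\ref{thm:duality-double-rel}, which the paper relies on later.

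One step of your sketch needs completing. On the lattice side, separating $\{b'\mid aSb'\}$ from ${\downarrow}b$ only produces a prime filter $y$ with $b\notin y$; note that $\{b'\mid aSb'\}$ is already a filter by weakening, so the ${\uparrow}$ is redundant. You still need a prime filter $x\ni a$ with $x\,\twobb^S\,y$. Get it by a second application of the prime filter theorem, separating ${\uparrow}a$ from $\{a'\mid \exists b'\notin y.\ a'Sb'\}$. This set is an ideal because $y$ is prime. It is disjoint from ${\uparrow}a$ because $a\le a'$ and $a'Sb'$ give $aSb'$, hence $b'\in y$. This is the same two-step pattern you already use on the space side and in the composition argument, so the fix is routine.
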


Note that this duality does not reverse the direction of the relations, only the order between relations. This aligns well with viewing relations as additional structure best represented by objects, rather than as morphisms. As we will see, in double categories, relational morphisms are objects (in the category of arrows).

\subsection{Why double categories?}
\label{sec:why-double-categories}

The main reason to turn to double categories is that even if one is interested only in relational morphisms, it often pays to also keep the functional morphisms around in the same structure. Double categories are structures in which relational morphisms and functional morphisms can coexist as so-called `vertical morphisms' and `horizontal morphisms'.

In \cite{KMJ}, the motivation to generalise Theorem~\ref{thm:duality-for-relations} from order-enriched categories to double categories was to generalise the extension theorem from equivalences to adjunctions. In Section~\ref{sec:dualities-relations} we will see another application of double categories to relations.

Before we can state the double category-theoretic duality theorem for relations we need the following definitions. Recall from Section~\ref{sec:why-order-enriched} that for the order-enriched categories $\mathcal C$ considered in this paper, we can define a notion of relation, now called \dfn{$\mathcal C$-relation}. We show how to assemble $\mathcal C$-relations and $\mathcal C$-arrows into the structure of a double category. For background on order-enriched categories we refer to Kelly \cite{KellyEnriched}; for double categories to Grandis \cite{Grandis2019}.

\begin{definition}\label{def:double-category}
A (strict) \dfn{double category} $\mathbb C$ is a category internal to the category of categories
\[\begin{tikzcd}
\mathbb C_1 \arrow[r, shift left, "\mathrm{dom}"] \arrow[r, shift right, "\mathrm{cod}" swap] & \mathbb C_0
\end{tikzcd}\]
\end{definition}
\begin{remark}
  We will avoid the terminology of `horizontal' and `vertical' arrows as conventions differ between authors \cite{GrandisPare,Shulman}. While we focus on applications of double categories to categories with relations arising from concrete order-enriched categories, it is sometimes convenient for us to use a more abstract language that can be used for double categories in general. We settled on the compromise recorded in Table~\ref{tab:doublecat-conventions}. Identity proarrows are denoted by $\Id_C:C\looparrowright C$ and identity arrows by $\id_C:C\to C$. For arrows $f,g:C\to D$, we write $f\le g$ if $(f,g):\Id_C\to \Id_D$ is a square. For proarrows $R,R': C \looparrowright D$, we write $R\le R'$ if $(\id_C,\id_D):R\looparrowright R'$ is a square.
\end{remark}

\begin{table}[h!]
  \centering
  \renewcommand{\arraystretch}{1.4}
  \begin{tabular}{|l|l|l|}
    \hline
    & \textbf{Concrete Category} & \textbf{Abstract Category} \\
    \hline
    Objects of $\mathbb{C}_0$ & object & object \\
    \hline
    Arrows of $\mathbb{C}_0$ & (functional) morphism & arrow \\
    \hline
    Objects of $\mathbb{C}_1$ & relation, relational morphism & proarrow \\
    \hline
    Arrows of $\mathbb{C}_1$ & $c\,R\,d\Rightarrow f(c)\,S\,g(d)$ & square  \\
    \hline
  \end{tabular}
  \caption{Conventions regarding double categories.}
  \label{tab:doublecat-conventions}
\end{table}

Since we will apply the construction of a double category of relations to a number of different categories, it is convenient to have a general definition. The notion of a concretely order-enriched category which we use here has been discussed in detail in \cite{KMJ}. We do not repeat it here as we hope that it will be clear how to construct a double category of relations in each instance we need it (mainly for $\mathcal C=\Pri$ or $\DL$).

\begin{definition}\label{def:drel}
For any concretely order-enriched category $\mathcal C$, its \dfn{double category of relations} $\dRel(\mathcal C)$ is the double category given as follows:
\begin{itemize}
\item $\mathbb C_0$ is the category $\mathcal C$.
\item $\mathbb C_1$ is the category that has as objects (proarrows) $\mathcal C$-relations $R:C\looparrowright D$ and as arrows (squares) pairs of maps $(f:C\to C',g:D\to D')$ in $\mathcal C$
\[
\begin{tikzcd}[column sep=1.2em, row sep=1.2em]
  C \ar[r, "f"] \ar[d, "R"'] & C' \ar[d, "R'"] \\
  D \ar[r, "g"'] & D'
\end{tikzcd}
\]
such that $cRd \ \Rightarrow \ f(c)R'g(d)$.
\item A square $(f,g):R\to R'$ with $R:C\looparrowright D$ and $R':C'\looparrowright D'$ is mapped by $\dom:\mathbb C_1\to\mathbb C_0$ to $f:C\to C'$ and by $\cod:\mathbb C_1\to\mathbb C_0$ to $g:D\to D'$.
\end{itemize}
\end{definition}

\begin{definition}\label{def:arrow-proarrow-proj}
A double category of relations $\mathbb C$ has an \dfn{arrow projection}, obtained by restricting to squares whose proarrows are identities, and a \dfn{proarrow projection}, obtained by restricting to squares whose arrows are identities.
\end{definition}

In our examples, arrow and proarrow projections are order-enriched. A flat double category has pre-order enriched projections and an ultraflat double category has poset-enriched projections:

\begin{definition}\label{def:flat-ultraflat}
A double category is \dfn{flat} if every double cell is determined by its boundary \cite[3.2.1]{Grandis2019}. It is \dfn{ultraflat} if it is flat and invertible squares with vertical or horizontal identities are identity squares \cite[3.2.4]{Grandis2019}. 
\end{definition}

\begin{example}
$\dRel(\Pri)$ and $\dRel(\DL)$ have arrow projections $\Pri$ and $\DL$, respectively, and proarrow projections $\Rel(\Pri)$ and $\Rel(\DL)$, respectively. 
More generally, $\dRel(\mathcal C)$ is strict and ultraflat: relations compose on the nose, squares are a property of the boundary, and the projections are poset-enriched.
\end{example}

\begin{remark}
This point of view also allows us to view order-enriched categories such as $\Pri$ and $\Rel(\Pri)$ as particular cases of double categories.
\end{remark}

\begin{definition}\label{def:opposite-double-cat}
The \dfn{opposite} $\mathbb C^\op$ of a double category $\mathbb C$ is obtained by reversing arrows and squares, while leaving proarrows unchanged.
\end{definition}

\begin{remark}
Given a double category $\mathbb C_1\rightrightarrows\mathbb C_0$, relational morphisms are objects of $\mathbb C_1$ and functional morphisms are arrows of $\mathbb C_0$. Reversal of arrows in $\mathbb C_0$ forces reversal of squares in $\mathbb C_1$ because $\dom,\cod:\mathbb C_1\to\mathbb C_0$ are functors. In particular, comparison cells $(\id_C,\id_D):R\to S$ reverse, so the order on proarrows is reversed, while comparison cells $(f,g):\Id_C\to\Id_D$ do not reverse.
\end{remark}

The following duality theorem for the double category of distributive lattice relations is an instance of \cite[Thm 7.11]{KMJ} and generalises Theorem~\ref{thm:duality-for-relations}.

\begin{theorem} \label{thm:duality-double-rel} The dual equivalence between $\Pri$ and $\DL$ extends to a dual equivalence of double categories between the double category of $\Pri$-relations and the double category of $\DL$-relations
\[\begin{tikzcd}
	{\sf \dRel(Pri)^{}} && {\sf \dRel(DL)^{\op}}
	\arrow["\twobb^-", curve={height=-12pt}, from=1-1, to=1-3]
	\arrow["\twobb^-", curve={height=-12pt}, from=1-3, to=1-1]
\end{tikzcd}\]
where $^\op$ is the opposite double category of Definition~\ref{def:opposite-double-cat}.
\end{theorem}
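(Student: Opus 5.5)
We would prove Theorem~\ref{thm:duality-double-rel} by assembling it from its two projections, both of which are already available: on arrows it is ordinary order-enriched Priestley duality $\Pri\simeq\DL^\op$, and on proarrows it is Theorem~\ref{thm:duality-for-relations}. Since $\dRel(\Pri)$ and $\dRel(\DL)$ are strict and ultraflat, a double functor between them is determined by its action on objects, arrows and proarrows. We then only need to check that squares, which are a property of their boundary, are sent to squares. The same holds for the natural isomorphisms: their components are arrows, and their naturality squares are again mere properties.

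\emph{Definition of the double functors.} On $\mathbb C_0$ we take $\twobb^-$ as in Priestley duality. On proarrows we take $R\mapsto\twobb^R$ as in \eqref{eq:dual-relation}, i.e.\ $a\twobb^R b$ iff $R[a]\subseteq b$. Symmetrically, for a $\DL$-relation $S$ we take the relation on prime filters obtained by dualising the co-tabulating cospan and taking the comma object, as in Theorem~\ref{thm:duality-for-relations}. Functoriality of $\mathbb C_1\to\mathbb C_0$ via $\dom,\cod$ and preservation of proarrow composition and identities ($\Id_X={\le_X}\mapsto{\subseteq}={\le_{\twobb^X}}$) are inherited from Theorem~\ref{thm:duality-for-relations}. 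There composition of relations is preserved on the nose up to the identification given by the unit isomorphisms.

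\emph{Squares.} This is the key new step. Let $(f,g):R\to R'$ be a square in $\dRel(\Pri)$, meaning $xRy\Rightarrow f(x)R'g(y)$, with $R:X\looparrowright Y$ and $R':X'\looparrowright Y'$. We must show that $(\twobb^f,\twobb^g)$ is a square $\twobb^{R'}\to\twobb^{R}$ in $\dRel(\DL)$, reversed as required by Definition~\ref{def:opposite-double-cat}. Concretely, if $R'[a']\subseteq b'$ then $R[f^{-1}a']\subseteq g^{-1}b'$. This is immediate: if $x\in f^{-1}a'$ and $xRy$, then $f(x)R'g(y)$, so $g(y)\in R'[a']\subseteq b'$. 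Conversely, we show that for $\DL$-homomorphisms $h,k$ forming a square $S\to S'$ the dual maps form a square between the dual relations. This can be checked directly on prime filters, or more cleanly by expressing a square $(f,g):R\to R'$ as the inclusion $R\le f_\diamond\then R'\then g^\diamond$. This inclusion is a statement purely in the order-enriched proarrow category, so the result then follows from Theorem~\ref{thm:duality-for-relations}, which preserves composition and (reversing) order. We then need that the duality sends companions $f_\diamond$ to conjoints $(\twobb^f)^\diamond$ and conjoints $f^\diamond$ to companions $(\twobb^f)_\diamond$, which is exactly the commutativity of the square in Theorem~\ref{thm:duality-for-relations} together with its mirror version. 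The order reversal on proarrows then matches the reversal of squares in $\dRel(\DL)^\op$.

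\emph{Equivalence.} The unit and counit of Priestley duality, $X\cong\twobb^{\twobb^X}$ and $A\cong\twobb^{\twobb^A}$, serve as the arrow components. We must check that for each $R$ the pair of components forms an invertible square $R\to\twobb^{\twobb^R}$, i.e.\ that the double dual relation is $R$ transported along these isomorphisms. This is the content of the unit of Theorem~\ref{thm:duality-for-relations}. Naturality with respect to squares is automatic by flatness.

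The main obstacle is the step translating squares into inclusions $R\le f_\diamond\then R'\then g^\diamond$ and verifying that duality exchanges companions and conjoints compatibly with composition. The equivalence between the square condition and this inclusion relies on the weakening property and on $f_\diamond\dashv f^\diamond$. If this becomes delicate, we would fall back on the direct elementwise check via prime filters in the $\DL\to\Pri$ direction, using the prime filter theorem to separate.
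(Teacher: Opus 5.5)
Your proof is correct, but it takes a different route from the paper. The paper does not prove this statement in the text; it records it as an instance of the general double-categorical extension theorem \cite[Thm~7.11]{KMJ}. You instead bootstrap from the order-enriched Theorem~\ref{thm:duality-for-relations}. Because both double categories are ultraflat, the only new datum is that squares go to squares. Your reformulation of a square $(f,g)\colon R\to R'$ as $R\subseteq f_\diamond\then R'\then g^\diamond$ is valid by weakening, and it transports the square condition through the order-reversing, composition-preserving proarrow duality. For this you need $\twobb^{f_\diamond}=(\twobb^f)^\diamond$ and $\twobb^{f^\diamond}=(\twobb^f)_\diamond$. The second follows from the first: the duality turns $f_\diamond\dashv f^\diamond$ into $\twobb^{f^\diamond}\dashv\twobb^{f_\diamond}$, and left adjoints are unique. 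The resulting inclusion $(\twobb^f)^\diamond\then\twobb^{R'}\then(\twobb^g)_\diamond\subseteq\twobb^R$ is, again by weakening, exactly the square condition for $(\twobb^f,\twobb^g)\colon\twobb^{R'}\to\twobb^R$.

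What each route buys: the citation gives generality, covering any concretely order-regular duality and adjunctions rather than only equivalences. Your route is self-contained modulo Theorem~\ref{thm:duality-for-relations}. It also shows that, in the flat setting, the double-categorical "generalisation" costs nothing beyond the proarrow duality plus the exchange of companions and conjoints.

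Two calibrations. First, the step you single out as the main obstacle is trivial in both directions. For example, for $\DL$-homomorphisms: if $S'[p']\subseteq q'$, $a\in h^{-1}(p')$ and $a\,S\,b$, then $h(a)\,S'\,k(b)$ gives $k(b)\in q'$. No prime filter theorem is needed. Priestley separation and the prime filter theorem genuinely enter in two other places: strict preservation of composition, and invertibility of the squares $(\eta_X,\eta_Y)\colon R\to\twobb^{\twobb^R}$, i.e.\ $x\,R\,y\iff\eta_X(x)\,\twobb^{\twobb^R}\,\eta_Y(y)$.

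Delegating this invertibility to Theorem~\ref{thm:duality-for-relations} is legitimate only if you read its unit as having components $(\eta_X)_\diamond$; naturality at $R$ then gives exactly this equivalence. The statement of that theorem does not spell this out. Otherwise you should supply the direct argument. If $x\,R\,y$ fails, separate the closed up-set $R[x]$ from ${\downarrow}y$ by a clopen up-set $b$. Then find a clopen up-set $a\ni x$ inside the open up-set $\{x'\mid R[x']\subseteq b\}$. Dually, on the $\DL$ side, use the prime filter theorem.

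Second, a minor slip: for a $\DL$-relation $S$ one dualises the tabulating span $A\leftarrow S\to B$, not a co-tabulating cospan, and then takes the comma object. This gives $p\,\twobb^S\,q\iff S[p]\subseteq q$.
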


\begin{remark}\label{rem:op-arrows-proarrows}
To see in more detail how the dualising functor $\twobb^-$ interacts with the $^\op$ of the theorem, consider the following square and its dual:
\[
  \begin{tikzcd}[column sep=2.2em, row sep=2em]
    A \ar[r, "f"] \ar[d, "R"'] & A' \ar[d, "S"] &
    \makebox[0em]{} &
    \twobb^A \ar[d, "\twobb^R"'] & \twobb^{A'} \ar[l, "\twobb^f"', above] \ar[d, "\twobb^S"] \\
    B \ar[r, "g"'] & B' &
    \makebox[0em]{} &
    \twobb^B & \twobb^{B'} \ar[l, "\twobb^g"]
  \end{tikzcd}
  \]
Taking $R,S$ to be identities, we see that $f\le g \ \Leftrightarrow \ \twobb^f \le \twobb^g$ and taking $f,g$ to be identities, we see that $R\le S \ \Leftrightarrow \ \twobb^S \le \twobb^R$. 
This aligns with what we said earlier: duality reverses the order between relations but not the direction of relations, and reverses the direction of functions but not the order between functions. 
\end{remark}

\begin{remark}
  It is interesting to see how condition \eqref{eq:dual-relation} of Section~\ref{sec:why-order-enriched} arises from the more abstract double categorical point of view. Start with the square in $\dRel(\Pri)$ on the left-hand side below, where $a,b$ are clopen subsets of $X,Y$, respectively.
  \[
  \begin{tikzcd}[column sep=2.2em, row sep=2em]
    X \ar[r, "a"] \ar[d, "R"'] & \twobb \ar[d, "\Id_\twobb"] &
    \makebox[0em]{} &
    \twobb^X \ar[d, "\twobb^R"'] & \onebb \ar[l, "a"', above] \ar[d, "\Id_\onebb"] \\
    Y \ar[r, "b"'] & \twobb &
    \makebox[0em]{} &
    \twobb^Y & \onebb \ar[l, "b"]
  \end{tikzcd}
  \]
  The squares being dual means that the two square conditions must be equivalent, that is, 
  $$
  xRy \Rightarrow a(x)\le b(y) \ \ \Longleftrightarrow \ \ a \twobb^R b
  $$
  which is equivalent to $R[a] \subseteq b$ as in \eqref{eq:dual-relation}.
  \end{remark}

\section{Dualities from double categories of relations}\label{sec:dualities-relations}

The aim of this section is to show that the double categorical duality of relations can be used to unify and extend a number of classical results from the theory of subordination algebras.

We restrict Theorem~\ref{thm:duality-double-rel} to the subcategory of $\dRel(\DL)$ where the objects are certain endo-relations (proarrows) $R:A\looparrowright A$ and the morphisms are certain squares, namely arrows $f:A\to A'$ such that $aRb \ \Rightarrow \ f(a)R'f(b)$. Such $\DL$-relations are often called \dfn{subordinations} and a distributive lattice $A$ with a subordination is called a \dfn{subordination algebra}.

\begin{definition}\label{def:subordination}
A \dfn{subordination} $\prec$ on a distributive lattice (or a Boolean algebra) $A$ is a $\DL$-relation $A\looparrowright A$. The (order-enriched) category $\SubDL$ of \dfn{subordination algebras} is the subcategory of the double category of $\DL$-relations with subordination algebras $(A,\prec)$ as objects and arrows $f:A\to A'$ with $a\prec b\ \Rightarrow f(a) \prec' f(b)$. The category of \dfn{$\BA$-based subordinations} $\SubBA$ further restricts to Boolean algebras $A$.
\end{definition}

We now go through a list of corollaries to the duality theorem for $\DL$-relations. 

\begin{corollary}\label{cor:subdl-pri-rel}
The category of Priestley spaces with closed weakening relations is dually equivalent to $\SubDL$.
\end{corollary}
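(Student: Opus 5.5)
The plan is to obtain the corollary by restricting the double-categorical duality of Theorem~\ref{thm:duality-double-rel} to a suitable subcategory of the category of squares on each side. First I fix the intended space-side category. Its objects are pairs $(X,R)$ with $X$ a Priestley space and $R:X\looparrowright X$ a closed weakening relation. Its arrows $(X,R)\to(X',R')$ are continuous monotone maps $f:X\to X'$ such that $(f,f):R\to R'$ is a square, i.e.\ $xRy\Rightarrow f(x)R'f(y)$. I then observe that both this category and $\SubDL$ arise by one uniform construction from a double category $\mathbb C$. The construction takes the subcategory of $\mathbb C_1$ whose objects are endo-proarrows ($\dom R=\cod R$) and whose arrows are the squares $(f,f)$ with equal domain and codomain arrow. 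Applied to $\dRel(\Pri)$ this gives the space-side category, and applied to $\dRel(\DL)$ it gives $\SubDL$ by Definition~\ref{def:subordination}. The order enrichment is the pointwise order on arrows in both cases, and it is inherited from $\mathbb C_0$.

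Next I show that the dualising double functor $\twobb^-:\dRel(\Pri)\to\dRel(\DL)^\op$ restricts to these subcategories. A double functor commutes with $\dom$ and $\cod$. It therefore sends an endo-proarrow $R:X\looparrowright X$ to the endo-proarrow $\twobb^R:\twobb^X\looparrowright\twobb^X$. It sends a square $(f,f)$ to a square whose two boundary arrows are both $\twobb^f$, read in the opposite direction as in Remark~\ref{rem:op-arrows-proarrows}. The same holds for the functor in the other direction. By Remark~\ref{rem:op-arrows-proarrows}, the order between functional arrows is preserved and not reversed, so the restricted functors are locally monotone.

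It remains to check that the restricted functors still form a dual equivalence. I would argue via full faithfulness plus essential surjectivity. Since the double categories are ultraflat, a square is just a property of its boundary. Hence full faithfulness of the double equivalence on squares means that $(f,g):R\to R'$ is a square iff $(\twobb^f,\twobb^g)$ is a square between the duals. Specialising to $f=g$ gives full faithfulness of the restriction. Faithfulness and order-reflection on arrows come from ordinary order-enriched Priestley duality on $\mathbb C_0$.

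For essential surjectivity, let $(A,\prec)$ be a subordination algebra. I take $X=\twobb^A$ with $R=\twobb^\prec$, which is a closed weakening relation by Theorem~\ref{thm:duality-double-rel}. I then need $(A,\prec)\cong(\twobb^{\twobb^A},\twobb^{\twobb^\prec})$ in $\SubDL$. This is the step I expect to be the main obstacle: the unit of the double equivalence must have, at each proarrow $R:C\looparrowright D$, a component that is an invertible square with boundary $(\eta_C,\eta_D)$. For an endo-proarrow this square is $(\eta_A,\eta_A)$, which is exactly an isomorphism in the restricted category. I would first try to read this off from the statement that Theorem~\ref{thm:duality-double-rel} is an equivalence of double categories, with unit and counit double natural transformations. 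Failing that, I would verify it directly: using \eqref{eq:dual-relation}, one checks that $a\prec b$ iff $\eta_A(a)\,\twobb^{\twobb^\prec}\,\eta_A(b)$. This is the content of the tabulation/co-tabulation correspondence in \cite{KMJ}.
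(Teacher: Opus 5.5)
Your proposal is correct and is essentially the paper's argument: the paper obtains this corollary by restricting the double duality of Theorem~\ref{thm:duality-double-rel} to endo-proarrows and to squares of the form $(f,f)$, and on the algebra side this restriction is exactly $\SubDL$. You supply the details the paper leaves implicit, and your observation that the unit and counit components at an endo-proarrow are the invertible squares $(\eta_C,\eta_C)$ already shows that the entire equivalence restricts.
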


The next corollary is a variation of \cite[Thm 9]{Celani-2001} and \cite[Thm 3]{DV-2006}, see \cite[Remark 2.23]{BBSV17}.

\begin{corollary}\label{cor:subba-stone}
The category $\SubBA$ is dually equivalent to the category of Stone spaces with closed relations.
\end{corollary}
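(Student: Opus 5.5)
The statement is Corollary~\ref{cor:subba-stone}. I will obtain it by restricting the duality of Corollary~\ref{cor:subdl-pri-rel}, hence ultimately Theorem~\ref{thm:duality-double-rel}, to the order-discrete objects on both sides. A restriction of an equivalence to full subcategories is again an equivalence, provided the two functors map the chosen subcategories into each other. So the plan is: (1) fix the two subcategories, (2) check that objects correspond, and (3) check that morphisms correspond.

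\emph{Step 1: the subcategories.} On the algebra side I take $\SubBA$, the full subcategory of $\SubDL$ on pairs $(A,\prec)$ with $A$ Boolean. On the space side I take the category whose objects are Stone spaces with a closed relation $S$, and whose morphisms are the continuous maps $f$ satisfying $xSy \Rightarrow f(x)\,S'\,f(y)$. A Stone space is a Priestley space with discrete order. In that case every relation is weakening, and the identity relation is equality. So this category is the full subcategory of the space side of Corollary~\ref{cor:subdl-pri-rel} on order-discrete Priestley spaces. Note that the morphisms on the space side are functional arrows preserving $S$, exactly as in $\SubDL$. The phrase ``with closed relations'' should be read this way, matching \cite{BBSV17}.

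\emph{Step 2: objects correspond.} I use the standard fact from Priestley duality: $\twobb^A$ has discrete order iff $A$ is Boolean, and $\twobb^X$ is Boolean iff the order of $X$ is discrete. The reason is that prime filters of a Boolean algebra are maximal, hence incomparable. Conversely, if the order on $X$ is discrete, the clopen downsets are all the clopens, which are closed under complement. So the object part of the equivalence restricts.

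\emph{Step 3: morphisms correspond.} A square in $\dRel$ with both proarrows equal to the endo-relations is exactly a relation-preserving arrow. Theorem~\ref{thm:duality-double-rel} sends such squares to squares of the same shape, and this already gives the equivalence of Corollary~\ref{cor:subdl-pri-rel} on morphisms. Fullness and faithfulness then restrict to the full subcategories. Essential surjectivity restricts too, because the unit and counit isomorphisms are isomorphisms of the ambient categories between objects that already lie in the subcategories.

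\emph{Main obstacle.} The only real point is the bookkeeping in Step 1. One must check that the dual of a closed relation on a Stone space is a $\DL$-relation on a Boolean algebra with no extra condition, and conversely. One must also check that the order-enrichment (here $\twobb^S$ ordered by inclusion) causes no mismatch. This holds because Theorem~\ref{thm:duality-double-rel} imposes no further conditions on relations; the symmetry condition of Remark~\ref{rem:ba-prox-equiv} is not needed here.
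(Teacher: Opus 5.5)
Your proposal is correct and follows the paper's route: the paper obtains this corollary by restricting the duality of Corollary~\ref{cor:subdl-pri-rel} (itself a restriction of Theorem~\ref{thm:duality-double-rel}) to Boolean algebras on one side and order-discrete Priestley spaces, i.e.\ Stone spaces, on the other. Your Steps 1--3 spell out the bookkeeping the paper leaves implicit, namely that $\twobb^-$ maps these two full subcategories into each other, so the equivalence restricts.
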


The next corollaries exploit Remark~\ref{rem:dual-properties-relations}: for example, \emph{reflexive} is dual to \emph{below identity},\footnote{Recall that the identity relation on a Boolean algebra or a distributive lattice is their natural order. In particular, the dual of the identity relation (the diagonal) on a Stone space is the natural order on the dual Boolean algebra.} \emph{transitive} is dual to \emph{interpolative}.
All of these results are restrictions of the corresponding result on Priestley spaces and  $\DL$-based subordinations, which we do not state explicitly.

\begin{corollary}\label{cor:reflexive-below-id}
The category of Stone spaces with reflexive closed relations is dually equivalent to the category of $\BA$-based subordinations below identity.
\end{corollary}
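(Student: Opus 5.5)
We restrict the dual equivalence of Corollary~\ref{cor:subba-stone}, which is itself a restriction of Theorem~\ref{thm:duality-double-rel}, to full subcategories on both sides. Objects on the space side are pairs $(X,S)$ with $X$ a Stone space and $S:X\looparrowright X$ a closed relation. Morphisms are continuous maps $f$ with $xSy\Rightarrow f(x)S'f(y)$, i.e.\ squares $(f,f):S\to S'$. On the algebra side, objects are pairs $(A,\prec)$ and morphisms are as in Definition~\ref{def:subordination}. Since both subcategories are full, it suffices to show that the dualising functor $\twobb^-$ sends reflexive relations to relations below identity, and conversely. The inverse functor must then restrict as well, and the equivalence restricts. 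Full faithfulness and essential surjectivity are inherited, provided the object classes are closed under isomorphism. This holds because an isomorphism of subordination algebras (or of Stone spaces with relations) transports the inequality $\prec\subseteq\le$ (respectively $\Id\subseteq S$).

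The key step is the object-level correspondence. In a Stone space the order is discrete, so $\le_X$ is the diagonal $\Id_X$, and reflexivity of $S$ means $\Id_X\le S$ in $\Rel(\Pri)$. By Remark~\ref{rem:op-arrows-proarrows}, duality reverses the order between proarrows, so $\Id_X\le S$ iff $\twobb^S\le\twobb^{\Id_X}$. Because $\twobb^-$ is a functor of double categories, it preserves identity proarrows. Hence $\twobb^{\Id_X}=\Id_{\twobb^X}$, which is the natural order $\le_A$ on the Boolean algebra $A=\twobb^X$ of clopens. So $S$ is reflexive iff $\twobb^S$ is below identity. The converse direction runs through the inverse functor, or equivalently through the fact that the order-reversal is an iff on both sides. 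This is exactly the first item of Remark~\ref{rem:dual-properties-relations}.

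As a sanity check, I would also verify this concretely via \eqref{eq:dual-relation}. If $S$ is reflexive and $S[a]\subseteq b$, then $a\subseteq S[a]\subseteq b$. Conversely, suppose $a\prec b\Rightarrow a\subseteq b$, and suppose $x\notin S[\{x\}]$. Since $S$ is closed, $S[\{x\}]$ is closed. Using compactness and zero-dimensionality, choose a clopen $a\ni x$ and a clopen $b\supseteq S[a]$ with $x\notin b$. Then $a\twobb^S b$ but $a\not\subseteq b$, a contradiction.

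The only point needing care is choosing $a$ and $b$ in this concrete check. We need $S[a]$ to avoid a neighbourhood of $x$, and this uses closedness of $S$ in $X\times X$ together with compactness. The abstract argument sidesteps it entirely, so I would present that as the proof and treat the concrete computation as a remark.
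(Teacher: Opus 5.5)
Your proposal is correct and follows essentially the paper's route: the paper obtains this corollary by restricting the duality of Corollary~\ref{cor:subba-stone} (ultimately Theorem~\ref{thm:duality-double-rel}) via Remark~\ref{rem:dual-properties-relations}, using that $\twobb^-$ reverses the order on proarrows and sends the diagonal to the natural order on clopens, exactly as you argue. Your concrete check via \eqref{eq:dual-relation} is also sound, and compactness is not actually needed there: a clopen box $c_1\times c_2\ni(x,x)$ disjoint from $S$ gives $a=c_1$ and $b=X\setminus c_2$ directly.
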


\begin{corollary}\label{cor:transitive-interpolative}
The category of Stone spaces with transitive closed relations is dually equivalent to the category of $\BA$-based interpolative subordinations.
\end{corollary}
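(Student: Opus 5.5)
The plan is to obtain Corollary~\ref{cor:transitive-interpolative} by restricting the dual equivalence of Corollary~\ref{cor:subba-stone} (itself a restriction of Theorem~\ref{thm:duality-double-rel}) to full subcategories on both sides. It suffices to show that the object-level correspondence $R\mapsto\twobb^R$ sends transitive closed relations to interpolative subordinations, and conversely. The morphisms are squares $(f,f)$ with the relational condition. Since the equivalence is already defined on these squares and the subcategories are full, nothing new is needed for arrows. Once closure under the correspondence holds on objects, the restricted functors are well defined, and the unit and counit isomorphisms lie in the subcategories automatically: a subcategory defined by a property invariant under isomorphism of objects is replete.

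The key step is the object correspondence, using Remark~\ref{rem:dual-properties-relations}. By Theorem~\ref{thm:duality-for-relations}, $\twobb^-$ is a functor $\Rel(\Pri)^\co\to\Rel(\DL)$ that is part of an equivalence. Hence it preserves relational composition, $\twobb^{R\,;R}=\twobb^R\,;\twobb^R$, and it reverses and reflects the order: $S\subseteq S'$ iff $\twobb^{S'}\subseteq\twobb^{S}$. Transitivity $R\,;R\subseteq R$ therefore holds iff $\twobb^R\subseteq\twobb^{R\,;R}=\twobb^R\,;\twobb^R$, which is interpolativity. Conversely, every interpolative subordination on a Boolean algebra $A$ is isomorphic to some $\twobb^R$ by essential surjectivity. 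So it is of the form $\twobb^R$ with $R$ transitive, by reflection.

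One point needs checking: Stone spaces sit inside Priestley spaces as the order-discrete ones. For these, weakening relations are just closed relations, and identities are the diagonal, so "transitive closed relation on a Stone space" means the same as transitive in $\Rel(\Pri)$. On the algebra side, Boolean algebras correspond exactly to the order-discrete Priestley spaces. This is the restriction already used in Corollary~\ref{cor:subba-stone}.

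I expect the main obstacle to be justifying that composition is preserved on the nose, not merely up to the order. Theorem~\ref{thm:duality-for-relations} asserts an (order-enriched) functor, so this is given; I would simply cite it. If more care is wanted, one could verify directly with \eqref{eq:dual-relation} that $a\,\twobb^{R;S}\,c$, i.e.\ $S[R[a]]\subseteq c$, is equivalent to the existence of a clopen $b$ with $R[a]\subseteq b$ and $S[b]\subseteq c$. The nontrivial direction uses that $R[a]$ is closed, being the image of a compact set under a closed relation. Then, by compactness and zero-dimensionality, a clopen $b\supseteq R[a]$ can be found avoiding the closed set of points whose $S$-image meets the complement of $c$.
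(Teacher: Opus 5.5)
Your proposal is correct and follows the paper's route: the paper obtains this corollary by restricting the duality for relations, using Remark~\ref{rem:dual-properties-relations} ($\twobb^-$ preserves composition and reverses the order on relations) so that transitivity $R\then R\subseteq R$ corresponds exactly to interpolativity $\twobb^R\subseteq\twobb^R\then\twobb^R$. The only difference is cosmetic: the paper restricts the Priestley/$\DL$-level result first and then passes to Stone spaces and Boolean algebras, whereas you restrict Corollary~\ref{cor:subba-stone} directly, and your direct check of $\twobb^{R;S}=\twobb^R;\twobb^S$ by separating closed sets with a clopen is a sound supplement to what the paper leaves implicit.
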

 
\begin{corollary}\label{cor:preorder-interpolative-below-id}
The category of Stone spaces with reflexive and transitive closed relations (preordered Stone spaces) is dually equivalent to the category of $\BA$-based interpolative subordinations below identity.
\end{corollary}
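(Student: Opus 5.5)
The plan is to obtain the statement as a restriction of the double-categorical duality of Theorem~\ref{thm:duality-double-rel}, in the same way as Corollaries~\ref{cor:subba-stone}--\ref{cor:transitive-interpolative}. Consider the category whose objects are endo-proarrows $R:X\looparrowright X$ of $\dRel(\Pri)$ and whose morphisms are squares $(f,f):R\to R'$ with both sides the same arrow. Since $\twobb^-$ is a functor of double categories, it sends such a square to the square $(\twobb^f,\twobb^f):\twobb^{R'}\to\twobb^{R}$ in $\dRel(\DL)$, again with equal sides. Conversely, squares of this shape in $\dRel(\DL)$ come from squares of this shape in $\dRel(\Pri)$, because $\twobb^-$ is an equivalence on both the arrow and the square components. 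This gives the duality of Corollary~\ref{cor:subdl-pri-rel}: Priestley spaces with closed weakening relations on one side, and $\SubDL$ on the other, with arrows $f$ satisfying $a\prec b\Rightarrow f(a)\prec' f(b)$.

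Next, we cut down to full subcategories on both sides. A duality restricts to any full subcategory together with its essential image. So it suffices to show that $R$ is a preorder exactly when $\twobb^R$ is interpolative and below identity.

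\begin{itemize}
\item Reflexivity is ${\le_X}\subseteq R$. By Remark~\ref{rem:op-arrows-proarrows}, $\twobb^-$ reverses the order on proarrows and sends $\Id_X={\le_X}$ to $\Id_{\twobb^X}={\le_{\twobb^X}}$. Hence this condition is equivalent to $\twobb^R\subseteq{\le_{\twobb^X}}$, that is, to $\twobb^R$ being below identity.
\item Transitivity is $R\then R\subseteq R$. Since $\twobb^-$ preserves proarrow composition, this is equivalent to $\twobb^R\subseteq\twobb^R\then\twobb^R$, that is, to interpolativity.
\end{itemize}

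These are the correspondences listed in Remark~\ref{rem:dual-properties-relations}.

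The remaining step is to pass from Priestley spaces and distributive lattices to Stone spaces and Boolean algebras. Stone spaces are exactly the order-discrete Priestley spaces, and these correspond under $\twobb^-$ to Boolean algebras. On an order-discrete space, weakening is vacuous, so the closed weakening relations are exactly the closed relations. Restricting once more to this full subcategory gives the statement.

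The point needing the most care is the functoriality claim used above: that $\twobb^-$ preserves identity proarrows, preserves composition of proarrows (not just up to inclusion), and is an order-isomorphism on proarrow hom-posets. All of this is part of the content of Theorem~\ref{thm:duality-double-rel} (and Theorem~\ref{thm:duality-for-relations}), which we invoke rather than reprove. For a concrete check, composition can also be verified directly from \eqref{eq:dual-relation} via $(R\then S)[a]=S[R[a]]$ together with Priestley separation to find interpolants.
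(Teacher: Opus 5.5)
Your proposal is correct and follows essentially the same route as the paper. Both restrict the double-categorical duality of Theorem~\ref{thm:duality-double-rel} to endo-proarrows with squares $(f,f)$, apply the correspondences of Remark~\ref{rem:dual-properties-relations} (reflexive $\leftrightarrow$ below identity, transitive $\leftrightarrow$ interpolative, via order reversal on proarrows and preservation of composition), and then cut down to order-discrete Priestley spaces and Boolean algebras; your direct check that $\twobb^{R\then S}=\twobb^R\then\twobb^S$ via Priestley separation is a detail the paper leaves implicit.
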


Combining with Remark~\ref{rem:ba-prox-equiv} yields the equivalence case of Section~\ref{sec:cantor} on Cantor space.

\begin{corollary}\label{cor:equivalence-self-dual}
The category of Stone spaces with closed equivalence relations is dually equivalent to the category of $\BA$-based interpolative subordinations below identity satisfying $a\prec b\Leftrightarrow\neg b\prec\neg a$.
\end{corollary}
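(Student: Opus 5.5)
We obtain Corollary~\ref{cor:equivalence-self-dual} by restricting the duality of Corollary~\ref{cor:preorder-interpolative-below-id} to full subcategories on both sides, using the symmetry criterion of Remark~\ref{rem:ba-prox-equiv}. A closed equivalence relation on a Stone space is a preorder that is also symmetric. So the class of objects on the space side is cut out from preordered Stone spaces by one extra condition, symmetry. The morphisms are left unchanged. An equivalence of categories restricts to an equivalence between full subcategories whenever the object classes correspond under the functors, up to isomorphism. The whole proof therefore reduces to one claim: for a reflexive transitive closed relation $S$ on a Stone space $X$ with dual subordination $\prec\,=\twobb^S$ on $\twobb^X$, we have that $S$ is symmetric if and only if $a\prec b\Leftrightarrow\neg b\prec\neg a$ for all clopens $a,b$.

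The key step is to verify this equivalence directly from \eqref{eq:dual-relation}, namely $a\prec b\Leftrightarrow S[a]\subseteq b$. Let $S^{\mathrm c}$ denote the converse relation. We first rewrite $\neg b\prec\neg a$ as $S[X\setminus b]\subseteq X\setminus a$. This says that no $x\notin b$ has an $S$-successor in $a$. Equivalently, every $y\in a$ with $xSy$ forces $x\in b$, that is, $S^{\mathrm c}[a]\subseteq b$. So the self-duality condition states exactly that $\twobb^S=\twobb^{S^{\mathrm c}}$. Here $S^{\mathrm c}$ is again a closed relation, and weakening is automatic in the order-discrete Stone case.

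By Corollary~\ref{cor:subba-stone}, the assignment $S\mapsto\twobb^S$ is injective on closed relations; this is the faithfulness-on-proarrows part of Theorem~\ref{thm:duality-for-relations}. Concretely, a closed relation is recovered from the clopen pairs $(a,b)$ with $S[a]\subseteq b$, because in a Stone space closed sets are intersections of clopens. Hence $\twobb^S=\twobb^{S^{\mathrm c}}$ if and only if $S=S^{\mathrm c}$, which is symmetry.

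It remains to check that the restriction is essentially surjective on the algebra side. Take a $\BA$-based interpolative subordination below identity that satisfies the symmetry condition. By Corollary~\ref{cor:preorder-interpolative-below-id}, it is isomorphic to the dual of some closed preorder $S$. The symmetry condition transfers across that isomorphism, since a Boolean isomorphism preserves $\neg$. By the claim above, $S$ is then an equivalence relation.

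The only real obstacle is the injectivity step, that is, recovering $S$ from $\twobb^S$. I would either invoke Corollary~\ref{cor:subba-stone} or argue directly as follows. Suppose $(x,y)\notin S$. Since $S$ is closed, there are clopens $a\ni x$ and $b\ni y$ with $(a\times b)\cap S=\emptyset$. By compactness of $S[a]$, this gives a clopen $b'\supseteq S[a]$ with $y\notin b'$. This separating pair $(a,b')$ shows that $S$ is determined by $\twobb^S$.
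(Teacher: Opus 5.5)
Your proposal is correct and follows the paper's route: the paper obtains this corollary by restricting Corollary~\ref{cor:preorder-interpolative-below-id} along the symmetry criterion of Remark~\ref{rem:ba-prox-equiv}, which it states with a pointer to \cite{BBSV17} but does not prove. Your direct verification of that criterion fills in exactly the step the paper leaves implicit: you rewrite $\neg b\prec\neg a$ as $S^{\mathrm c}[a]\subseteq b$ and then recover a closed $S$ from $\twobb^S$, where in fact $X\setminus b$ already serves as the separating clopen, so the compactness step is not needed.
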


Corollaries~\ref{cor:transitive-interpolative}, \ref{cor:preorder-interpolative-below-id}, and~\ref{cor:equivalence-self-dual} match \cite[Thm.~4.11(1)--(3)]{BBSV17}.

\begin{definition}\label{def:dgf}
A \dfn{descriptive general frame} ($\DGF$) $(X,R)$ is a Stone space with a closed relation $R$ such that $R^{-1}$ preserves clopens. 
The inverse image of $R$ preserves clopens iff, for any clopen $b$, the set $\{a\mid a\prec b\}$ has a maximal element (given by $\Box_R b$). Such a relation $\prec$ is called \dfn{modally definable}, see \cite[Def 2.9, Lem 4.3]{BBSV17}.
\end{definition}

We now obtain \cite[Thm 4.5(1)]{BBSV17} as another corollary.

\begin{corollary}\label{cor:modally-definable-dgf}
The category of $\BA$-based modally definable subordination algebras is dually equivalent to the category of $\DGF$s with relational morphisms.
\end{corollary}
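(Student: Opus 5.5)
The plan is to obtain Corollary~\ref{cor:modally-definable-dgf} as a restriction of the double-categorical duality of Theorem~\ref{thm:duality-double-rel}, in the same way that Corollaries~\ref{cor:subdl-pri-rel}--\ref{cor:equivalence-self-dual} are restrictions of it. The first task is to fix the two categories being compared. On the algebra side the objects are Boolean algebras $A$ with a modally definable subordination $\prec$, viewed as endo-proarrows of $\dRel(\DL)$. On the space side the objects are $\DGF$s $(X,R)$, viewed as endo-proarrows of $\dRel(\Pri)$ with discrete order. The morphisms are the squares $(f,f)$ of the relevant double category. On the space side this means continuous $f:X\to X'$ with $xRy\Rightarrow f(x)R'f(y)$; this is how I read ``relational morphisms''. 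Since Theorem~\ref{thm:duality-double-rel} gives a dual equivalence of double categories, it induces a dual equivalence between any full sub-double-category of endo-proarrows with squares of the form $(f,f)$ and its image under $\twobb^-$. So it suffices to check two things: that $\twobb^-$ maps the chosen objects to the chosen objects, and that it maps the chosen squares to the chosen squares.

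For squares, restriction is automatic. A square $(f,f):R\to R'$ in $\dRel(\Pri)$ is mapped to a square $(\twobb^f,\twobb^f):\twobb^R\to\twobb^{R'}$ of $\dRel(\DL)^{\op}$, as in Remark~\ref{rem:op-arrows-proarrows}. Because the duality is full and faithful on squares, with squares determined by boundaries by ultraflatness, the morphisms correspond bijectively. Restricting the base duality to Stone spaces and Boolean algebras is the standard restriction already used in Corollary~\ref{cor:subba-stone}, so Corollary~\ref{cor:subba-stone} already supplies the ambient dual equivalence between $\SubBA$ and Stone spaces with closed relations.

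The remaining step, which I expect to be the only substantive one, is to show that under this equivalence $R^{-1}$ preserves clopens exactly when the dual subordination ${\prec}=\twobb^R$ is modally definable. I would use \eqref{eq:dual-relation}: $a\prec b$ iff $R[a]\subseteq b$. This is equivalent to $a\subseteq \Box_R b := X\setminus R^{-1}[X\setminus b]$.

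For the forward direction, suppose $R^{-1}$ preserves clopens. Then $\Box_R b$ is clopen, and it is the maximum of $\{a\mid a\prec b\}$.

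For the converse, suppose that set has a maximum $m$. I would show $m=\Box_R b$. Since $\Box_R b$ is closed (as $R$ is closed and $X$ compact, $R^{-1}$ of a closed set is closed and so $\Box_R b$ is open; the closedness needs care), I argue instead as follows. Every point $x\in\Box_R b$ has a clopen neighbourhood contained in $\Box_R b$. This is obtained by compactness: $R[x]\subseteq b$, and $R$ is closed, so there is a clopen $c\ni x$ with $R[c]\subseteq b$. Hence $c\prec b$, which gives $c\subseteq m$. Therefore $\Box_R b\subseteq m\subseteq\Box_R b$, so $\Box_R b$ is clopen. Taking complements, $R^{-1}$ preserves clopens. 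This matches the characterisation already recorded in Definition~\ref{def:dgf}, so one may also simply cite \cite[Lem 4.3]{BBSV17} for it.

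Combining the object correspondence with the square correspondence yields the claimed dual equivalence.
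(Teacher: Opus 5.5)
Your proposal is correct and follows the paper's route. The paper obtains the corollary by restricting the duality of Corollary~\ref{cor:subba-stone} (itself a restriction of Theorem~\ref{thm:duality-double-rel}) along the equivalence ``$R^{-1}$ preserves clopens iff $\{a\mid a\prec b\}$ has a greatest element $\Box_R b$'', which it records in Definition~\ref{def:dgf} with a citation to \cite[Lem 4.3]{BBSV17}; you prove that equivalence directly. Two small repairs: your parenthetical should say that $\Box_R b$ is automatically open (not closed), which your argument then upgrades to clopen via $\Box_R b=m$; and a merely maximal element of $\{a\mid a\prec b\}$ is automatically greatest, because that set is closed under binary joins.
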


The constraints stemming from the order of relations and the constraint of preservation of clopens/modal definability can be combined in a modular fashion to obtain further results, for example:

\begin{corollary}\label{cor:reflexive-dgf}
The category of reflexive $\DGF$s with relational morphisms is dually equivalent to the category of modally definable $\BA$-based subordinations below identity.
\end{corollary}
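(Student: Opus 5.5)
The plan is to obtain Corollary~\ref{cor:reflexive-dgf} by intersecting two restrictions of the double-categorical duality of Theorem~\ref{thm:duality-double-rel}, one already given by Corollary~\ref{cor:modally-definable-dgf} and one by Corollary~\ref{cor:reflexive-below-id}. Since both are restrictions of the same equivalence $\twobb^-$ between $\dRel(\Pri)$ and $\dRel(\DL)^\op$, it suffices to check two things. First, the object-level conditions are matched under $\twobb^-$. Second, the morphisms on both sides are the same (full) subclass of squares, so that the restricted functors are again mutually inverse up to the same natural isomorphisms.

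I start from Corollary~\ref{cor:modally-definable-dgf}, restricted to Stone spaces and Boolean algebras. There, a DGF $(X,R)$ corresponds to the modally definable subordination algebra $(\twobb^X,\twobb^R)$, and relational morphisms correspond to algebra morphisms preserving the subordination. Both categories in the present statement are full subcategories of these, cut out by an object property. Reflexivity of $R$ means ${\le_X}\subseteq R$; on a Stone space, ${\le_X}$ is the diagonal $\Id_X$.

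By Remark~\ref{rem:op-arrows-proarrows}, duality reverses the order between proarrows. It also sends $\Id_X$ to $\Id_{\twobb^X}$, which is the natural order on the Boolean algebra. Hence $\Id_X\le R$ holds iff $\twobb^R\le \Id_{\twobb^X}$, i.e.\ iff $a\prec b\Rightarrow a\le b$, which is exactly "below identity". Directly: if $R$ is reflexive and $R[a]\subseteq b$, then $a\subseteq R[a]\subseteq b$. Conversely, suppose $(x,x)\notin R$. Since $R$ is closed, $R[x]$ is closed, and it is separated from $x$ by a clopen $b\supseteq R[x]$ with $x\notin b$. Using compactness and the closedness of $R$, I would pick a clopen neighbourhood $a$ of $x$ with $R[a]\subseteq b$. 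Then $a\prec b$ but $a\not\le b$.

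The same equivalence holds for an arbitrary pair $X$, $R$ (in fact it is Corollary~\ref{cor:reflexive-below-id}). Since full subcategories defined by corresponding object properties are preserved by an equivalence, the restriction of Corollary~\ref{cor:modally-definable-dgf} to reflexive DGFs lands in, and is essentially surjective onto, the modally definable subordinations below identity. Fullness and faithfulness are inherited.

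The only step needing care is the converse direction of the reflexivity/below-identity correspondence, i.e.\ the compactness argument producing $a$ with $R[a]\subseteq b$. I would handle it abstractly by citing order-reversal in Remark~\ref{rem:op-arrows-proarrows} together with the fact that $\twobb^-$ is an equivalence on proarrows, so that it reflects the order and no point-set argument is needed.
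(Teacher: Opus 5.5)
Your proposal is correct and follows essentially the paper's own (implicit) argument. The paper obtains this corollary by restricting the relational duality to the full subcategories cut out by two isomorphism-invariant object conditions, namely reflexive $\leftrightarrow$ below identity (order reversal on proarrows, Remark~\ref{rem:dual-properties-relations}) and $\DGF$ $\leftrightarrow$ modally definable; your direct compactness check of the converse direction is correct but optional once you use that $\twobb^-$ reflects the order on proarrows.
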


\begin{corollary}\label{cor:transitive-dgf}
The category of transitive $\DGF$s with relational morphisms  is dually equivalent to the category of modally definable $\BA$-based interpolative subordinations.
\end{corollary}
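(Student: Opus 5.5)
The plan is to obtain Corollary~\ref{cor:transitive-dgf} by restricting the duality of Corollary~\ref{cor:modally-definable-dgf} to a full subcategory on each side. That corollary is itself a restriction of Theorem~\ref{thm:duality-double-rel}, through Corollaries~\ref{cor:subdl-pri-rel} and~\ref{cor:subba-stone}. The general principle is simple: an equivalence restricts to an equivalence between full subcategories whenever the object classes correspond under the dualising functor, up to isomorphism. Here the morphisms, namely the squares of $\SubBA$ and their duals, are unchanged by imposing transitivity or interpolativity on objects. So we only need a pointwise statement: for a $\DGF$ $(X,R)$ with dual $(A,\prec)=(\twobb^X,\twobb^R)$, $R$ is transitive if and only if $\prec$ is interpolative.

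For this object-level statement I would use Remark~\ref{rem:dual-properties-relations}, which follows from Theorem~\ref{thm:duality-for-relations}. The functor $\twobb^-:\Rel(\Pri)^\co\to\Rel(\DL)$ is an equivalence of order-enriched categories. It therefore preserves composition of relations, $\twobb^{R;R}=\twobb^R;\twobb^R$, and it reverses and reflects the inclusion order, $R\subseteq S\iff \twobb^S\subseteq\twobb^R$. Applying this to the inequality $R;R\subseteq R$ gives $\twobb^R\subseteq \twobb^R;\twobb^R$, which is exactly interpolativity of $\prec$. Reflection of the order gives the converse.

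Next I need to check that this is compatible with the Stone/Boolean setting. A Stone space is a Priestley space with discrete order, and $\Rel(\text{Stone})$ sits fully inside $\Rel(\Pri)$. Closed relations on a Stone space are automatically weakening for the discrete order, and the composite of closed relations is closed by compactness, so composition is computed the same way in both settings. Modal definability plays no role in the computation. Since the transitivity condition is imposed separately from the $\DGF$ condition, the two constraints combine modularly, and the class of transitive $\DGF$s corresponds exactly to the class of modally definable interpolative $\BA$-based subordinations.

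The only delicate point is that interpolativity must be read with composition in $\Rel(\BA)\subseteq\Rel(\DL)$, the relational composite of subordinations. The claim that $\twobb^-$ preserves this composite, and not merely a lax comparison, is exactly the content of Theorem~\ref{thm:duality-for-relations}, so we invoke it rather than reprove it. Finally, essential surjectivity on the restricted subcategories follows because every modally definable interpolative subordination is isomorphic to the dual of its dual, and the dual is a transitive $\DGF$ by the equivalence above.
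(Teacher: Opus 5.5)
Your proposal is correct and follows the paper's own (brief) justification: the paper likewise obtains the corollary by restricting the duality of Corollary~\ref{cor:modally-definable-dgf} (ultimately Theorem~\ref{thm:duality-double-rel}) to full subcategories. It uses Remark~\ref{rem:dual-properties-relations} (transitivity of $R$ corresponds to interpolativity of $\twobb^R$, because $\twobb^-$ preserves composition and reverses the order) and combines this constraint modularly with modal definability. Your additional checks, namely closure of composites by compactness, the Stone/Boolean case sitting inside the Priestley/distributive one, and essential surjectivity via the counit isomorphism, spell out what the paper leaves implicit.
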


\begin{corollary}\label{cor:preorder-dgf}
The category of reflexive and transitive $\DGF$s with relational morphisms is dually equivalent to the category of modally definable $\BA$-based interpolative subordinations below identity.
\end{corollary}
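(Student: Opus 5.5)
The plan is to obtain Corollary~\ref{cor:preorder-dgf} as a restriction of the double-categorical duality of Theorem~\ref{thm:duality-double-rel}, combining the two independent restrictions already used in Corollaries~\ref{cor:preorder-interpolative-below-id} and~\ref{cor:modally-definable-dgf}. I read ``relational morphisms'' on both sides as the arrows of the category of endo-proarrows: pairs consisting of an arrow $f$ together with a square $(f,f):R\to R'$. On the algebra side this is exactly the category $\SubBA$ of Definition~\ref{def:subordination}. Since $\twobb^-$ is a dual equivalence of double categories, it restricts to a dual equivalence between the category of endo-proarrows with such squares in $\dRel(\Pri)$ and the corresponding category in $\dRel(\DL)^\op$. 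This is the content of Corollary~\ref{cor:subdl-pri-rel}, and with Stone spaces and Boolean algebras it is Corollary~\ref{cor:subba-stone}. It then suffices to check that the full subcategories on each side correspond object-wise under $\twobb^-$. A dual equivalence restricts to full subcategories whose objects correspond up to isomorphism.

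The object-level correspondence follows from three facts.

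\begin{enumerate}
\item \emph{Reflexive corresponds to below identity.} By Remark~\ref{rem:dual-properties-relations}, using that $\twobb^-$ reverses the order on proarrows (Remark~\ref{rem:op-arrows-proarrows}) and sends $\Id_X$ to $\Id_{\twobb^X}={\le}$, we have ${\le_X}\subseteq R$ iff $\twobb^R\subseteq{\le}$.
\item \emph{Transitive corresponds to interpolative.} Since $\twobb^-$ preserves composition of proarrows (functoriality in Theorem~\ref{thm:duality-for-relations}) and reverses order, $R;R\subseteq R$ iff $\twobb^R\subseteq \twobb^R;\twobb^R$.
\item \emph{DGF corresponds to modal definability.} This is the equivalence recorded in Definition~\ref{def:dgf} and used in Corollary~\ref{cor:modally-definable-dgf}: $R^{-1}$ preserves clopens iff each set $\{a\mid a\prec b\}$ has a maximum $\Box_R b$.
\end{enumerate}

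These three conditions are properties of a single object (proarrow), so they are jointly preserved and reflected. On Stone spaces the order is discrete, so ``reflexive'' in the usual sense agrees with ``above $\Id$''. Hence the objects of both full subcategories correspond exactly.

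The only step needing care is the third one in the presence of the first two. Modal definability is not an order-theoretic property of the proarrow in $\Rel(\DL)$, so Remark~\ref{rem:dual-properties-relations} does not cover it. Still, it is a property of the object alone, and Corollary~\ref{cor:modally-definable-dgf} already establishes that it is preserved and reflected. Intersecting full subcategories is therefore harmless, and no new verification should be needed beyond citing Corollaries~\ref{cor:preorder-interpolative-below-id} and~\ref{cor:modally-definable-dgf} together. I would also remark that morphisms need no extra condition: in both categories they are all squares between the restricted objects, as in Corollary~\ref{cor:modally-definable-dgf}.
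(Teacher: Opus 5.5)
Your proposal is correct and takes essentially the same route as the paper. The paper gives no separate proof: it obtains this corollary by restricting the double-categorical duality of Theorem~\ref{thm:duality-double-rel} to endo-proarrows with squares $(f,f)$. It then combines, ``in a modular fashion'', the order-theoretic correspondences of Remark~\ref{rem:dual-properties-relations} (reflexive/below identity, transitive/interpolative) with the clopen-preservation/modal-definability correspondence of Definition~\ref{def:dgf}, which is exactly your intersection of full subcategories.
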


\begin{remark}[Morphisms vs J\'onsson--Tarski]\label{rem:jonsson-tarski-morphisms}
The dualities above are weaker on morphisms than classical J\'onsson--Tarski duality: subordination morphisms need not preserve $\Box$ on the nose (they only force $f\Box\le\Box'f$), and dually one allows relational morphisms between $\DGF$s rather than continuous p-morphisms. Restricting on both sides---to Boolean algebra morphisms with $f(\Box a)=\Box'f(a)$, and to continuous p-morphisms---yields the classical duality: under Stone duality, $f\Box=\Box'f$ is equivalent to the dual map being a p-morphism (lax $f\Box\le\Box'f$ matches ``forth''; colax $\Box'f\le f\Box$ matches ``back'').
\end{remark}

We phrased the results in terms of the more widely considered $\BA$-based subordinations but all of our proofs are obtained by restricting the corresponding results about ($\DL$-based) subordination algebras. 

\section{Splitting idempotents}\label{sec:splitting-idempotents}

Returning to the development outlined in the synopsis (Section~\ref{sec:synopsis}), in this section we will explain in detail the step from 

$$\dRel(\Pri) \simeq \dRel(\DL)^\op$$

to

$$\Spl_\ge(\dRel(\Pri)) \simeq \Spl_\le(\dRel(\DL))^\op$$

obtained by splitting idempotents above identity on spaces and idempotents below identity on algebras.

The \href{https://ncatlab.org/nlab/show/Karoubi+envelope#InComponents}{\dfn{Karoubi envelope}} of a category $\mathcal C$ has for each idempotent arrow $c:C\to C$ in $\mathcal C$ an object $(C,c)$ with arrows $h:(C,c)\to (C',c')$ given by those $h:C\to C'$ such that $c'\circ h\circ c=h$. If $\mathcal C$ is order-enriched, the \dfn{lower Karoubi envelope} is the full subcategory of the Karoubi envelope spanned by objects $(C,c)$ such that $c\le\id_C$ while the \dfn{upper Karoubi envelope} is the full subcategory spanned by objects $(C,c)$ such that $c\ge\id_C$.

In the definition below, $\mathbb C$ is assumed to be a (strict) flat double category (Definitions~\ref{def:double-category} and~\ref{def:flat-ultraflat}): strictness lets us define idempotence by the equation $R\then R=R$, and flatness ensures that relations are preordered, so we may write $R\le\Id_C$ (ultraflatness would moreover make that order partial). The proarrow projection of $\Spl_\le(\mathbb C)$ is the lower Karoubi envelope of the proarrow projection of $\mathbb C$.

\begin{definition}\label{def:splitting-idempotents}
Let $\mathbb C$ be a flat double category.
The double category $\Spl_\le(\mathbb C)$ is defined by the following data.
\begin{itemize}
\item Objects $(C,R)$ where $C$ is an object of $\mathbb C$ and $R:C\looparrowright C$ is an idempotent proarrow in $\mathbb C$ below the identity ($R\le\Id_C$). 
\item Arrows $f:(C,R)\to(C',R')$ where $f$ is an arrow of $\mathbb C$ such that $(f,f):R\looparrowright R'$ is a square in $\mathbb C$, that is, $cRd\Rightarrow f(c) R' f(d)$.
\item Proarrows $S:(C,R)\looparrowright(C',R')$ where $S:C\looparrowright C'$ in $\mathbb C$ with $R \then S \then R' \,=\, S$. The identity proarrow on $(C,R)$ is $R$.
\item Squares $(f,g):S\looparrowright S'$ where $(f,g):S\looparrowright S'$ is a square in $\mathbb C$, that is, $cSd \Rightarrow f(c) S' g(d)$.
\end{itemize}
$\Spl_\ge(\mathbb C)$ is defined analogously but with respect to idempotents above the identity ($R\ge\Id_C$).
\end{definition}

\begin{remark} We note the following.
\begin{itemize}
\item An idempotent proarrow above (below) the identity is a monad (comonad) in the $\Pos$-category of relations.
\item Splitting idempotents preserves equivalence of double categories.
\item $\Spl_\ge(\mathbb C)=(\Spl_\le(\mathbb C^\op))^\op$ (on the proarrow projection this is the $^\co$ of Theorem~\ref{thm:duality-for-relations}).
\end{itemize}
\end{remark}

The next theorem follows directly from the definitions. It is the hinge around which this paper revolves. The fact that its proof is trivial is one of the reasons we employ the machinery of order-enriched and double categories. Its combinatorial ramifications will be spelt out in the remainder of the paper.

\begin{theorem}\label{thm:spl-pri-dl}
There is an equivalence of double categories
\[
    \Spl_\ge(\dRel(\Pri)) \simeq \Spl_\le(\dRel(\DL))^\op
\]
that extends the classical Priestley duality.
\end{theorem}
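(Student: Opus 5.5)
The plan is to transport the equivalence of Theorem~\ref{thm:duality-double-rel} through the construction $\Spl$, using only three facts about it: that equivalence is strict functoriality on the nose for relational composition, that it reverses the order of proarrows, and that it sends identity proarrows to identity proarrows. Write $\Phi=\twobb^-:\dRel(\Pri)\to\dRel(\DL)^\op$ and $\Psi=\twobb^-$ for its pseudo-inverse.

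First, I will check what $\Phi$ does to the data of Definition~\ref{def:splitting-idempotents}.

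\begin{itemize}
\item \emph{Objects.} Since $\Phi$ preserves composition of proarrows (up to the identifications $X\cong\twobb^{\twobb^X}$, which are identities on the level of relations because $\dRel$ is ultraflat), $R\then R=R$ implies $\twobb^R\then\twobb^R=\twobb^R$. Since $\Phi$ sends $\Id_X={\le_X}$ to $\Id_{\twobb^X}$ and reverses the order (Remark~\ref{rem:op-arrows-proarrows}), $R\ge\Id_X$ gives $\twobb^R\le\Id_{\twobb^X}$. So $(X,R)\mapsto(\twobb^X,\twobb^R)$ sends objects of $\Spl_\ge(\dRel(\Pri))$ to objects of $\Spl_\le(\dRel(\DL))$.
\item \emph{Arrows.} An arrow $f:(X,R)\to(Y,R')$ is a square $(f,f):R\to R'$, and $\Phi$ sends it to a square $(\twobb^f,\twobb^f):\twobb^{R'}\to\twobb^R$, that is, to an arrow of $\Spl_\le(\dRel(\DL))^\op$.
\item \emph{Proarrows.} A proarrow $S$ with $R\then S\then R'=S$ goes to $\twobb^R\then\twobb^S\then\twobb^{R'}=\twobb^S$.
\item \emph{Squares.} Squares go to squares.
\item \emph{Identities and composition.} The identity proarrow $R$ on $(X,R)$ is sent to $\twobb^R$, the identity proarrow on $(\twobb^X,\twobb^R)$. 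Composition is inherited from $\dRel$, so functoriality is immediate.
\end{itemize}

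Thus $\Phi$ induces a double functor $\Spl_\ge(\Phi)$. Symmetrically, $\Psi$ induces $\Spl_\le(\Psi)$, using that order reversal turns $\le\Id$ into $\ge\Id$.

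Second, I will lift the unit and counit. The natural isomorphisms $\eta_X:X\to\twobb^{\twobb^X}$ (and dually on lattices) are arrows of $\Pri$. I must show that $\eta_X$ is an arrow $(X,R)\to(\twobb^{\twobb^X},\twobb^{\twobb^R})$ in the split category, and that it is invertible there. This holds because naturality of the equivalence of double categories with respect to squares says precisely that $(\eta_X,\eta_X)$ is a square $R\to\twobb^{\twobb^R}$, and its inverse is likewise a square. The triangle identities are inherited from those of Theorem~\ref{thm:duality-double-rel}, since all constituent cells are the old ones. Hence the two induced functors form an equivalence.

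Finally, the extension claim follows from a commuting square. Priestley duality sits in $\dRel(\Pri)\simeq\dRel(\DL)^\op$ as the arrow projection. It embeds into the split categories via $X\mapsto(X,\Id_X)$, and the induced equivalence restricts to $\twobb^-$ on these objects.

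I expect the main obstacle to be bookkeeping rather than mathematics: checking that the unit components really are morphisms in the \emph{split} double category, in the appropriate direction, given the $^\op$. That is, one must confirm that the square condition $xRy\Rightarrow\eta(x)\,\twobb^{\twobb^R}\,\eta(y)$ holds together with its converse for $\eta^{-1}$, which amounts to $R\cong\twobb^{\twobb^R}$ under $\eta$. I would first try to derive this directly from the ``equivalence of double categories'' statement in Theorem~\ref{thm:duality-double-rel}, namely that the unit is an invertible natural transformation whose components on proarrows are invertible squares. If that formulation is unavailable, I would fall back on the concrete description \eqref{eq:dual-relation}, together with closedness of $R$ and Priestley separation, to show that $R$ is recovered from $\twobb^R$.
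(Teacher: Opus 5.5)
Your proposal is correct and follows the paper's route: the paper treats the theorem as immediate from the definitions, resting on the remarks that splitting idempotents preserves equivalences of double categories and that $\Spl_\ge(\mathbb C)=(\Spl_\le(\mathbb C^\op))^\op$, and you have unpacked this by checking objects, arrows, proarrows, squares and the unit/counit components by hand. One small aside: preservation of proarrow composition, $\twobb^{R\then S}=\twobb^R\then\twobb^S$, is an equality of relations between $\twobb^X$ and $\twobb^Z$, so it does not involve the identifications $X\cong\twobb^{\twobb^X}$ at all.
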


\begin{remark}
Using the notion of concretely order-regular category \cite{KMJ}, the theorem can be generalised by replacing the duality $\Pri \simeq \DL^\op$ by any order-enriched duality of concretely order-enriched categories.
\end{remark}

\begin{remark}\label{rem:flipped-splitting}
The reader by now will be curious about swapping the roles of upper and lower Karoubi constructions. That is, the machinery discussed here suggests we could just as well split idempotent $\DL$-relations above identity and idempotent Priestley relations below identity to obtain a dual version of Theorem~\ref{thm:spl-pri-dl}. While this works out formally, it does not yield anything interesting.

Consider $\DL$-preorders ${\sqsubseteq_A}$ and ${\sqsubseteq_B}$ on distributive lattices $A$ and $B$. These, effectively, are an order-theoretic way to encode lattice congruences on $A$ and $B$. Because the category $\DL$ is exact (is closed under quotients), a $\DL$-relation between $A$ and $B$ compatible with ${\sqsubseteq_A}$ and ${\sqsubseteq_B}$ amounts to a $\DL$-relation between the quotients $A/{\sqsubseteq_A}$ and $B/{\sqsubseteq_B}$. This can be used to show that in the double category $\Spl_\ge(\dRel(\DL))$, the proarrow projection is equivalent to $\Rel(\DL)$. Also, $\DL$ is reflective in the arrow projection of $\Spl_\ge(\dRel(\DL))$ in essentially the same way that posets form a reflective subcategory of the category of preorders.

On a Priestley space $X$, an idempotent ${\prec}$ below identity (corresponding to a congruence on the dual distributive lattice) determines a closed subspace consisting of the elements satisfying $x\prec x$ (call these reflexive elements). Because a closed subspace of a Priestley space is a Priestley space, we obtain a dual fact about $\Spl_\le(\dRel(\Pri))$: the proarrow projection is equivalent to $\Rel(\Pri)$ and the arrow projection has $\Pri$ as a coreflective subcategory.
\end{remark}

\section{Proximity lattices}\label{sec:proxlat}

The notion of a proximity lattice was introduced in the context of domain theory and stably compact spaces \cite{Smyth1992,JungSunderhauf1996}, and further developed by \cite{Moshier2004} for compact Hausdorff spaces. 

\begin{definition}\label{def:proximity-lattice}
  A \dfn{proximity lattice} $(A,\vee,\wedge, 0,1, \prec)$ is a distributive
  lattice $(A,\vee,\wedge, 0,1)$ with an idempotent $\DL$-relation $\prec$ below identity.
  The category $\Prox$ has arrows $f:(A,\prec)\to(A',\prec')$ which are $\DL$-morphisms and preserve the relation, that is, $a\prec b \Rightarrow \ f(a)\prec' f(b)$.
  The category $\ProxRel$ has arrows $R:(A,\prec)\looparrowright(A',\prec')$ such that ${\prec} \then R \then {\prec'} \,=\, R$.
  The double category $\dProx$ takes the arrows from $\Prox$ and the proarrows from $\ProxRel$. The squares are inherited from the underlying posets.
\end{definition}

\begin{remark}\label{rem:prox-axioms}
The conditions on a proximity relation $\prec$ are equivalent to the following:
  $$\prec\circ\prec {=} \prec \quad\quad {\prec}\subseteq\ \le$$ 
  \begin{gather*}
  \frac{a\le b \quad\quad b\prec c \quad\quad c\le d}{a\prec d}
  \end{gather*}
  \begin{gather*}
  0 \prec 0 \quad\quad 1\prec 1 \\[0.7em]
  \frac{a \prec b \quad\quad c\prec d}{a \vee c \, \prec \, b\vee d} \quad\quad \frac{a   \prec b \quad\quad c\prec d}{a \wedge c \, \prec \, b \wedge d}
   \end{gather*}
Similarly, the conditions on a relation (proarrow) $R:(A,\prec)\looparrowright(A',\prec')$ are equivalent to the following:
   \begin{gather*}
    \frac{a_1\le a_2 \quad\quad a_2 R\,  a_1' \quad\quad a_1'\le a_2'}{a_1 R\,  a_2'} \\[0.7em]
 0 R 0' \quad\quad 1 R 1' \\[0.7em]
  \frac{a R b \quad\quad cR d}{(a \vee c) \, R \, (b\vee d)} \quad\quad \frac{a   R b \quad\quad cR d}{(a \wedge c) \, R \, (b\wedge d)}\\[0.7em]
    \frac{a_1\prec a_2 \quad\quad a_2 R\, a_1' \quad\quad a_1'\prec a_2'}{a_1 R\, a_2'}   \end{gather*}  
\end{remark}

\begin{remark}\label{rem:prox-simplified-axioms}
Given that $\prec$ is a weakening relation below identity on the underlying poset of a $\DL$, it is enough to check
   \begin{gather*}
   a\prec c \ \textrm{ only if there is $b$ such that } \ a\prec b \ \textrm{ and } \ b\prec c  \\[0.7em]
   \frac{a \prec b \quad\quad b\prec c}{a \prec c}  \\[0.7em]
  0 \prec 0 \quad\quad 1\prec 1 \\[0.7em]
  \frac{a \prec b \quad\quad c\prec b}{a \vee c  \prec \, b} \quad\quad \frac{a   \prec b \quad\quad a\prec d}{a  \prec b \, \wedge d}
   \end{gather*}
in order to obtain a proximity lattice.
In particular, one derives the full rules for conjunction and disjunction of Remark~\ref{rem:prox-axioms} by weakening. Conversely, the last two rules above follow from those of Remark~\ref{rem:prox-axioms} using that in a $\DL$ we have $b\vee b=b$ and $a\wedge a = a$. The conditions on $R$ for $\wedge$ and $\vee$ can be simplified in an analogous way.

Moreover, since we know that $\prec$ and $\prec'$ are idempotent, the condition ${\prec} \then R \then {\prec'} \,=\, R$ is equivalent to ${\prec} \then R \,=\, R$ and $R \then {\prec'} \,=\, R$.
\end{remark}

The following proposition is immediate from the definitions. 
\begin{proposition}\label{prop:dprox-spl}
  $\dProx\cong\Spl_\le(\dRel(\DL))$ is an isomorphism of double categories. 
\end{proposition}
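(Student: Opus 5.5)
We prove the isomorphism by unfolding Definition~\ref{def:splitting-idempotents} for $\mathbb C=\dRel(\DL)$ and comparing it, component by component, with Definition~\ref{def:proximity-lattice}. The comparison functor is the identity on underlying data. It sends an object $(A,\prec)$ of $\Spl_\le(\dRel(\DL))$ to the proximity lattice $(A,\prec)$, and it sends arrows, proarrows and squares to themselves. It therefore suffices to show that in each of the four layers (objects, arrows, proarrows, squares) the two classes of data coincide, and that the compositions and identities agree.

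\textbf{Objects.} An object of $\Spl_\le(\dRel(\DL))$ is a distributive lattice $A$ with a $\DL$-relation $R:A\looparrowright A$ satisfying $R\then R=R$ and $R\le \Id_A$. Since $\Id_A$ is ${\le_A}$ and the order on proarrows is inclusion, this says exactly that $R$ is an idempotent $\DL$-relation below identity, which is the definition of a proximity relation. The weakening and lattice closure conditions listed in Remark~\ref{rem:prox-axioms} are just the unpacking of ``$\DL$-relation''.

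\textbf{Arrows.} An arrow in the split double category is a $\DL$-morphism $f$ such that $(f,f)$ is a square $R\to R'$, i.e.\ $aRb\Rightarrow f(a)R'f(b)$. These are precisely the arrows of $\Prox$.

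\textbf{Proarrows.} A proarrow is a $\DL$-relation $S$ with ${\prec}\then S\then{\prec'}=S$, which is literally the defining condition of $\ProxRel$. The identity proarrow on $(A,\prec)$ is $\prec$ in both structures, and it is indeed a unit because $\prec\then S=S=S\then\prec'$ (Remark~\ref{rem:prox-simplified-axioms}). Composition is relational composition in both, and it stays in the class since ${\prec}\then S\then T\then{\prec''}=S\then T$.

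\textbf{Squares.} Squares are inherited from $\dRel(\DL)$ in both cases, and since the double category is flat they are just a property of their boundary, so they agree as well.

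The only point needing care is not mathematical but bookkeeping. We must make sure the double-categorical structure maps (domain, codomain, horizontal and vertical composition, identities) coincide on the nose, and not merely up to equivalence, so that the result is an isomorphism rather than an equivalence. The main thing to check is that the identity proarrow of the split category is $R$ rather than $\Id_C$. This is exactly why $\ProxRel$ is defined with the condition ${\prec}\then R\then{\prec'}=R$. With that observed, the functor is bijective on all four layers and strictly preserves all structure, which gives the isomorphism.
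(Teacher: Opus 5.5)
Your proposal is correct and takes the same approach as the paper, which states only that the proposition is immediate from the definitions. Your layer-by-layer comparison of Definition~\ref{def:splitting-idempotents} (instantiated at $\dRel(\DL)$) with Definition~\ref{def:proximity-lattice} is exactly the verification the paper leaves implicit, and it correctly includes the point that the identity proarrow on $(A,\prec)$ is $\prec$ itself.
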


\begin{remark}
\begin{itemize}
  \item Henceforth, we will take $\dProx = \Spl_\le(\dRel(\DL))$ as the definition of (the double category of) proximity lattices. The categories $\Prox$ (proximity lattices with functional arrows) and $\ProxRel$ (proximity lattices with relational arrows) then arise as the arrow, respectively proarrow, projections of $\dProx$.
  \item While $\ProxRel$ is the proarrow projection of $\dProx$, it is not $\Rel(\Prox)$. The natural forgetful functor $\Prox\to\Pos$ sends $(A,\prec)$ to $(A,\le)$, since $\prec$ need not be reflexive, so identities in $\Rel(\Prox)$ are lattice orders, whereas identities in $\ProxRel$ are the proximities $\prec$.
  \item Going back to Section~\ref{sec:dualities-relations}, $\Prox$ is the full subcategory of $\SubDL$ consisting of the subordination algebras where the subordination is idempotent and below identity. 
\end{itemize}
\end{remark}

The paradigmatic example is the proximity lattice of a compact ordered space:

\begin{definition}\label{def:kohaus-proximity-lattice}
The proximity lattice $(A, \prec)$ is \dfn{the proximity lattice of a compact ordered space} $X$ if
   \begin{itemize}
     \item $A$ is the set of pairs $(d, u)$ where $d$ is a lower open set and $u$ is an upper open set and $d \cap u = \emptyset$;
     \item $(d, u) \le (d', u')$ if and only if $d \subseteq d'$ and $u \supseteq u'$.
     \item $(d, u) \prec (d', u')$ if and only if $u \cup d'$ covers $X$, ie, $u \cup d' = X$.
   \end{itemize}
 \end{definition}

\begin{remark}\label{rem:jung-sunderhauf-OK}
 Up to notation, this is as in Jung and S\"underhauf \cite[Thm.~23]{JungSunderhauf1996} who start from a stably compact space and work with pairs $(O,K)$ where $O$ open, $K$ compact saturated, $O\subseteq K$, and $(O,K)\prec(O',K')$ iff $K\subseteq O'$. Under the equivalence between stably compact spaces and compact ordered spaces (see e.g.\ \cite[Thm.~9.1.32]{Goubault-LarrecqNonHausdorff}, \cite[Prop.~VI-6.23]{GierzEtAl}), the mapping $(O,K)\mapsto(X\setminus K,\,O)$ sends JS-tokens to our pairs $(d,u)$, thereby reversing the order of the lattices.
\end{remark}

\begin{proposition}\label{prop:kohaus-to-prox}
  Mapping a compact ordered space $X$ to its proximity lattice defines a contravariant functor $\KOHaus \to \Prox$.
\end{proposition}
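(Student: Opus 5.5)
To prove Proposition~\ref{prop:kohaus-to-prox}, I would first verify that the structure $(A,\prec)$ of Definition~\ref{def:kohaus-proximity-lattice} is indeed a proximity lattice, and then define the action on arrows by inverse images and check functoriality.

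\emph{Lattice structure.} Define $(d,u)\vee(d',u')=(d\cup d',\,u\cap u')$ and $(d,u)\wedge(d',u')=(d\cap d',\,u\cup u')$, with $0=(\emptyset,X)$ and $1=(X,\emptyset)$. These are again disjoint pairs of a lower open and an upper open set; for instance $(d\cup d')\cap(u\cap u')\subseteq(d\cap u)\cup(d'\cap u')=\emptyset$. Since the operations are componentwise unions and intersections (with the second component order-reversed), they compute joins and meets for the stated order, and distributivity is inherited from the powerset.

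\emph{Proximity axioms.} I would check the simplified list of Remark~\ref{rem:prox-simplified-axioms}.
\begin{itemize}
\item Below identity: if $u\cup d'=X$, then $d\subseteq X\setminus u\subseteq d'$ and $u'\subseteq X\setminus d'\subseteq u$.
\item Weakening and transitivity follow from the fact that enlarging $u$ or $d'$ preserves the covering condition; transitivity then follows because $\prec$ is below identity and weakening.
\item The facts $0\prec 0$ and $1\prec 1$ are immediate.
\item The join rule: from $u\cup d''=X$ and $u'\cup d''=X$ we get $(u\cap u')\cup d''=X$. The meet rule is dual.
\end{itemize}

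\emph{Interpolation (the main obstacle).} Suppose $u\cup d'=X$. Then $X\setminus u$ is a closed down-set, $X\setminus d'$ is a closed up-set, and they are disjoint. Nachbin separation \eqref{eq:nachbin-separation} yields a lower open $e\supseteq X\setminus u$ and an upper open $v\supseteq X\setminus d'$ with $e\cap v=\emptyset$. Then $(e,v)\in A$. We have $u\cup e=X$, so $(d,u)\prec(e,v)$, and $v\cup d'=X$, so $(e,v)\prec(d',u')$. This is exactly where compactness and the closedness of the order are used, via Remark~\ref{rem:nachbin-separation}.

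\emph{Functoriality.} For a continuous monotone map $f:X\to Y$, define $f^*(d,u)=(f^{-1}d,\,f^{-1}u)$. Preimages of lower (upper) open sets under a continuous monotone map are lower (upper) open, and disjointness is preserved. The map $f^*$ preserves $\cup$, $\cap$, $\emptyset$ and the whole space, so it is a $\DL$-homomorphism. If $u\cup d'=Y$, then $f^{-1}u\cup f^{-1}d'=X$, so $f^*$ preserves $\prec$ and is therefore an arrow of $\Prox$. Finally, $(g\circ f)^*=f^*\circ g^*$ and $\id^*=\id$ hold because preimage is contravariantly functorial.

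One could also note local monotonicity. If $f\le g$ pointwise, then for a lower set $d$ we have $g^{-1}d\subseteq f^{-1}d$, and for an upper set $u$ we have $f^{-1}u\subseteq g^{-1}u$, so $g^*\le f^*$. This order reversal is consistent with the convention remarks in the paper, but it is not needed for the bare functor statement.
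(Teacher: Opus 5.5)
Your proposal is correct and follows essentially the same route as the paper. Both proofs show that disjoint lower/upper open pairs form a distributive lattice, derive the below-identity, weakening and transitivity properties from the covering condition $u\cup d'=X$, obtain interpolation from Nachbin separation, and act on arrows by inverse images. You additionally spell out that $f^*$ is a lattice homomorphism and that $(-)^*$ preserves identities and composition, which the paper leaves implicit.
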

\begin{proof}
  First, we verify that $(A, \prec)$ as defined is a proximity lattice. 
  The set $A$ of pairs $(d, u)$ with $d$ a lower open, $u$ an upper open, and $d \cap u = \emptyset$ forms a distributive lattice (with meets and joins built from intersections and unions). 
  The relation $\prec$ is transitive and below the identity (componentwise order), and satisfies the required axioms: if $(d, u) \prec (d', u')$ and $(d', u') \prec (d'', u'')$, then $u \cup d' = X$ and $u' \cup d'' = X$, which implies $u \cup d'' = X$ (since $u' \subseteq u$ and $d' \subseteq d''$ by the order structure), so $(d, u) \prec (d'', u'')$. 
  To show that $\prec$ is interpolative, suppose $(d,u)\prec(d'',u'')$, ie $u\cup d''=X$.
  Set $K:=X\setminus u$ and $L:=X\setminus d''$. Then $K$ is a closed down-set, $L$ is a closed up-set, and $K\cap L=\emptyset$. 
  By Nachbin separation~\eqref{eq:nachbin-separation}, there exist a lower open $d'$ and an upper open $u'$ with $K\subseteq d'$, $L\subseteq u'$, and $d'\cap u'=\emptyset$. Thus $(d',u')\in A$, and $u\cup d'\supseteq u\cup K=X$ together with $u'\cup d''\supseteq L\cup d''=X$, so $(d,u)\prec(d',u')\prec(d'',u'')$.
  The compatibility with lattice operations follows from the set-theoretic properties of unions and intersections.
  
  On morphisms: For a continuous monotone map $f : X \to Y$ in $\KOHaus$, define $f^* : (A_Y, \prec_Y) \to (A_X, \prec_X)$ by $f^*(d, u) = (f^{-1}(d), f^{-1}(u))$ for $(d, u) \in A_Y$. Since $f$ is continuous and monotone, $f^{-1}(d)$ is a lower open in $X$ and $f^{-1}(u)$ is an upper open in $X$. If $d \cap u = \emptyset$ in $Y$, then $f^{-1}(d) \cap f^{-1}(u) = f^{-1}(d \cap u) = \emptyset$ in $X$, so $f^*(d, u) \in A_X$. The proximity relation is preserved: if $(d, u) \prec_Y (d', u')$, meaning $u \cup d' = Y$, then $f^{-1}(u) \cup f^{-1}(d') = f^{-1}(u \cup d') = f^{-1}(Y) = X$, so $f^*(d, u) \prec_X f^*(d', u')$. 
\end{proof}

To illustrate the power of relational morphisms, we show that the proximity lattice of the Priestley space (as a compact ordered space) of a distributive lattice $A$ is relationally isomorphic to $A$. 

\begin{remark}\label{rem:proxrel-iso-priestley}
Let $X$ be a Priestley space, $A$ its dual distributive lattice (the lattice of clopen downsets), and $A'$ the proximity lattice of the compact ordered space $X$. Then $\phi : A\to A'$ given by $\phi(a)=(a,X\setminus a)$ is an embedding but not an isomorphism in $\Prox$. It does, however, induce a relational isomorphism $A\cong A'$ in $\ProxRel$: writing $aRa'=\phi(a)\prec a'$ and $a'Sa=a'\prec \phi(a)$ for the relations associated with $\phi$, it is not difficult to verify that $R\then S={\le_A}$ (using interpolation of $\prec$) and $S\then R={\prec}$ (using Priestley separation). Thus $A$ and $A'$ are isomorphic as objects of $\ProxRel$ (but not of $\Prox$).
\end{remark}

\begin{example}\label{ex:unit-interval-proximity-lattice}
Let $X$ be the unit interval $[0,1]$ of real numbers with the natural topology and order. 
The proximity lattice $(A, \prec)$ of $X$ is the distributive lattice of pairs $(d, u)$ where $d$ is a lower open (of the form $[0,x)$ or $[0,1]$) and $u$ is an upper open (of the form $(y,1]$ or $[0,1]$) with $d \cap u = \emptyset$. 
The proximity relation $\prec$ is the relation specified by $([0,x'),(y',1])\prec([0,x),(y,1])$ if $y'<x$, see the top and middle picture in Figure~\ref{fig:proximity-lattice-unit-interval}. See also Example~\ref{ex:unit-interval-priestley-preorder}.
\end{example}

\begin{figure}[htbp]
  \centering
  \begin{tikzpicture}[scale=3.8]
  \useasboundingbox (-0.08,-0.06) rectangle (1.25,1.25);
  \draw[->] (0,0) -- (1.2,0) node[right] {$x$};
  \draw[->] (0,0) -- (0,1.2) node[above] {$y$};
  \fill[gray!10] (0,0) -- (1,1) -- (0,1) -- cycle;
  \draw[red, line width=1.5pt] (0,0) -- (1,1);
  \draw[black] (0,1) -- (1,1);
  \draw[red, line width=1.5pt] (0,0) -- (0,1);
  \coordinate (K) at (0.4, 0.7);
  \coordinate (A) at (0, 0.7);
  \coordinate (B) at (0.4, 0.4);
  \draw[blue, dashed] (K) -- (A);
  \draw[blue, dashed] (K) -- (B);
  \filldraw[black] (K) circle (0.5pt) node[above, yshift=2pt] {$([0,x), (y,1])$};
  \filldraw[blue] (A) circle (0.5pt) node[left] {$(\emptyset, (y,1])$};
  \filldraw[blue] (B) circle (0.5pt) node[right, xshift=2pt] {$([0,x), (x,1])$};
  \filldraw[black] (0,0) circle (0.5pt) node[below left] {$(0,0)$};
  \filldraw[black] (0,1) circle (0.5pt) node[left] {$(0,1)$};
  \filldraw[black] (1,1) circle (0.5pt) node[right] {$(1,1)$};
  \end{tikzpicture}

  \medskip
  \begin{tikzpicture}[scale=3.8]
  \useasboundingbox (-0.08,-0.06) rectangle (1.25,1.25);
  \draw[->] (0,0) -- (1.2,0);
  \draw[->] (0,0) -- (0,1.2);
  \fill[gray!10] (0,0) -- (1,1) -- (0,1) -- cycle;
  \draw[black] (0,0) -- (1,1) -- (0,1) -- cycle;
  \def\mx{0.4}
  \def\my{0.7}
  \fill[green!35] (0,0) -- (0,\mx) -- (\mx,\mx) -- (0,0) -- cycle;
  \draw[black, dashed] (\mx,\my) -- (\mx,\mx);
  \filldraw[black] (\mx,\my) circle (0.5pt) node[above right] {$(x,y)$};
  \draw[black] (\mx,\mx) circle (0.5pt) node[below right] {$(x,x)$};
  \filldraw[black] (0,0) circle (0.5pt);
  \filldraw[black] (0,1) circle (0.5pt);
  \filldraw[black] (1,1) circle (0.5pt);
  \end{tikzpicture}

  \medskip
  \begin{tikzpicture}[scale=3.8]
  \useasboundingbox (-0.08,-0.06) rectangle (1.25,1.25);
  \draw[->] (0,0) -- (1.2,0);
  \draw[->] (0,0) -- (0,1.2);
  \fill[gray!10] (0,0) -- (1,1) -- (0,1) -- cycle;
  \draw[black] (0,0) -- (1,1) -- (0,1) -- cycle;
  \def\myx{0.4}
  \def\myy{0.6}
  \begin{scope}[opacity=0.5]
    \fill[magenta] (0,\myy) -- (\myy,\myy) -- (1,1) -- (0,1) -- (0,\myy) -- cycle;
  \end{scope}
  \begin{scope}[opacity=0.5]
    \fill[blue] (\myx,\myx) -- (\myx,1) -- (1,1) -- (\myx,\myx) -- cycle;
  \end{scope}
  \draw[blue, line width=1pt] (\myx,\myx) -- (\myx,1);
  \draw[magenta, line width=1pt] (0,\myy) -- (\myy,\myy);
  \draw[green!70!black, line width=1pt] (\myy,\myy) -- (\myy,1) -- (1,1) -- (\myy,\myy);
  \node[blue, below] at (\myx, -0.02) {$x$};
  \node[magenta, left] at (-0.02, \myy) {$y$};
  \node[blue, font=\scriptsize] at (0.72, 0.48) {$F_x$};
  \node[magenta, font=\scriptsize] at (0.22, 0.82) {$G_y$};
  \node[green!70!black, font=\scriptsize] at (0.72, 0.85) {$\Box G_y$};
  \filldraw[black] (0,0) circle (0.5pt);
  \filldraw[black] (0,1) circle (0.5pt);
  \filldraw[black] (1,1) circle (0.5pt);
  \end{tikzpicture}

  \caption{\textbf{Top:} Encoding of the proximity lattice; 
  the order increases up and right; 
  join-irreducibles on the red lines; the black point is the join (coordinate-wise max) of the two blue points. \textbf{Middle:} A point $(x,y)$ and the set $\{(x',y') \mid (x',y')\prec (x,y)\}$ (green). \textbf{Bottom:} Prime filters $F_x$ (blue, vertical) and $G_y$ (magenta, horizontal). The round prime filter $\Box F_x$ is the interior of $F_x$. $\Box G_y$ is indicated. $x\le  y \ \Rightarrow \ \Box G_y \subseteq \Box F_x$.}
  \label{fig:proximity-lattice-unit-interval}
\end{figure}

Proximity lattices that arise from compact ordered spaces have a stronger interpolation property; see eg \cite[Def~3.1.10]{Kegelmann1999}. (For the relationship of our proximity lattices to the strong proximity lattices of \cite{JungKegelmannMoshier1999}, see also Section~\ref{sec:adjoint-relations-proximity-lattices}.)

\begin{definition}\label{def:str-proximity-lattice}
A proximity lattice is \dfn{$\vee$-strong} if it satisfies
$$ x\prec y\vee z \ \Rightarrow \ \exists y'\prec y. \exists z'\prec z.x\prec y'\vee z'$$
and it is \dfn{$\wedge$-strong} if it satisfies
$$ y\wedge z\prec x \ \Rightarrow \ \exists y'\succ y. \exists z'\succ z. y'\wedge z'\prec x.$$
A proximity lattice is \dfn{strong} if it is both $\vee$-strong and $\wedge$-strong.
\end{definition}

One reason to isolate strongness is that it guarantees a Gentzen-style cut rule:
\begin{proposition}\label{prop:gentzen-cut}
  If $(A,\prec)$ is $\vee$-strong or $\wedge$-strong, then it satisfies the Gentzen cut rule
  \begin{gather*}
  \frac{a\prec b\vee c \quad\quad a\wedge c\prec b}{a\prec b}.
  \end{gather*}
\end{proposition}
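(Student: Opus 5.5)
The plan is a direct calculation with the weakening and lattice rules of Remark~\ref{rem:prox-axioms}, treating the $\vee$-strong and $\wedge$-strong cases separately; the two arguments are order-dual to each other. The only property of $\prec$ used beyond the weakening and join/meet rules is that it is below identity, so $x\prec y$ implies $x\le y$.

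\emph{Case $\vee$-strong.} Assume $a\prec b\vee c$ and $a\wedge c\prec b$.
\begin{enumerate}
\item Apply $\vee$-strongness to $a\prec b\vee c$ to obtain $b'\prec b$ and $c'\prec c$ with $a\prec b'\vee c'$.
\item Since $\prec$ is below identity, $a\le b'\vee c'$. Distributivity then gives $a=(a\wedge b')\vee(a\wedge c')$.
\item For the first joinand, $a\wedge b'\le b'\prec b$, so weakening gives $a\wedge b'\prec b$.
\item For the second joinand, $c'\prec c$ gives $c'\le c$. Hence $a\wedge c'\le a\wedge c\prec b$, and weakening gives $a\wedge c'\prec b$.
\item The join rule $\frac{x\prec b\quad y\prec b}{x\vee y\prec b}$ of Remark~\ref{rem:prox-simplified-axioms} now yields $a\prec b$.
\end{enumerate}

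\emph{Case $\wedge$-strong.} Here we dualise, approximating $a$ from above rather than $b$ from below.
\begin{enumerate}
\item Apply $\wedge$-strongness to $a\wedge c\prec b$ to obtain $a'\succ a$ and $c''\succ c$ with $a'\wedge c''\prec b$, and in particular $a'\wedge c''\le b$.
\item Since $c\le c''$, we have $a\prec b\vee c\le b\vee c''$, so $a\prec b\vee c''$ by weakening. We also have $a\prec a'$.
\item The meet rule gives $a\prec a'\wedge(b\vee c'')$.
\item Distributivity gives $a'\wedge(b\vee c'')=(a'\wedge b)\vee(a'\wedge c'')\le b$.
\item One final weakening gives $a\prec b$.
\end{enumerate}

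I expect no real obstacle. The only step needing care is choosing which side to approximate: shrink the right-hand side in the $\vee$-strong case and enlarge the left-hand side in the $\wedge$-strong case. With that choice, the ``below identity'' property turns the strictly related pair into an inequality, and distributivity finishes the argument. Interpolation of $\prec$ is not needed.
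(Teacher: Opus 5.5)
Your proof is correct, and your $\vee$-strong case is exactly the paper's argument: approximate $b$ and $c$ from below, use ${\prec}\subseteq{\le}$ and distributivity to split $a$, then apply the join rule. The paper writes out only the $\vee$-strong case and leaves the $\wedge$-strong one to order duality (the cut rule is self-dual); your explicit $\wedge$-strong argument checks out and fills in that step.
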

\begin{proof}
  We assume that $A$ is $\vee$-strong, and that $a\prec b\vee c$ and $a\wedge c\prec b$. By $\vee$-strongness there are $b'\prec b$ and $c'\prec c$ with $a\prec b'\vee c'$. Since ${\prec}\subseteq{\le}$ we have $a\le b'\vee c'$, hence by distributivity
  \[
  a \;=\; a\wedge(b'\vee c') \;=\; (a\wedge b')\vee(a\wedge c').
  \]
  Weakening gives $a\wedge b'\prec b$ (from $b'\prec b$) and $a\wedge c'\prec b$ (from $a\wedge c\prec b$ and $c'\prec c$). The join rule for $\prec$ yields $a\prec b$.
\end{proof}

The following proposition due to Kegelmann \cite[Prop~3.1.34]{Kegelmann1999} shows that the proximity lattice of a compact ordered space is strong.

\begin{proposition}\label{prop:kohaus-prox-strong}
The proximity lattice of a compact ordered space is strong.
\end{proposition}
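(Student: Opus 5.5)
We work directly with the concrete description of the proximity lattice in Definition~\ref{def:kohaus-proximity-lattice} and check $\vee$-strongness; $\wedge$-strongness then follows by the order-dual argument, swapping the roles of lower opens and upper opens. First we record how the lattice operations are computed. Joins are $(d,u)\vee(d',u')=(d\cup d',\,u\cap u')$ and meets are $(d,u)\wedge(d',u')=(d\cap d',\,u\cup u')$. These pairs are again disjoint by distributivity of sets. Recall also that $(d,u)\prec(d',u')$ means $u\cup d'=X$.

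For $\vee$-strongness, let $x=(d_x,u_x)$, $y=(d_y,u_y)$, $z=(d_z,u_z)$, and assume $x\prec y\vee z$, that is, $u_x\cup d_y\cup d_z=X$. Set $K:=X\setminus u_x$, a closed down-set. Then $K\subseteq d_y\cup d_z$, and $K$ is compact.

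The key step is to shrink the cover $\{d_y,d_z\}$ of $K$ to closed down-sets $K_y\subseteq d_y$ and $K_z\subseteq d_z$ with $K=K_y\cup K_z$. Since $X$ is compact Hausdorff, hence normal, the ordinary shrinking lemma gives closed sets $C_y\subseteq d_y$ and $C_z\subseteq d_z$ covering $K$. Define $K_y:={\downarrow}(C_y\cap K)$. In a compact ordered space the down-closure of a compact set is closed, and $K_y\subseteq K$ because $K$ is a down-set. Moreover $K_y\subseteq d_y$ because $d_y$ is a down-set. Define $K_z$ analogously.

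Next we apply Nachbin separation~\eqref{eq:nachbin-separation} twice. The closed down-set $K_y$ is disjoint from the closed up-set $X\setminus d_y$, so separation gives a lower open $d'_y\supseteq K_y$ and an upper open $u'_y\supseteq X\setminus d_y$ with $d'_y\cap u'_y=\emptyset$. Then $y':=(d'_y,u'_y)$ lies in $A$, and $y'\prec y$ holds because $u'_y\cup d_y=X$. We obtain $z'\prec z$ in the same way. Finally, $u_x\cup d'_y\cup d'_z\supseteq u_x\cup K=X$, which says exactly that $x\prec y'\vee z'$.

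For $\wedge$-strongness, assume $y\wedge z\prec x$, that is, $u_y\cup u_z\cup d_x=X$. Now cover the compact up-set $L:=X\setminus d_x$ by $u_y\cup u_z$. The same argument, using up-closures ${\uparrow}$ instead of down-closures, produces $y'\succ y$ and $z'\succ z$ with $y'\wedge z'\prec x$.

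The main obstacle is the shrinking step, because the shrunken sets must be down-sets and not merely closed. This rests on the fact that ${\downarrow}C$ is closed for closed $C$ in a compact ordered space. That fact follows from the order being closed: ${\downarrow}C$ is the projection of the compact set ${\le}\cap(X\times C)$. If this route proved awkward, I would instead use compactness directly. Cover $K$ by finitely many lower opens obtained by separating each point of $K$ from the relevant complement, and take their unions as $d'_y$ and $d'_z$.
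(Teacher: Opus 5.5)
Your proof is correct. The paper gives no argument of its own here and simply cites Kegelmann \cite[Prop~3.1.34]{Kegelmann1999}, whose setting is stably compact spaces. Using that result for Definition~\ref{def:kohaus-proximity-lattice} therefore implicitly passes through the translation of Remark~\ref{rem:jung-sunderhauf-OK}. That translation reverses the lattice order, so it exchanges the $\vee$- and $\wedge$-halves of strongness; this is harmless, since both are needed.

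Your route is a self-contained replacement that stays in the $(d,u)$ picture. It uses only the tools the paper already relies on for plain interpolation in Proposition~\ref{prop:kohaus-to-prox}: compactness, closedness of the order (so that ${\downarrow}C$ is closed for closed $C$), and Nachbin separation. It also makes transparent that strong interpolation is ordinary interpolation preceded by one extra step. That step splits the closed down-set $K=X\setminus u_x$ into two closed down-sets lying inside $d_y$ and $d_z$.

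Two small remarks. First, the shrinking lemma is usually stated for covers of the whole space, so apply it in the closed, hence normal, subspace $K$ to the cover $\{d_y\cap K,\,d_z\cap K\}$.

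Second, you can avoid the shrinking lemma entirely:
\begin{itemize}
\item Put $K_y={\downarrow}(K\setminus d_z)$. This is a closed down-set contained in $d_y$.
\item Separate $K_y$ from $X\setminus d_y$ to obtain $y'=(d'_y,u'_y)\prec y$.
\item Observe that $K\setminus d'_y$ is closed and contained in $d_z$.
\item Apply the same step to ${\downarrow}(K\setminus d'_y)$ to obtain $z'\prec z$ with $K\subseteq d'_y\cup d'_z$.
\end{itemize}
Your fallback via finitely many point separations also works, taking $u'_y$ to be the intersection of the corresponding finitely many upper opens.
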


\section{Priestley preorders}\label{sec:priestley-preorders}

This section introduces Priestley preorders, the dual counterpart to the proximity lattices of Section~\ref{sec:proxlat}. We define the quotient of a Priestley preorder by its symmetric kernel and show that it yields a compact ordered space (Definition~\ref{def:compact-ordered-space}; see also \eqref{eq:nachbin-separation} and \cite{AlvarezManillaEtAl2004}).

\begin{definition}\label{def:priestley-preorder}
  A \dfn{Priestley preorder} $(X, \sqsubseteq)$ is a Priestley space $X$ with an idempotent $\Pri$-relation $\sqsubseteq$ above the identity. The category $\PriPre$ of Priestley preorders has arrows $f:(X,\sqsubseteq)\to(X',\sqsubseteq')$ which are continuous monotone functions preserving the preorder: $x \sqsubseteq y\ \Rightarrow \ f(x) \sqsubseteq' f(y)$.
  The double category $\dPriPre$ has proarrows $S:(X,\sqsubseteq)\looparrowright(X',\sqsubseteq')$ such that $S:X\looparrowright X'$ is a $\Pri$-relation and ${\sqsubseteq} \then S \then {\sqsubseteq'} \,=\, S$.
\end{definition}

\begin{remark}
  For a Priestley preorder $(X, \sqsubseteq)$ we have:
  \begin{enumerate}
  \item ${\sqsubseteq}$ is weakening: $x \le_X x' \ \wedge \ x' \sqsubseteq y' \ \wedge \ y' \le_X y \ \Rightarrow \ x \sqsubseteq y$
  \item ${\sqsubseteq}$ is closed in the product topology on $X \times X$
  \item ${\sqsubseteq}$ is above the identity: ${\le_X} \, \subseteq \ {\sqsubseteq}$
  \item ${\sqsubseteq}$ is idempotent: ${\sqsubseteq} \then {\sqsubseteq} \,=\, {\sqsubseteq}$
  \end{enumerate}
  To conclude the remark, let us emphasise that $\sqsubseteq$ is indeed a preorder.
\end{remark}

Given a Priestley preorder $(X, \sqsubseteq)$, let ${\equiv}$ be the equivalence relation given by $x \equiv y$ iff $x \sqsubseteq y$ and $y \sqsubseteq x$, and let $[x]$ be the equivalence class of $x$.
\begin{definition}\label{def:quotient-functor}
  The \dfn{quotient} of a Priestley preorder $(X, \sqsubseteq)$ is the quotient space $X/{\equiv}$ equipped with the partial order given by $[x] \leq [y]$ iff $x \sqsubseteq y$. We denote this quotient by $X/{\sqsubseteq}$.
\end{definition}

\begin{proposition}\label{prop:quotient-to-kohaus}
  The quotient construction of Definition~\ref{def:quotient-functor} defines a functor
  \[
    \PriPre \to \KOHaus
  \]
  from the category of Priestley preorders to the category of compact ordered spaces. In particular, every quotient of a Priestley preorder is a compact ordered space.
\end{proposition}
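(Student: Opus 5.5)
We need to show three things: that $Y:=X/{\sqsubseteq}$ is compact, that its order is a well-defined partial order closed in $Y\times Y$, and that the construction is functorial. Let $\pi:X\to Y$ be the quotient map $x\mapsto[x]$.

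\emph{Well-definedness and antisymmetry.} If $x\equiv x'$ and $y\equiv y'$, then $x\sqsubseteq y$ gives $x'\sqsubseteq x\sqsubseteq y\sqsubseteq y'$. So $[x]\le[y]$ is well defined, because $\sqsubseteq$ is a preorder. Reflexivity and transitivity are inherited from $\sqsubseteq$. Antisymmetry holds by construction of $\equiv$.

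\emph{Compactness.} $Y$ is a continuous image of the compact space $X$, so it is compact.

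\emph{Closedness of the order.} This is the main step. The graph of ${\le_Y}$ is $(\pi\times\pi)[{\sqsubseteq}]$, and its preimage $(\pi\times\pi)^{-1}({\le_Y})$ is exactly ${\sqsubseteq}$. Since $\pi\times\pi$ is not obviously a quotient map, I will not argue via preimages. Instead I show the complement is open directly.

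Suppose $[x]\not\le[y]$, i.e.\ $x\not\sqsubseteq y$. Let ${\uparrow}x:=\{z\mid x\sqsubseteq z\}$ and ${\downarrow}y:=\{z\mid z\sqsubseteq y\}$. These are sections of the closed set ${\sqsubseteq}$, so they are closed. By transitivity ${\uparrow}x$ is a $\sqsubseteq$-up-set, and ${\downarrow}y$ is a $\sqsubseteq$-down-set. They are disjoint: $z$ in both would give $x\sqsubseteq y$.

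I then need to separate them by opens $U\supseteq{\uparrow}x$ and $D\supseteq{\downarrow}y$ that are $\sqsubseteq$-saturated (up- resp.\ down-closed under $\sqsubseteq$) and disjoint. The standard compactness argument for closed preorders (Nachbin's lemma) does this:
\begin{itemize}
\item For a closed set $K$, the sets ${\sqsubseteq}[K]$ and ${\sqsubseteq}^{-1}[K]$ are closed, being projections of closed subsets of the compact space $X\times X$.
\item Hence for an open $V\supseteq{\uparrow}x$, the set $X\setminus{\sqsubseteq}^{-1}[X\setminus V]$ is an open up-set contained in $V$.
\end{itemize}
With this, I separate ${\uparrow}x$ and ${\downarrow}y$ first by arbitrary disjoint opens, which exist since $X$ is compact Hausdorff and both sets are closed. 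I then shrink these to saturated opens, alternating the shrinking on the two sides if needed. The point that must be checked carefully is that the shrunk up-set still contains ${\uparrow}x$. Concretely, I expect to first separate the point $x$ from the closed down-set ${\downarrow}y$ by a closed up-set, using the compactness argument above, and then obtain the second set by complementing.

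Saturated opens are unions of $\equiv$-classes, so $\pi[U]$ and $\pi[D]$ are open in $Y$. They are disjoint and contain $[x]$ and $[y]$ respectively. Moreover, $\pi[U]$ is an up-set and $\pi[D]$ a down-set, so $\pi[U]\times\pi[D]$ is an open neighbourhood of $([x],[y])$ disjoint from ${\le_Y}$. Hence ${\le_Y}$ is closed, and $Y$ is a compact ordered space by Definition~\ref{def:compact-ordered-space}. Hausdorffness then follows from Remark~\ref{rem:nachbin-separation}.

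\emph{Functoriality.} An arrow $f:(X,\sqsubseteq)\to(X',\sqsubseteq')$ preserves the preorder, hence preserves $\equiv$. It therefore induces $\bar f:[x]\mapsto[f(x)]$, which is monotone. Continuity follows from the universal property of the quotient topology, since $\bar f\circ\pi=\pi'\circ f$ is continuous. Preservation of identities and composition is immediate from uniqueness of induced maps.
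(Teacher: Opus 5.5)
Your proof is correct. The well-definedness, compactness and functoriality parts match the paper; the difference is in how you get closedness of the order.

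The paper first invokes the standard fact that a compact Hausdorff space modulo a closed equivalence relation (here $\equiv={\sqsubseteq}\cap{\sqsupseteq}$) is compact Hausdorff. Closedness of ${\le_Y}=(\pi\times\pi)[{\sqsubseteq}]$ is then immediate: $\pi\times\pi$ is a continuous map from the compact space $X\times X$ to the Hausdorff space $Y\times Y$, hence a closed map. So your worry that $\pi\times\pi$ need not be a quotient map never arises once Hausdorffness is known.

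You reverse the order. You prove closedness directly, by separating ${\uparrow}x$ and ${\downarrow}y$ with disjoint open $\sqsubseteq$-up- and down-sets, and obtain Hausdorffness afterwards. For that you do not even need Nachbin: a closed order has closed diagonal ${\le_Y}\cap{\ge_Y}$.

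What each route buys: the paper's argument is two lines but rests on two black-box topological facts. Yours is self-contained and in effect reproves the closed-equivalence-relation theorem in ordered form. It also directly gives the separation by saturated open up- and down-sets that underlies the subbasis claim used later in the proof of Theorem~\ref{thm:map-kohaus-pri}.

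One simplification: your shrinking step needs no alternation. Take disjoint opens $V\supseteq{\uparrow}x$ and $W\supseteq{\downarrow}y$. Then the sets $X\setminus{\sqsubseteq}^{-1}[X\setminus V]$ and $X\setminus{\sqsubseteq}[X\setminus W]$ already work. The check you flagged takes one line: if $a\in{\uparrow}x$ and $a\sqsubseteq k$, then $k\in{\uparrow}x\subseteq V$. In any case, an up-set containing $x$ automatically contains ${\uparrow}x$.
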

\begin{proof}
  Since ${\sqsubseteq}$ is closed, so is ${\equiv} \,=\, {\sqsubseteq} \cap{\sqsupseteq}$. The quotient of a compact Hausdorff space by a closed equivalence relation is compact Hausdorff.
  The relation $[x]\le[y]\Leftrightarrow x\sqsubseteq y$ is a well-defined partial order on $X/{\equiv}$ (well-defined and transitive by idempotence of ${\sqsubseteq}$; antisymmetric by construction of ${\equiv}$), and it is closed as the image of the closed saturated set ${\sqsubseteq}$ under the quotient map $X\times X\to(X/{\equiv})^2$.
  On morphisms: if $f:(X,\sqsubseteq)\to(Y,\sqsubseteq')$ is continuous, monotone, and preserves ${\sqsubseteq}$, define $X/{\sqsubseteq}\to Y/{\sqsubseteq'}$ by $[x] \mapsto [f(x)]$.
\end{proof}

We will pick this up again in Theorem~\ref{thm:spec-quotient-kohaus}.

\begin{definition}\label{def:cover-functor}
  The \dfn{cover functor} $\KOHaus \to \PriPre$ is defined by composing the contravariant functor $\KOHaus \to \Prox$ from Proposition~\ref{prop:kohaus-to-prox} with the duality between proximity lattices and Priestley preorders.
\end{definition}

\begin{example}\label{ex:unit-interval-priestley-preorder}
Let $X$ be the unit interval $[0,1]$ as in Example~\ref{ex:unit-interval-proximity-lattice}. 
The Priestley preorder $(X', \sqsubseteq)$ dual to the proximity lattice $(A, \prec)$ of the unit interval is obtained by taking for $X'$ the space of prime filters $\Spec(A)$. 
As indicated at the top of Figure~\ref{fig:proximity-lattice-unit-interval}, the filters generated by $([0,x),(y,1])$ with $x<y<1$ are not prime because $([0,x),(y,1])=([0,x),\emptyset)\vee ([0,y),(y,1])$ is not join-irreducible. 
The preorder $\sqsubseteq$ is defined by $F \sqsubseteq F' \ \Leftrightarrow \ \forall (d,u)\in F \ \& \ (d,u)\prec (d',u') \Rightarrow (d',u')\in F'$. 
See also Example~\ref{ex:encoding-unit-interval-proximity-lattice}.
\end{example}

\section{Dedekind cuts as adjoint relations}\label{sec:dedekind-cuts}

In this section, we will work in the category where an object $(A,\prec)$ consists of a set $A$ with an idempotent relation ${\prec}\subseteq A\times A$, that is, ${\prec} \then {\prec} \,=\, {\prec}$. 
Arrows $R:(B,\ll)\to(A,\prec)$ are relations $R\subseteq B\times A$ satisfying ${\ll} \then R \then {\prec} \,=\, R$.

We will also consider the case where, moreover, the set $A$ carries the structure of a distributive lattice, where $\prec$ is a proximity relation and $R$ is a $\DL$-relation.

Consider the following diagram where $A,B$ are posets, and $\prec,\ll$ are idempotent relations.
\begin{equation}\label{eq:adjointrelsinpos}
\begin{tikzcd}[column sep=3em]
	(A,\prec) & \top & (B,\ll)
	\arrow["L"', curve={height=12pt}, from=1-1, to=1-3]
	\arrow["R"', curve={height=12pt}, from=1-3, to=1-1]
\end{tikzcd}
\end{equation}
We will assume that $L,R$ are relational morphisms between $(A,\prec)$ and $(B,\ll)$, that is, they are weakening-closed with respect to the idempotents
\begin{align}\label{eq:adjointrelsround}
  aLb & \ \Leftrightarrow \  \exists a',b'\,.\,a\prec a' \,L\, b' \ll b\\
  bRa & \ \Leftrightarrow \  \exists b',a'\,.\,b\ll b' \,R\, a' \prec a
\end{align}
and we have $L\dashv R$, that is,
\begin{align*}
  a\prec a' & \ \Rightarrow \ \exists b\,.\,aLbRa' & \textrm{(Unit)}\\
\exists a\,.\, bRaLb' & \ \Rightarrow \ b\ll b' & \textrm{(Counit)}
\end{align*}

We will write such adjoint pairs as arrows aligning with the left-adjoint:

$$(L\dashv R):(A,\prec)\looparrowright(B,\ll)$$

We write $Ra=\{b\mid bRa\}$ and $aL=\{b\mid aLb\}$ and similarly for $bR$ and $Lb$.

\begin{example}
  Take $A=1=\{a\}$ and let $\prec$ be the identity. Then $aL$ is an upset and $Ra$ is a downset. (Unit) says that their intersection is non-empty and (Counit) says that everything in $Ra$ is $\ll$-below everything in $aL$. If $\ll$ is anti-symmetric\footnote{which is always true in our setting due to $\ll$ being contained in $\le$} it follows that the intersection of $aL$ and $Ra$ contains exactly one element. Note that this can be used to pick out exactly the reflexive elements\footnote{An element $b$ is reflexive if $b\ll b$.} of $B$ via left-adjoints $1\to B$.
\end{example}

The next example will give rise to the notion of cut and of round ideal. 

\begin{proposition}\label{prop:pointsarecuts}
In the situation of Diagram \eqref{eq:adjointrelsinpos} take $B=\twobb=\{0<1\}$ and let ${\ll}={\le}$. Then $(L0,1R)$ is a \dfn{cut} in the sense that 
\begin{itemize}
\item $L0$ is a downset and $1R$ is an upset 
\item (roundness) $\forall a\in L0\,.\,\exists a'\in L0\,.\,a\prec a'$ and $\forall a\in 1R\,.\,\exists a'\in 1R\,.\,a'\prec a$
\item (disjointness) $L0\cap 1R=\emptyset$ 
\item (tightness) If $a\prec a'$ then $a\in L0 \textrm{ or } a'\in 1R$ 
\end{itemize}
and every cut is of this form for a unique pair $L\dashv R$.
If, moreover, $A$ is a lattice and $L,R$ are lattice relations, then $L0$ is an ideal and $1R$ is a filter.
\end{proposition}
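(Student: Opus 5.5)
We unfold the definitions with $B=\twobb$ and ${\ll}={\le}$, so the relational morphisms are weakening relations $L\subseteq A\times\twobb$ and $R\subseteq\twobb\times A$. Since $L$ is weakening in its $\twobb$-argument ($aL0$ implies $aL1$) and $aL1$ always holds, $L$ is determined by the set $L0$. Indeed, $aL1$ follows from (Unit) applied to some $a\prec a'$, or, if we prefer, from weakening together with the fact that $1$ is top; this needs a small check when $a$ has no $\prec$-successor, and we return to it below. Dually, $R$ is determined by $1R$, with $0Ra$ for all $a$. Everything therefore reduces to translating the conditions (\ref{eq:adjointrelsround}), (Unit) and (Counit) into conditions on the pair $(L0,1R)$.

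The translation goes as follows. Roundness of $L$ at $b=0$ reads $aL0\Leftrightarrow\exists a'.\,a\prec a'\,L\,b'\le 0$, i.e.\ $a\in L0\Leftrightarrow\exists a'\in L0.\,a\prec a'$. The forward direction is roundness; the backward direction, together with ${\prec}\subseteq{\le}$ (or idempotence when $A$ is only a set with idempotent relation), gives that $L0$ is a downset with respect to $\prec$-weakening. Similarly for $1R$. (Counit) with $b=1$, $b'=0$ forbids $1RaL0$, which is exactly disjointness. The other instances of (Counit) are automatic, because ${\le}$ on $\twobb$ relates every other pair. (Unit) says that $a\prec a'$ implies $\exists b.\,aLbRa'$. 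With $b=0$ this needs $a\in L0$ (since $0Ra'$ always holds), and with $b=1$ it needs $a'\in 1R$ (since $aL1$ always holds). So (Unit) is exactly tightness. The proof is then a short sequence of such equivalences, read in both directions. The converse direction is to define $aLb\Leftrightarrow(b=1\text{ or }a\in L0)$ and $bRa\Leftrightarrow(b=0\text{ or }a\in 1R)$, and check the same equivalences backwards. Uniqueness follows because $L,R$ are recovered from $L0,1R$ as above.

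The step I expect to need the most care is the claim that $aL1$ and $0Ra$ hold for every $a$, because (\ref{eq:adjointrelsround}) demands an $a'$ with $a\prec a'$. In the general set-with-idempotent setting this might fail for elements with no $\prec$-successor. I would handle this by stating the correspondence with $L$ given by $aL1\Leftrightarrow\exists a'.\,a\prec a'$, which is forced by roundness, and similarly for $R$, so that $L$ is still determined by $L0$. In the lattice case this holds automatically because $a\le 1$ and $1\prec 1$, so $a\prec 1$ by weakening.

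For the final clause, suppose $L$ is a $\DL$-relation. If $a,c\in L0$, then $aL0$ and $cL0$ give $(a\vee c)L(0\vee 0)=0$, and $0L0$ holds. So $L0$ is closed under joins and contains $0$, and being a downset it is an ideal. Dually, $1R$ contains $1$ and is closed under meets by the $\wedge$ rule for $R$, hence it is a filter.
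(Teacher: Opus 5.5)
Your proposal is correct and is essentially the paper's argument. Disjointness comes from (Counit), tightness from (Unit), and roundness from \eqref{eq:adjointrelsround}; the converse reconstructs $L,R$ from the cut with $aL1\Leftrightarrow\exists a'.\,a\prec a'$ and $0Ra\Leftrightarrow\exists a'.\,a'\prec a$ forced, which is exactly the correction you reach in your third paragraph.

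One small precision: roundness only gives the forward half of $aL1\Leftrightarrow\exists a'.\,a\prec a'$. The backward half, which you need for uniqueness, comes from (Unit), since $a\prec a'$ yields $aLb$ for some $b$ and hence $aL1$; in the converse construction, it is interpolation that makes this definition satisfy \eqref{eq:adjointrelsround}.
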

\begin{proof} 
Given  $L\dashv R$, we prove that $(L0,1R)$ is a cut.
Roundness is immediate from \eqref{eq:adjointrelsround}. 
For disjointness observe that $1Ra$ and $aL0$ implies $1\le 0$. 
For tightness, assume $a\prec a'$, which implies that there is $b$ such that $aLb$ and $bRa'$. If $b=0$ then $a\in L0$ and if $b=1$ then $a'\in 1R$.  

For the converse, let $(D,U)$ be a cut. Then we define $1Ra\ \Leftrightarrow \ a\in U$ and $aL0 \ \Leftrightarrow \ a\in D$ and check that $0Ra \ \Leftrightarrow \ \exists a'\,.\, a'\prec a$ 
and $aL1 \Leftrightarrow \ \exists a'\,.\, a\prec a'$ are forced.

Moreover, if $a,a'\in L0$ then also $(a\vee a') L 0$ and if $a,a'\in 1R$ then $(a\wedge a') R 1$. 
\end{proof}

\begin{remark}
  If $A$ is a preorder, a cut is determined by a function $f:A\to\twobb$ such that $aL0 = \twobb(fa,0)$  and $1Ra = \twobb(1,fa)$.\footnote{Equivalently, $aL0 \iff fa= 0$ and $1Ra \iff fa= 1$.}
\end{remark}

\begin{example}
  The unit interval $[0,1]$ with the way-below\footnote{Defined here by $0\prec0$ and $x<y\Rightarrow x\prec y$ and $1\prec1$.} relation $\prec$, whether taken as a set of rational numbers or as a set of real numbers, has exactly one cut for each real number $r\in\mathbb R$ with $0\le r\le 1$.
\end{example}

\begin{definition}\label{def:round-filter}
If $A$ is a lattice and $L\dashv R:(A,\prec)\to(\twobb,\le)$ is an adjoint pair of lattice relations as in Proposition~\ref{prop:pointsarecuts}, we call a set of the form $L0$ a \dfn{round ideal} and a set of the form $1R$ a \dfn{round filter}. 
If $A$ is a proximity lattice, we denote by 
$\RFilt(A)$ 
the set of round filters of $A$. 
\end{definition}

The next proposition shows that on strong proximity lattices round ideals and filters are prime.

\begin{proposition}\label{prop:round-prime}
  If $A$ is a strong proximity lattice and $L\dashv R:(A,\prec)\to(\twobb,\le)$ is an adjoint pair of $\DL$-relations, then $L0$ is a round prime ideal and $1R$ is a round prime filter. 
\end{proposition}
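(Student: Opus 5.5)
By Proposition~\ref{prop:pointsarecuts} we already know that $D:=L0$ is a round ideal and $U:=1R$ is a round filter. They are disjoint, and they satisfy tightness: $a\prec a'$ implies $a\in D$ or $a'\in U$. What remains is primeness. We prove $\vee$-primeness of $U$ using $\vee$-strongness; the statement that $D$ is $\wedge$-prime is the order dual and uses $\wedge$-strongness in the same way.

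Suppose $b\vee c\in U$. By roundness of $U$ there is $a\in U$ with $a\prec b\vee c$. By $\vee$-strongness (Definition~\ref{def:str-proximity-lattice}) there are $b'\prec b$ and $c'\prec c$ with $a\prec b'\vee c'$. We now apply tightness to the two relations $b'\prec b$ and $c'\prec c$:
\begin{itemize}
\item either $b\in U$ or $b'\in D$;
\item either $c\in U$ or $c'\in D$.
\end{itemize}
If $b\in U$ or $c\in U$ we are done. Otherwise $b',c'\in D$, and since $D$ is an ideal, $b'\vee c'\in D$.

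This last case must be excluded. From $a\prec b'\vee c'$ and $b'\vee c'\in D$, the downset property gives $a\in D$, because ${\prec}\subseteq{\le}$. Since also $a\in U$, this contradicts disjointness. Hence $b\in U$ or $c\in U$.

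Primeness also requires properness. We need $0\notin U$ and $1\notin D$. If $0\in U$, roundness gives some $a\prec 0$ with $a\in U$; then $a\le 0$, so $a=0$. Now $0\prec 0$ together with tightness, using $0\in U$, does not immediately help. Instead, note that $D$ is an ideal and hence contains $0$, contradicting disjointness. Dually, $U$ is a filter and contains $1$, so $1\notin D$. If the convention allows empty ideals or filters, the unit condition applied to $0\prec 0$ and $1\prec 1$ shows that both $D$ and $U$ are nonempty: for instance, $1\prec 1$ gives $1Lb\,R\,1$, and then $b=1$ puts $1\in U$, while $b=0$ puts $1\in D$, which together with the downset property forces $D=A$.

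This nondegeneracy bookkeeping is the only delicate point. The core step is the tightness argument combined with strongness; for that I would rely on the lattice-relation rules $0R0$ and $1R1$ of $R$, which give $1\in U$ and $0\in D$ directly.
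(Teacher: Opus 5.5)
Your proof is correct and follows essentially the same route as the paper: roundness, $\vee$-strongness, tightness applied to $b'\prec b$ and $c'\prec c$, then closure of $L0$ under joins and disjointness. The only difference is where the contradiction lands (the paper puts it at $b'\vee c'\in 1R$, you at $a\in L0$), which is immaterial; your properness paragraph, which the paper leaves implicit, is settled by $0\mathrel{L}0$ and $1\mathrel{R}1$ together with disjointness---but note it is $0\mathrel{L}0$, not $0\mathrel{R}0$, that gives $0\in L0$, and the unit-based detour cannot by itself exclude the degenerate pair $(A,\emptyset)$, so it is the $\DL$-relation axioms that do the work.
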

\begin{proof}
  Assume $a \vee b \in 1R$. By roundness, there exists $c \in 1R$ with $c \prec a \vee b$. By $\vee$-strongness, there exist $a', b'$ with $a' \prec a$, $b' \prec b$, and $c \prec a' \vee b'$, hence $a' \vee b'\in 1R$. Assume for a contradiction that $a \notin 1R$ and $b \notin 1R$. By tightness, $a'\in L0$ and $b'\in L0$. Since $L$ is a lattice relation, we have $a'\vee b'\in L0$. By disjointness, this contradicts $a'\vee b'\in 1R$. A similar argument shows that $L0$ is a prime ideal.
\end{proof}

The next proposition establishes a prime ideal theorem for strong proximity lattices. It shows that any partial cut can be extended to a tight cut.

\begin{proposition}\label{prop:prime-cut-theorem}
  In a strong proximity lattice $(A,\prec)$, if $I:A\looparrowright \twobb$ and $F:\twobb\looparrowright A$
  satisfy $F \then I \, \subseteq \, \mathord{\le}$, then there is an adjoint pair $(I^*,F^*):A\looparrowright \twobb$ so that $I\subseteq I^*$ and $F\subseteq F^*$.
\end{proposition}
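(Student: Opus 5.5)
We first unpack the hypothesis. By Proposition~\ref{prop:pointsarecuts}, an adjoint pair $(I^*,F^*)$ is a cut $(D,U)$: $D$ is a round ideal, $U$ a round filter, they are disjoint, and they are tight. The relations $I:A\looparrowright\twobb$ and $F:\twobb\looparrowright A$ are $\DL$-relations satisfying the weakening/roundness conditions, so $I_0:=\{a\mid aI0\}$ is a round ideal and $F_1:=\{a\mid 1Fa\}$ is a round filter. The condition $F\then I\subseteq{\le}$ says exactly that $1FaI0$ never happens, i.e.\ $I_0\cap F_1=\emptyset$. So we must extend a disjoint pair (round ideal, round filter) to a tight cut. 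Tightness is the only missing property, and it is the analogue of primeness, so the plan is a Zorn's lemma argument with a strongness-based maximality step.

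First, consider the poset $\mathcal P$ of pairs $(D,U)$ with $D$ a round ideal, $U$ a round filter, $D\cap U=\emptyset$, $I_0\subseteq D$, $F_1\subseteq U$, ordered componentwise. Unions of chains are again round ideals and round filters and remain disjoint, so Zorn gives a maximal $(D,U)$. Then define $I^*,F^*$ from $(D,U)$ as in the converse half of Proposition~\ref{prop:pointsarecuts}. It remains to show that a maximal pair is tight: if $a\prec a'$, then $a\in D$ or $a'\in U$.

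This is the main step. Suppose $a\prec a'$ with $a\notin D$ and $a'\notin U$. Interpolate: $a\prec c\prec c'\prec a'$. Let $D'$ be the round ideal generated by $D\cup\{a\}$, namely $\{x\mid x\prec d\vee a \text{ for some } d\in D\}$; this is round by interpolation and since $d\vee a\prec d'\vee c$ for suitable $d\prec d'$. Let $U'$ be the round filter generated by $U\cup\{a'\}$, namely $\{x\mid u\wedge a'\prec x\text{ for some } u\in U\}$; roundness here uses $\wedge$-strongness. By maximality both $D'$ and $U'$ must meet $U$, respectively $D$, or at least $D'\cap U\neq\emptyset$ (otherwise $(D',U)$ strictly extends), and likewise $D\cap U'\neq\emptyset$. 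So we get $u\in U$ and $d\in D$ with $u\prec d_1\vee a$, and we get $u_1\wedge a'\prec d$.

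Combine these. Set $\bar u=u\wedge u_1\in U$ and $\bar d=d\vee d_1\in D$. By weakening, $\bar u\prec \bar d\vee a\le \bar d\vee a'$ and $\bar u\wedge a'\prec \bar d$. The Gentzen cut rule (Proposition~\ref{prop:gentzen-cut}) then yields $\bar u\prec\bar d$, so $\bar d\in U$ because $U$ is an upset. This contradicts $D\cap U=\emptyset$.

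The delicate points are verifying that the generated ideal and filter are genuinely round and closed under the lattice operations, which needs interpolation together with strongness, and checking that the cut is applied with the right weakenings. I expect the roundness of the generated filter/ideal to be the main obstacle. If it fails, I would instead pass to the interiors $\{x\mid \exists y\in D'.\,x\prec y\}$, which are round by idempotence of $\prec$.
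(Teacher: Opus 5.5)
Your strategy is the paper's: Zorn's lemma on disjoint pairs (round ideal, round filter), then refuting non-tightness of a maximal pair by maximality plus the Gentzen cut rule. Your final combination step ($\bar u=u\wedge u_1$, $\bar d=d\vee d_1$, then the cut) is also correct. The gap is in the extension step. You generate $D'$ from $D\cup\{a\}$ and $U'$ from $U\cup\{a'\}$. However, $D'=\{x\mid \exists d\in D.\,x\prec d\vee a\}$ need not contain $a$, because $a\prec a$ generally fails, so $D'$ need not be a proper extension of $D$.

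Concretely, take $D=\{x\mid x\prec a\}$. This is a round ideal, and it does not contain $a$ unless $a$ is reflexive. For every $d\in D$ we have $d\le a$, so $d\vee a=a$ and hence $D'=D$. Dually, $U=\{x\mid a'\prec x\}$ gives $U'=U$. In that situation $(D',U)$ does not strictly extend $(D,U)$. Maximality then yields neither $D'\cap U\neq\emptyset$ nor $D\cap U'\neq\emptyset$, and the contradiction never materialises.

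The repair uses the interpolants $a\prec c\prec c'\prec a'$ that you chose but never used. Put $D'=\{x\mid \exists d\in D.\,x\prec d\vee c\}$ and $U'=\{x\mid \exists u\in U.\,u\wedge c'\prec x\}$.
\begin{itemize}
\item Taking $d=0$ gives $a\in D'\setminus D$, and taking $u=1$ gives $a'\in U'\setminus U$, so both extensions are proper.
\item Maximality now gives $u\prec d_1\vee c$ and $u_1\wedge c'\prec d$ with $u,u_1\in U$ and $d,d_1\in D$.
\item Since $c\le c'$, weakening gives $\bar u\prec \bar d\vee c'$ and $\bar u\wedge c'\prec\bar d$.
\item The Gentzen cut with cut element $c'$ then finishes exactly as you wrote.
\end{itemize}
The paper does the same thing with a single interpolant $c''$, where $a\prec c''\prec a'$, used for both extensions. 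One more correction: roundness of $D'$ and $U'$ needs only interpolation, not $\wedge$-strongness. Strongness enters solely through the cut rule (Proposition~\ref{prop:gentzen-cut}).
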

\begin{proof}
  Because composition preserves directed joins, Zorn's Lemma yields a maximal pair $(I^*,F^*)$ above $(I,F)$ satisfying $F^* \then I^* \, \subseteq \, \mathord{\le}$. We claim that ${\prec} \, \subseteq \, I^* \then F^*$, so that $(I^*,F^*)$ is an adjoint pair.

  Suppose $c\prec c'$, but towards a contradiction assume that $c \mathrel{(I^* \then F^*)} c'$ fails. In particular, then either $c \mathrel{I^*} 0$ or $1 \mathrel{F^*} c'$ fails.
  By interpolation, pick $c''$ so that $c\prec c''\prec c'$. So
  both $c'' \mathrel{I^*} 0$ and $1 \mathrel{F^*} c''$ fail.
  Now consider the ideal $I^! = I^* \vee \downarrow c''$, and the filter $F^! = F^* \vee \uparrow c''$, both being proper extensions.

  By maximality, $F^* \then I^! \,=\, \twobb\times\twobb \,=\, F^! \then I^*$.
  That is, there are elements $d$ and $e$ so that $1 \mathrel{F^*} d \mathrel{I^!} 0$ and $1 \mathrel{F^!} e \mathrel{I^*} 0$.
  The joins defining $I^!$ and $F^!$ mean that there are also elements $x$ and $y$ so that $d\prec x\vee c''$, $x \mathrel{I^*} 0$, $c''\wedge y\prec e$ and $1 \mathrel{F^*} y$. But then $1 \mathrel{F^*} d\wedge y$ and $x\vee e \mathrel{I^*} 0$. By Gentzen cut (Proposition~\ref{prop:gentzen-cut}) with cut element $c''$, we have $d\wedge y\prec x\vee e$. So $1 \mathrel{(F^* \then I^*)} 0$, contradicting $F^* \then I^* \, \subseteq \, \mathord{\le}$.
\end{proof}

\begin{remark}[Cuts vs.\ prime cuts]\label{rem:cuts-vs-prime-cuts}
The distinction between a general cut and a \dfn{prime cut} (ie, a round prime filter) arises from the type of adjoint pair $L\dashv R$ used to define them:
\begin{itemize}
\item General cuts arise from adjoint pairs $L\dashv R:(A,\prec)\to(\twobb,\le)$ of weakening relations. These give a notion of point analogous to Dedekind cuts: a partition of $A$ into a round lower part $L0$ and a round upper part $1R$ satisfying disjointness and tightness.
\item Prime cuts arise when $A$ is a strong proximity lattice and $L$ and $R$ are $\DL$-relations, see Proposition~\ref{prop:round-prime}.
\end{itemize}
For Stone and Priestley duality, points are given by $\DL$-morphisms $A\to\twobb$. For compact ordered spaces, points are given by left-adjoint $\ProxRel$-relations $(A,\prec)\looparrowright\twobb$.
\end{remark}

\section{Round filters}\label{sec:round-filters}

This section continues from the definition of round filters in Definition \ref{def:round-filter} and studies the notion of roundification. In terms of our example of Cantor space in Section~\ref{sec:cantor}, the aim is to understand when a quotient of a Priestley preorder identifies two prime filters. 

We recall that the dual of the proximity $\prec$ of a proximity lattice $A$ is the preorder $\sqsubseteq$ on $\Spec(A)$ given by 

$$F \sqsubseteq F' \ \Longleftrightarrow \ \forall a,a'\in A.\, (a\prec a' \ \& \ a\in F \ \Rightarrow \ a'\in F').$$

\begin{definition}\label{def:equiv-on-spec}
The equivalence relation $\equiv$ on $\Spec(A)$ is defined by ${\equiv} \,=\, {\sqsubseteq} \cap{\sqsupseteq}$, ie,
\[
\forall a,a'\in A.\; a\prec a' \ \Rightarrow\ (a\in F' \ \Rightarrow \ a'\in F)\ \& \ (a\in F \ \Rightarrow \ a'\in F').
\]
\end{definition}

The aim of the next few propositions is to show that two prime filters are equivalent if and only if they have the same roundification.

Recall from Definition~\ref{def:round-filter} that a round filter $F$ is a filter satisfying $a\in F\Rightarrow\exists a'\in F.\, a'\prec a$, or, in the notation of the next definition, $F=\Box F$.

\begin{definition}\label{def:box-roundification}
For a filter $F$ of a proximity lattice $A$, its \dfn{roundification} $\Box F$ is
\[
\Box F = \{a\in F \mid \exists a'\in F.\, a'\prec a\}.
\]
\end{definition}

$\Box$ is a comonad on the poset of filters:

\begin{proposition}\label{prop:box-is-comonad}
  $\Box F \subseteq F$, $\Box F \subseteq \Box\Box F$ 
\end{proposition}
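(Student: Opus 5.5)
The plan is to verify both inclusions directly from Definition~\ref{def:box-roundification}, using only that $\prec$ is idempotent, so that it interpolates. Write $\Box F=\{a\in F\mid \exists a'\in F.\,a'\prec a\}$.

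\emph{The inclusion $\Box F\subseteq F$.} This is immediate, since $\Box F$ is by definition a subset of $F$.

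\emph{The inclusion $\Box F\subseteq\Box\Box F$.} Take $a\in\Box F$. Then $a\in F$, and there is $a'\in F$ with $a'\prec a$. We must produce a witness inside $\Box F$, namely some $c\in\Box F$ with $c\prec a$; together with $a\in\Box F$ this gives $a\in\Box\Box F$. Since ${\prec}\then{\prec}={\prec}$, the relation interpolates, so we may pick $c$ with $a'\prec c\prec a$.

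It remains to check that $c\in\Box F$. First, $c\in F$: because ${\prec}\subseteq{\le}$ we have $a'\le c$, and $F$ is an upset containing $a'$. Second, $c$ has a $\prec$-predecessor in $F$, namely $a'$ itself. Hence $c\in\Box F$ and $c\prec a$, so $a\in\Box\Box F$.

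The only step needing care is choosing the interpolant so that the witness lies in $\Box F$ rather than merely in $F$. Interpolation supplies exactly this, so there is no real obstacle. If one also wants $\Box$ to be a comonad in the full sense, two further points are routine. Monotonicity of $\Box$ in $F$ follows directly from the definition. The reverse inclusion $\Box\Box F\subseteq\Box F$ follows from $\Box\Box F\subseteq\Box F$, which is the first inclusion applied to $\Box F$; this gives $\Box\Box F=\Box F$.

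If needed, one can also check that $\Box F$ is again a filter. It is an upset by weakening: if $a'\prec a\le b$ then $a'\prec b$. For closure under meets, take $a,b\in\Box F$ with witnesses $a'\prec a$ and $b'\prec b$. Then $a'\wedge b'\in F$ and $a'\wedge b'\prec a\wedge b$ by the meet rule of Remark~\ref{rem:prox-axioms}, so $a\wedge b\in\Box F$. Finally, $1\in\Box F$ because $1\prec 1$.
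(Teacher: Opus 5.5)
Your proof is correct and follows the paper's argument: you interpolate $a'\prec c\prec a$ and use $c$ as a witness in $\Box F$, which gives $a\in\Box\Box F$. You also make explicit a step the paper leaves implicit, namely that the interpolant lies in $F$ because ${\prec}\subseteq{\le}$ and $F$ is an upset.
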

\begin{proof}
  For $\Box F \subseteq \Box\Box F$, assume $a\in \Box F$. By definition $\exists a'\in F.\, a'\prec a \ \& \ a\in F$. By interpolation, we have $\exists a',a''\in F.\, a'\prec a''\prec a \in F$, which implies $\exists a''\in\Box F.\, a''\prec a \ \& \ a\in \Box F$, hence $a\in \Box\Box F$.
\end{proof}

$\Box F$ is the largest round filter contained in $F$:

\begin{proposition}\label{prop:box-is-filter}
  $\Box F$ is a filter.
\end{proposition}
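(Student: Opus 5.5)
We have to check that $\Box F$ is an up-set, is closed under binary meets, and contains $1$. Each check uses one of the proximity lattice axioms of Remark~\ref{rem:prox-axioms}, together with the fact that $F$ is itself a filter.

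\emph{Up-closure.} Suppose $a\in\Box F$ and $a\le b$. Then there is $a'\in F$ with $a'\prec a$. Weakening on the right gives $a'\prec b$. Since $F$ is an up-set, $b\in F$. Hence $b\in\Box F$.

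\emph{Top element.} We have $1\prec 1$ and $1\in F$, so the witness $a'=1$ shows $1\in\Box F$.

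\emph{Meets.} This is the only step that uses more than weakening. Suppose $a,b\in\Box F$, with witnesses $a',b'\in F$ satisfying $a'\prec a$ and $b'\prec b$. The meet rule for $\DL$-relations gives $a'\wedge b'\prec a\wedge b$. Because $F$ is a filter, $a'\wedge b'\in F$ and $a\wedge b\in F$. So $a'\wedge b'$ witnesses $a\wedge b\in\Box F$.

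If the paper's notion of filter requires properness, note that $\Box F\subseteq F$ by Proposition~\ref{prop:box-is-comonad}, so $\Box F$ is proper whenever $F$ is. Since $\Box$ preserves neither more nor less structure than this, no step is a real obstacle; the only substantive ingredient is that $\prec$ is compatible with $\wedge$.

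For the surrounding claim that $\Box F$ is the largest round filter contained in $F$, two further facts suffice. First, $\Box F$ is round: if $a\in\Box F$, then by Proposition~\ref{prop:box-is-comonad} $a\in\Box\Box F$, so there is $a''\in\Box F$ with $a''\prec a$. Second, $\Box F$ is maximal: if $G\subseteq F$ is round and $a\in G$, roundness gives $a'\in G\subseteq F$ with $a'\prec a$, so $a\in\Box F$.
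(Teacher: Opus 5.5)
Your proof is correct and follows the paper's argument: $1\prec 1$ for the top element, weakening together with up-closure of $F$ for up-closure, and the meet rule $a'\prec a,\ b'\prec b\Rightarrow a'\wedge b'\prec a\wedge b$ together with $F$ being a filter for closure under meets. Your extra remarks on roundness (via $\Box F\subseteq\Box\Box F$) and maximality are also correct, and they justify the paper's accompanying claim that $\Box F$ is the largest round filter contained in $F$.
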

\begin{proof}
\emph{Non-empty:} Since $1 \prec 1$ and $1 \in F$, we have $1 \in \Box F$. \emph{Upward closed:} Follows from $\Box F \subseteq F$ and $\prec$ being weakening. \emph{Closed under meets:} Follows from $a'\prec a \ \& \ b'\prec b \ \Rightarrow \ a'\wedge b'\prec a\wedge b$ and $F$ being a filter.
\end{proof}

Alongside the Priestley spectrum $\Spec(A)$ of prime filters of a distributive lattice $A$, we make the following definitions.
\begin{definition}\label{def:filt-rfilt-rpfilt}
For a lattice $A$, we denote by $\Filt(A)$ the lattice of filters of $A$.
For a proximity lattice $A$, we write $\RFilt(A)$ for the lattice of round filters and $\RPFilt(A)$ for the set of round prime filters.
\end{definition}

\begin{proposition}\label{prop:box-coreflection}
$\Box : \Filt(A) \to \RFilt(A)$ is right-adjoint to the inclusion $\RFilt(A) \to \Filt(A)$, which preserves finite meets.
\end{proposition}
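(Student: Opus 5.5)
We need to show two things: the coreflection $G\subseteq F \iff G\subseteq \Box F$ for $G$ a round filter and $F$ a filter, and that the inclusion $\RFilt(A)\to\Filt(A)$ preserves finite meets. Both are ordinary order-theoretic facts about posets of filters ordered by inclusion, so the adjunction is simply a Galois connection.

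\emph{Adjunction.} By Proposition~\ref{prop:box-is-filter}, $\Box F$ is a filter. Roundness of $\Box F$ comes from Proposition~\ref{prop:box-is-comonad}. If $a\in\Box F$, then $a\in\Box\Box F$, so there is $a''\in\Box F$ with $a''\prec a$. Hence $\Box F\in\RFilt(A)$ and $\Box$ is well-defined and clearly monotone.

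For the Galois connection, first suppose $G\subseteq\Box F$. Then $G\subseteq F$, since $\Box F\subseteq F$. Conversely, suppose $G\subseteq F$ with $G$ round, and take $a\in G$. Roundness gives $a'\in G\subseteq F$ with $a'\prec a$, and $a\in F$, so $a\in\Box F$. Equivalently, the unit $G\subseteq\Box G$ holds, indeed with equality, so $\Box$ restricted to $\RFilt(A)$ is the identity; the counit is $\Box F\subseteq F$.

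\emph{Finite meets.} In $\Filt(A)$ the top element is $A$ and binary meet is intersection. We check that these are round and are therefore also the meets in $\RFilt(A)$.

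The top $A$ is round because $1\prec 1$ and $\prec$ is weakening. Given $a\in A$, we have $1\prec 1$, and $1\le a$ gives $1\prec a$. So the witness is $1$.

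For $G_1,G_2$ round and $a\in G_1\cap G_2$, roundness gives $a_1\in G_1$ and $a_2\in G_2$ with $a_1\prec a$ and $a_2\prec a$. Then $a_1\vee a_2\prec a$ by the join rule of Remark~\ref{rem:prox-simplified-axioms}. Also $a_1\vee a_2\in G_1\cap G_2$, since filters are upward closed. So $G_1\cap G_2$ is round.

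Finally, because the inclusion is full and injective, the meet of two round filters in $\RFilt(A)$ is the largest round filter below both. That is $G_1\cap G_2$ itself, so the inclusion preserves finite meets.

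The only mildly delicate point is the choice of witness for roundness of intersections. Taking the meet $a_1\wedge a_2$ would not obviously lie in both filters in a useful way, whereas the join $a_1\vee a_2$ does, so I would use the join. Everything else is routine.
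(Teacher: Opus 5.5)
Your route is the paper's own. The adjunction comes from $\Box F\subseteq F$, $\Box F\subseteq\Box\Box F$ (so $\Box F$ is round) and $\Box G=G$ for round $G$. Closure of round filters under intersection uses the witness $a_1\vee a_2\prec a$, exactly as in the paper.

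There is one faulty step, in the nullary case, which the paper's proof does not spell out. To show that the top filter $A$ is round, you take the witness $1$ and claim $1\le a$. But $1\le a$ holds only when $a=1$, so weakening does not give $1\prec a$ for general $a$. Your opening justification ``because $1\prec 1$ and $\prec$ is weakening'' fails for the same reason. The fix is to use the bottom element instead: $0\prec 0$ and $0\le a$ give $0\prec a$ by weakening, and $0\in A$, so $A$ is round. With this correction your argument is complete, and it also covers preservation of the top element, which the paper needs later when it calls the inclusion a frame homomorphism.
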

\begin{proof}
The adjointness property follows from Propositions \ref{prop:box-is-comonad} and \ref{prop:box-is-filter} showing that $\Box$ is a comonad on $\Filt(A)$ factoring through $\RFilt(A)$. To show that the inclusion preserves finite meets, let $R,R'$ be two round filters. 
We have to show that the filter $R\cap R'$ is round. 
Let $a\in R\cap R'$. 
By roundness of $R$ and $R'$, there exist $b\in R$ with $b\prec a$ and $b'\in R'$ with $b'\prec a$. 
Then $b\vee b'\prec a$ by Remark~\ref{rem:prox-simplified-axioms}, and $b\vee b'\in R\cap R'$ since filters are upward closed.
\end{proof}

\begin{remark}\label{rem:box-locale}
It follows that the inclusion $\RFilt(A) \to \Filt(A)$ is a frame homomorphism and that $\Box$ is the corresponding locale map. In particular, the join and finite meet of round filters are given by their filter-join and filter-meet (intersection).
\end{remark}

\begin{corollary}\label{cor:box-preserves-meets}
  $\Box$ preserves finite meets: $\Box(F\cap F')=\Box F\cap\Box F'$. In particular, $\Box$ is an interior operator on the frame of filters.
\end{corollary}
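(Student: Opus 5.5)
We want $\Box(F\cap F')=\Box F\cap\Box F'$, and then that $\Box$ is an interior operator on the frame $\Filt(A)$. The quickest route is formal. By Proposition~\ref{prop:box-coreflection}, $\Box$ is right adjoint to the inclusion $\RFilt(A)\to\Filt(A)$, so it preserves all meets that exist in $\Filt(A)$. In particular it preserves binary meets. Now observe that the meet of two filters in $\Filt(A)$ is their intersection, and the meet of two round filters in $\RFilt(A)$ is also their intersection, since by Proposition~\ref{prop:box-coreflection} the inclusion preserves finite meets. Hence $\Box(F\cap F')$, the meet computed in $\RFilt(A)$ of $\Box F$ and $\Box F'$, equals $\Box F\cap\Box F'$ as sets. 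The empty meet is $A$ itself, and it is preserved because $\Box A=A$: every $a$ satisfies $1\prec 1\le a$, so $1\prec a$.

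As a sanity check I would also give the direct elementwise argument. Monotonicity of $\Box$ is immediate from the definition and gives $\Box(F\cap F')\subseteq\Box F\cap\Box F'$. For the converse, take $a\in\Box F\cap\Box F'$. Then there are $b\in F$ with $b\prec a$ and $b'\in F'$ with $b'\prec a$. By the join rule of Remark~\ref{rem:prox-simplified-axioms}, $b\vee b'\prec a$. Since filters are up-closed, $b\vee b'\in F\cap F'$, and $a\in F\cap F'$ as well. So $a\in\Box(F\cap F')$. This is exactly the computation already used for the inclusion part of Proposition~\ref{prop:box-coreflection}.

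For the interior operator claim, I combine three facts. Deflation, $\Box F\subseteq F$, and idempotence, $\Box\Box F=\Box F$, come from Proposition~\ref{prop:box-is-comonad} together with deflation applied to $\Box F$. Finite meets are preserved by the part just shown. An interior operator on a frame is precisely a monotone, deflationary, idempotent map preserving finite meets, so this completes the argument.

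The only point needing care is the first step. One must make sure that meets in $\RFilt(A)$ agree with intersections, so that the abstract "right adjoints preserve meets" really yields the set-theoretic equation. Proposition~\ref{prop:box-coreflection} supplies exactly this, so I expect no real obstacle.
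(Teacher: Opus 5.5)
Your route is essentially the paper's. The paper derives the corollary in one line from Proposition~\ref{prop:box-is-comonad} (comonad) and Proposition~\ref{prop:box-coreflection} ($\Box$ is right adjoint to an inclusion that preserves finite meets). Your elementwise check is exactly the roundness-of-intersection computation from the proof of Proposition~\ref{prop:box-coreflection}. The binary case and the interior-operator part are correct as you wrote them.

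There is one false step, in the empty-meet case. The claim $1\prec 1\le a$ fails unless $a=1$. In fact $1\prec a$ forces $a=1$, because ${\prec}\subseteq{\le}$. Use the bottom element instead: $0\prec 0\le a$ gives $0\prec a$ by weakening. So $0$ witnesses $a\in\Box A$, and hence $\Box A=A$.

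Equivalently, $A$ is itself a round filter, so it is the top of $\RFilt(A)$. This fact is also what your abstract ``right adjoints preserve the top'' argument needs in order to conclude $\Box A=A$ as sets.
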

\begin{proof}
$\Box$ is a comonad by Proposition~\ref{prop:box-is-comonad} and preserves finite meets by Proposition~\ref{prop:box-coreflection}.
\end{proof}

\begin{proposition}\label{prop:box-equiv}
$\Box F = \Box F' \ \Leftrightarrow \ F \equiv F'.$
\end{proposition}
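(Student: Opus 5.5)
The plan is to reduce the preorder $\sqsubseteq$ on $\Spec(A)$ to a statement about roundification. Everything then follows from the comonad properties of $\Box$ (Proposition~\ref{prop:box-is-comonad}) together with monotonicity of $\Box$. The central lemma I would establish first is
\[
F \sqsubseteq F' \ \Longleftrightarrow\ \Box F \subseteq F'
\]
for filters $F,F'$. Primeness should play no role in it.

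To prove the lemma, I would show that $\Box F$ is exactly the set $\{a' \mid \exists a\in F.\ a\prec a'\}$ of $\prec$-successors of elements of $F$. In one direction, if $a\in F$ and $a\prec a'$, then $a\le a'$ because ${\prec}\subseteq{\le}$. Hence $a'\in F$, since $F$ is an up-set, and so $a'\in\Box F$ by Definition~\ref{def:box-roundification}. The other inclusion is immediate from that definition. The condition defining $F\sqsubseteq F'$ says precisely that every $\prec$-successor of an element of $F$ lies in $F'$, that is, $\Box F\subseteq F'$. This identification is the only step with real content, and the point to get right is the direction of $\prec$ in the definition of $\sqsubseteq$ compared with the definition of $\Box$.

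Next I would record that $\Box$ is monotone, which is clear from its definition, and idempotent. For idempotence, Proposition~\ref{prop:box-is-comonad} gives $\Box F\subseteq F$ and $\Box F\subseteq\Box\Box F$; applying the first inclusion to $\Box F$ gives $\Box\Box F\subseteq\Box F$, so $\Box\Box F=\Box F$. Combining monotonicity and idempotence with the lemma upgrades it to
\[
F\sqsubseteq F' \ \Longleftrightarrow\ \Box F\subseteq \Box F'.
\]
For ($\Rightarrow$), $\Box F\subseteq F'$ gives $\Box F=\Box\Box F\subseteq \Box F'$. For ($\Leftarrow$), $\Box F\subseteq\Box F'\subseteq F'$.

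Finally I would apply this upgraded form in both directions. By Definition~\ref{def:equiv-on-spec}, $F\equiv F'$ means $F\sqsubseteq F'$ and $F'\sqsubseteq F$. By the displayed equivalence this holds iff $\Box F\subseteq\Box F'$ and $\Box F'\subseteq\Box F$, that is, iff $\Box F=\Box F'$. This proves Proposition~\ref{prop:box-equiv}, and in fact it holds for arbitrary filters, not only prime ones. I expect no genuine obstacle; the only care needed is in the bookkeeping of the first lemma.
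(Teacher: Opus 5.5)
Your proof is correct, and it is organised differently from the paper's.

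The paper argues by direct element-chasing. For ($\Rightarrow$) it observes that $a\in F$ and $a\prec a'$ give $a'\in\Box F=\Box F'\subseteq F'$. For ($\Leftarrow$) it takes $a\in\Box F$ with witness $a'\prec a$, interpolates $a'\prec a''\prec a$, and uses $F\sqsubseteq F'$ twice to place both $a$ and $a''$ in $F'$.

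You instead isolate the lemma $F\sqsubseteq F'\iff\Box F\subseteq F'$, that is, $\Box F$ is exactly the set of $\prec$-successors of elements of $F$. You then let idempotence and monotonicity of $\Box$ do the rest. This way the interpolation step is not repeated but inherited from Proposition~\ref{prop:box-is-comonad}. In the language of Proposition~\ref{prop:box-coreflection}, your upgrade $\Box F\subseteq F'\iff\Box F\subseteq\Box F'$ is just the adjunction transpose for the round filter $\Box F$.

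Your route buys the explicit one-sided statement $F\sqsubseteq F'\iff\Box F\subseteq\Box F'$, i.e.\ $\sqsubseteq$ is the preorder pulled back along $\Box$. This is precisely what Theorem~\ref{thm:spec-quotient-kohaus} later uses when it asserts without proof that $\phi$ is an order-isomorphism. The paper's argument also yields this, since each direction uses only one half of its hypothesis, but it does not record it. Your version also makes explicit a small step the paper elides in ($\Rightarrow$): that $a'\in F$ follows from ${\prec}\subseteq{\le}$ and upward closure.

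The paper's proof, in turn, is a single self-contained element argument that does not route through idempotence of $\Box$.
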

\begin{proof}
For $\Rightarrow$, assume $\Box F = \Box F'$. We show $F \sqsubseteq F'$ (and $F' \sqsubseteq F$ follows symmetrically). Suppose $a \prec a'$ and $a \in F$. We need $a' \in F'$. Since $a \in F$ and $a \prec a'$, we have $a' \in \Box F$ by definition. But $\Box F = \Box F'$, so $a' \in \Box F' \subseteq F'$.

For $\Leftarrow$, assume $F \equiv F'$. We show $\Box F \subseteq \Box F'$ (and the reverse inclusion follows symmetrically). Suppose $a \in \Box F$, so $a \in F$ and there exists $a' \in F$ with $a' \prec a$. By $F \sqsubseteq F'$, since $a' \prec a$ and $a' \in F$, we have $a \in F'$. It remains to find $b \in F'$ with $b \prec a$. By idempotence of $\prec$, from $a' \prec a$ we obtain $a'' \in A$ with $a' \prec a'' \prec a$. Since $a' \in F$ and $a' \prec a''$, by $F \sqsubseteq F'$ we have $a'' \in F'$. Thus $a'' \in F'$ and $a'' \prec a$, hence $a \in \Box F'$.
\end{proof}

The next proposition shows that for the proximity lattices that arise as duals of compact ordered spaces, the roundification of a prime filter is again a prime filter.

\begin{proposition}\label{prop:box-prime-strong}
  If $F$ is a prime filter of a $\vee$-strong proximity lattice $A$, then $\Box F$ is a prime filter.
\end{proposition}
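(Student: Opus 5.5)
By Proposition~\ref{prop:box-is-filter}, $\Box F$ is already a filter, so only primeness needs proving. I will take primeness to include properness, so there are two things to check: $0\notin\Box F$, and $a\vee b\in\Box F$ implies $a\in\Box F$ or $b\in\Box F$. Properness is immediate from $\Box F\subseteq F$ (Proposition~\ref{prop:box-is-comonad}) together with $0\notin F$, since $F$ is prime.

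For the join condition, suppose $a\vee b\in\Box F$. By Definition~\ref{def:box-roundification} there is $c\in F$ with $c\prec a\vee b$. Applying $\vee$-strongness (Definition~\ref{def:str-proximity-lattice}) gives $a'\prec a$ and $b'\prec b$ with $c\prec a'\vee b'$. Since ${\prec}\subseteq{\le}$, we get $c\le a'\vee b'$. Because $F$ is upward closed, $a'\vee b'\in F$, and since $F$ is prime, $a'\in F$ or $b'\in F$. Suppose $a'\in F$. Then $a\in F$ as well, because $a'\le a$. So $a\in F$ and $a'\in F$ with $a'\prec a$, which is exactly the condition $a\in\Box F$. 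The case $b'\in F$ is symmetric.

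This is essentially the argument of Proposition~\ref{prop:round-prime}. The only real step is invoking $\vee$-strongness to push the witness $c\prec a\vee b$ onto the two disjuncts separately. After that, primeness of $F$ and weakening of $\prec$ finish the argument. I do not expect any obstacle beyond getting this use of $\vee$-strongness right.
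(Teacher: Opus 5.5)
Your proof is correct and matches the paper's argument: $\vee$-strongness splits the witness $c\prec a\vee b$ into $a'\prec a$ and $b'\prec b$ with $a'\vee b'\in F$, and primeness of $F$ selects one disjunct. The only additions are details the paper leaves implicit, namely properness of $\Box F$ (from $\Box F\subseteq F$) and the check that $a\in F$ (from $a'\le a$).
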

\begin{proof}
Assume $a \vee b \in \Box F$. Then $a \vee b \in F$ and there exists $c \in F$ with $c \prec a \vee b$. By $\vee$-strongness, there exist $a', b'$ with $a' \prec a$, $b' \prec b$, and $c \prec a' \vee b'$. Since $c \in F$ and $F$ is upward closed, we have $a' \vee b' \in F$. Since $F$ is prime, either $a' \in F$ or $b' \in F$. If $a' \in F$, then $a \in \Box F$. Similarly if $b' \in F$, then $b \in \Box F$.
\end{proof}

\begin{example}\label{ex:encoding-unit-interval-proximity-lattice}
  Continuing from Example~\ref{ex:unit-interval-priestley-preorder}, we recall that Figure~\ref{fig:proximity-lattice-unit-interval} gives an encoding of the proximity lattice of the unit interval with its natural topology and order. The join-irreducibles are on the red lines of the top picture. The black point in the middle is the join of the two blue points. In the coordinate-wise encoding, the join is computed by taking the maximum in each coordinate. 
  
  Since the points $(x,x)$, $0 < x \le 1$, and $(0,y)$, $0 < y \le 1$, are join-irreducibles, their principal upsets, $F_x = \{ (x',y') \in A \mid x \le x', x \le y' \}$ and $G_y = \{ (x',y') \in A \mid y \le y', x \le y' \}$, respectively, are prime filters. 

  Using the definition of $\prec$ depicted in the middle picture, it is not difficult to check that $F_x$ and $G_x$ have the same roundification, that is, $\Box F_x = \Box G_x$, and, therefore, are identified when quotienting by the Priestley preorder dual to $\prec$. This quotient gives back the unit interval.
\end{example}

\begin{proposition}\label{prop:equiv-closed}
The equivalence relation $\equiv$ is closed in $\Spec(A) \times \Spec(A)$.
\end{proposition}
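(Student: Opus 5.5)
We show directly that $\sqsubseteq$ is closed in $\Spec(A)\times\Spec(A)$. Then ${\equiv}={\sqsubseteq}\cap{\sqsupseteq}$ is closed, since ${\sqsupseteq}$ is the image of ${\sqsubseteq}$ under the homeomorphism swapping the two coordinates.

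Recall that the Priestley topology on $\Spec(A)$ has as subbasic clopens the sets $\widehat a=\{F\mid a\in F\}$ and their complements. By the formula recalled at the start of this section,
\[
{\sqsubseteq} \;=\; \bigcap_{a\prec a'} \Big( \big(\Spec(A)\setminus\widehat a\big)\times\Spec(A) \;\cup\; \Spec(A)\times \widehat{a'} \Big).
\]
For each pair $a\prec a'$, the set in brackets is a finite union of products of clopens, hence clopen in the product. An arbitrary intersection of closed sets is closed, so $\sqsubseteq$ is closed.

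Alternatively, we can argue abstractly. Under Theorem~\ref{thm:spl-pri-dl}, the preorder dual to $\prec$ is a $\Pri$-relation, and $\Pri$-relations are topologically closed. The point of the computation above is that this does not depend on the duality machinery.

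No step here is a genuine obstacle. The only thing to check is that $\sqsubseteq$ on $\Spec(A)$ is exactly the dual relation $\twobb^{\prec}$, so that the displayed intersection describes it. That holds by definition, since $F\sqsubseteq F'$ says precisely that $a\prec a'$ and $a\in F$ imply $a'\in F'$.
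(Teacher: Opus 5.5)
Your proof is correct and is essentially the paper's argument. The paper shows that the complement of $\sqsubseteq$ is the union over $a\prec a'$ of the basic opens $\{F\mid a\in F\}\times\{F'\mid a'\notin F'\}$, which is exactly the De Morgan dual of your intersection of clopens, and it then closes the argument the same way via ${\equiv}={\sqsubseteq}\cap{\sqsupseteq}$.
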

\begin{proof}
Since ${\equiv} \,=\, {\sqsubseteq} \cap{\sqsupseteq}$ and the inverse of a closed relation is closed, it suffices to show that $\sqsubseteq$ is closed. We show that the complement of $\sqsubseteq$ is open:
\begin{align*}
\{(F, F') \mid F \not\sqsubseteq F'\} &= \{(F, F') \mid \exists a, a'\in A.\, a \prec a' \,\&\, a \in F \,\&\, a' \notin F'\} \\
&= \bigcup_{a \prec a'} \bigl(\{F \mid a \in F\} \times \{F' \mid a' \notin F'\}\bigr)
\end{align*}
Each set $\{F \mid a \in F\} \times \{F' \mid a' \notin F'\}$ is open in the product of Priestley topologies (as a product of basic opens). Hence the complement of $\sqsubseteq$ is a union of opens, so $\sqsubseteq$ is closed.
\end{proof}

The prime round filters of a proximity lattice are the prime filters quotiented by the dual of the proximity:

\begin{theorem}\label{thm:spec-quotient-kohaus}
  Let $(A,\prec)$ be a strong proximity lattice and $(\Spec(A),\sqsubseteq)$ the dual Priestley preorder. Then $\Spec(A)/{\sqsubseteq}$ (Definition~\ref{def:quotient-functor}) is a compact ordered space, isomorphic (as an ordered topological space) to $\RPFilt(A)$ (Definition~\ref{def:filt-rfilt-rpfilt}). Moreover, the quotient map $q: \Spec(A) \to \Spec(A)/{\sqsubseteq}$ is continuous and monotone.
\end{theorem}
\begin{proof} 
  We show that the quotient map $q: \Spec(A) \to \Spec(A)/{\sqsubseteq}$ factors through an isomorphism with the space of round prime filters.

  Define $\phi: \Spec(A)/{\sqsubseteq} \to \RPFilt(A)$ by $\phi([F]) = \Box F$. This is well-defined and injective by Proposition~\ref{prop:box-equiv}. It is bijective onto the round prime filters because $\Box F$ is prime (Proposition~\ref{prop:box-prime-strong}) and round (Proposition~\ref{prop:box-is-comonad}), and every round filter equals its roundification (Definition~\ref{def:box-roundification}). 
  We equip $\RPFilt(A)$ with the subbasis $\{G \in \RPFilt(A) \mid a \in G\}$, $a \in A$,  and order $G \le G'$ iff $G \subseteq G'$. 
  
  The composite $\phi \circ q$ is continuous: The inverse image of $\{G \in \RPFilt(A) \mid a \in G\}$ is $\{F \in \Spec(A) \mid a \in \Box F\}$. Since $a \in \Box F$ iff $\exists a' \in F.\, a' \prec a$ and  since $a' \prec a\ \Rightarrow \ a' \le a$, we have
  \[
  \{F \in \Spec(A) \mid a \in \Box F\} = \bigcup_{a' \prec a} \{F \in \Spec(A) \mid a' \in F\}
  \]
  which is a union of basic Priestley opens, hence open. By the universal property of the quotient topology, $\phi$ is continuous.
  
  Define $[F] \sqsubseteq [F']$ in $\Spec(A)/{\sqsubseteq}$ iff $F \sqsubseteq F'$. This makes the quotient $q$ monotone. 
  
  $\phi$ is an order-isomorphism because $[F] \le [F']$ iff $\Box F \subseteq \Box F'$.
  
The space $\Spec(A)/{\sqsubseteq}$ is compact as the continuous image of a compact space. It is Hausdorff because ${\equiv} \,=\, {\sqsubseteq} \cap{\sqsupseteq}$ is a closed equivalence relation (Proposition~\ref{prop:equiv-closed}) on the compact Hausdorff space $\Spec(A)$. The order is closed since $\sqsubseteq$ is closed. 
\end{proof}

We show that every compact ordered space is recovered from the quotient of its cover. Under the inclusion order of Theorem~\ref{thm:spec-quotient-kohaus}, the recovered ordered space is the order dual $X^o$; equivalently, equipping round prime filters with reverse inclusion recovers $X$ itself.

\begin{definition}\label{def:cover-of-kohaus}
The \dfn{cover} of a compact ordered space $X$ is the Priestley preorder $(\Spec(A_X),\sqsubseteq)$ dual to the proximity lattice $(A_X,\prec)$ of $X$ (Definition~\ref{def:kohaus-proximity-lattice}).
\end{definition}

Just as a Priestley space is recovered by sending each point to the (prime) filter of clopen down-sets that contain it \cite[Thm.~11.24]{DaveyPriestley2002}, a compact ordered space is recovered by sending each point to the round prime filter of tokens whose lower open contains the point.
\begin{theorem}\label{thm:cover-recovers-kohaus}
Let $X$ be a compact ordered space and $(\Spec(A_X),\sqsubseteq)$ its cover. Then, as ordered topological spaces,
\[
  \Spec(A_X)/{\sqsubseteq} \; \;\cong\; \; \RPFilt(A_X) \; \;\cong\; X^o,
\]
where $\RPFilt(A_X)$ carries the inclusion order of Theorem~\ref{thm:spec-quotient-kohaus}.
\end{theorem}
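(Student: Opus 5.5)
The first isomorphism is Theorem~\ref{thm:spec-quotient-kohaus}, applied to $A_X$, which is strong by Proposition~\ref{prop:kohaus-prox-strong}. So it remains to show $\RPFilt(A_X)\cong X^o$ as ordered topological spaces. I would define
\[
\psi:X\to\RPFilt(A_X),\qquad \psi(x)=\{(d,u)\in A_X\mid x\in d\}.
\]
The plan is to show that $\psi$ is a continuous bijection that is order-reversing in both directions. Then $\psi$ is a homeomorphism by compact-to-Hausdorff, and an order-isomorphism onto $X^o$.

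\textbf{$\psi(x)$ is a round prime filter.}
\begin{itemize}
\item Filter: membership is monotone, since $d\subseteq d'$. Meets are componentwise intersections on the $d$-coordinate. The top element $(X,\emptyset)$ belongs to $\psi(x)$.
\item Prime: $(d,u)\vee(d',u')=(d\cup d',u\cap u')$, and $x\in d\cup d'$ forces $x\in d$ or $x\in d'$.
\item Round: given $x\in d$, separate the closed up-set ${\uparrow}x$ from the closed down-set $X\setminus d$ by Nachbin separation~\eqref{eq:nachbin-separation}. This yields a lower open $d'\ni x$ and an upper open $u'\supseteq X\setminus d$ with $d'\cap u'=\emptyset$. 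Then $(d',u')\prec(d,u)$, because $u'\cup d=X$. (Here I use that ${\uparrow}x$ is closed because the order is closed.)
\end{itemize}

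\textbf{Order and injectivity.} If $x\le y$, then every lower open containing $y$ contains $x$, so $\psi(y)\subseteq\psi(x)$. Conversely, if $x\not\le y$, separate ${\uparrow}x$ from ${\downarrow}y$ by Nachbin separation. This gives a lower open $d\ni y$ with $x\notin d$, paired with a disjoint upper open, so $\psi(y)\not\subseteq\psi(x)$. Hence $x\le y$ iff $\psi(y)\subseteq\psi(x)$, which in particular gives injectivity.

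\textbf{Continuity.} The preimage of the subbasic open $\{G\mid (d,u)\in G\}$ is exactly $d$, which is open.

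\textbf{Surjectivity (the main obstacle).} Let $G$ be a round prime filter.
\begin{enumerate}
\item Consider $C=\bigcap\{\overline{d'}\mid (d',u')\in G\}$. For $(d',u')\in G$ we have $d'\ne\emptyset$, since otherwise $(d',u')\le(\emptyset,X)$ and the filter property would force $0\in G$; that is excluded once prime filters are taken to be proper. Since the $d'$ are closed under finite intersection and nonempty, compactness makes $C$ nonempty.
\item Roundness gives $(d',u')\prec(d,u)$ with $u'\cup d=X$ and $d'\cap u'=\emptyset$, hence $\overline{d'}\subseteq X\setminus u'\subseteq d$. Thus $C\subseteq d$ for every $(d,u)\in G$, and so $G\subseteq\psi(x)$ for every $x\in C$.
\item For the reverse inclusion, fix $x\in C$ and let $(d,u)\in\psi(x)$. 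By roundness of $\psi(x)$, choose $(d_0,u_0)\prec(d,u)$ with $x\in d_0$. The pair $(X\setminus\overline{d_0}... )$ route does not work directly, so instead I would use primeness: $(d,u)\vee(\emptyset,u_0)$ … more cleanly, note $u_0\cup d=X$, so $(d,\emptyset)\vee(\emptyset,u_0)$-style tokens cover the top. Concretely, find a token $t\in G$ whose lower part is contained in $d\cup u_0$ and decompose it as a join $t_1\vee t_2$ with $t_1\le(d,u)$ and $t_2$ having lower part inside $u_0$. By primeness, $t_1\in G$ or $t_2\in G$. If $t_2\in G$, then $C\subseteq\overline{\text{lower part of }t_2}\subseteq\overline{u_0}$, which can be arranged to avoid $x$ (using $x\in d_0$ and $d_0\cap u_0=\emptyset$, after shrinking once more via roundness/interpolation). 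That contradicts $x\in C$, so $t_1\in G$ and hence $(d,u)\in G$.
\end{enumerate}
I expect the bookkeeping in step 3 (choosing the right decomposition and using $\vee$-strongness/interpolation to separate $x$ from $\overline{u_0}$) to be the hardest part. It may be cleaner to argue by contradiction: suppose $\psi(x)\not\subseteq G$ and use Proposition~\ref{prop:prime-cut-theorem}-style maximality to reach a contradiction with $x\in C$.
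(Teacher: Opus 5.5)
Your setup matches the paper: the same map $\psi=\eta$, and the same filter, prime, continuity and order arguments. Your $C$ coincides with the paper's $S=\bigcap\{X\setminus u\mid (d,u)\in G\}$ by roundness, so in particular $C$ is a closed down-set. The genuine gap is step 3 of surjectivity. There you fix an \emph{arbitrary} $x\in C$ and try to show $\psi(x)\subseteq G$, and this is false in general. You have already shown $G\subseteq\psi(x)$ for every $x\in C$ and that $\psi$ is injective, so the claim would force $C$ to be a single point. But if $G=\psi(x_0)$ then $C={\downarrow}x_0$. Concretely, take $X=[0,1]$ and $G=\psi(1/2)$. Then $C=[0,1/2]$, and the token $([0,1/4),(1/4,1])$ lies in $\psi(0)\setminus G$. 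So no choice of decomposition $t_1\vee t_2$ can rescue step 3. In fact, since $u_0\cup d=X$, your condition on $t$ is vacuous, and the required splitting into a part below $(d,u)$ and a part whose lower set sits inside the up-set $u_0$ is unjustified.

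The missing idea is to select the right point of $C$, namely its greatest element. The paper does this in three steps.

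(i) If $C\subseteq d$ for a token $(d,u)$, then $(d,u)\in G$. The compact set $X\setminus d$ is covered by the upper parts of tokens in $G$, so finitely many suffice. Their meet $(d_*,u_*)$ lies in $G$ and satisfies $u_*\cup d=X$, that is, $(d_*,u_*)\prec(d,u)$; since $G$ is an up-set, $(d,u)\in G$.

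(ii) Using (i), primeness and Nachbin separation, $C$ is an irreducible closed down-set. Suppose $C\subseteq A\cup B$ with $A,B$ closed down-sets, $y\in C\setminus A$ and $z\in C\setminus B$. Separate $A$ from ${\uparrow}y$ and $B$ from ${\uparrow}z$ to get tokens $(d,\emptyset)$ and $(d',\emptyset)$ with $y\notin d$, $z\notin d'$ and $C\subseteq d\cup d'$. By (i) their join lies in $G$, so by primeness one of them does, contradicting your step 2.

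(iii) An irreducible closed down-set of a compact ordered space is ${\downarrow}x$ for a unique $x$. This follows via maximal elements, as in the paper, or via sobriety of the topology of open up-sets.

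For this particular $x$, if $(d,u)\in\psi(x)$ then ${\downarrow}x\subseteq d$, so $(d,u)\in G$ by (i), and hence $\psi(x)=G$.

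Two smaller slips are easy to repair:
\begin{enumerate}
\item In the roundness step you must separate the closed \emph{down}-set ${\downarrow}x$ from the closed \emph{up}-set $X\setminus d$. As written, ${\uparrow}x$ may meet $X\setminus d$.
\item The inequality $(\emptyset,u')\le(\emptyset,X)$ fails unless $u'=X$, since $(\emptyset,X)$ is the bottom. Instead, note that any token $\prec(\emptyset,u')$ has upper part $X$ and hence equals $0$, so roundness of the proper filter $G$ excludes $d'=\emptyset$.
\end{enumerate}
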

\begin{proof}
The first isomorphism is Theorem~\ref{thm:spec-quotient-kohaus}, using that $A_X$ is strong (Proposition~\ref{prop:kohaus-prox-strong}). For the second, define $\eta:X\to\RPFilt(A_X)$ by
\[
  \eta(x) \;=\; \bigl\{\,(d,u)\in A_X \mid x\in d\,\bigr\}.
\]
We verify that $\eta(x)$ is a round prime filter, that $\eta$ is a homeomorphism, and that $\eta(x)\subseteq\eta(y)$ if and only if $y\le x$.

\emph{Filter.} The top $1=(X,\emptyset)$ lies in $\eta(x)$. If $(d,u)\le(d',u')$ and $x\in d$, then $x\in d'$. If $x\in d$ and $x\in d'$, then $x\in d\cap d'$, and $(d\cap d',\,u\cup u')$ is a token, so $\eta(x)$ is closed under binary meets.

\emph{Round.} Let $(d,u)\in\eta(x)$, so $x\in d$. The complement $L:=X\setminus d$ is a closed up-set with ${\downarrow} x\cap L=\emptyset$. By Nachbin separation~\eqref{eq:nachbin-separation} there is a token $(d',u')$ with ${\downarrow} x\subseteq d'$, $L\subseteq u'$, and $d'\cap u'=\emptyset$. Then $(d',u')\in\eta(x)$ and $u'\cup d\supseteq L\cup d=X$, so $(d',u')\prec(d,u)$.

\emph{Prime.} If $(d,u)\vee(d',u')=(d\cup d',\,u\cap u')$ lies in $\eta(x)$, then $x\in d\cup d'$, so one of the two tokens lies in $\eta(x)$.

\emph{Continuity.} The preimage of a subbasic open $\{G\mid (d,u)\in G\}$ is $\{x\mid x\in d\}=d$, which is open.

\emph{Order-embedding.} Lower opens are down-sets, so $x\le y$ implies $\eta(y)\subseteq\eta(x)$. Conversely, if $y\not\le x$, then ${\downarrow} x\cap{\uparrow} y=\emptyset$; Nachbin separation~\eqref{eq:nachbin-separation} yields a token $(d,u)$ with $x\in d$ and $y\in u$, hence $(d,u)\in\eta(x)\setminus\eta(y)$. Thus $\eta(x)\subseteq\eta(y)$ if and only if $y\le x$, so $\eta$ is injective and order-reversing.

\emph{Surjectivity.} Let $G$ be a round prime filter of $A_X$, and set
\[
  S \;=\; \bigcap\bigl\{\,X\setminus u \mid \exists d.\;(d,u)\in G\,\bigr\}.
\]
Each $X\setminus u$ is a closed down-set; the family is closed under finite intersections (meets of tokens join the upper components) and contains $X$ (since $1\in G$). Compactness gives $S\neq\emptyset$. Roundness puts $S$ inside every lower open of a token in $G$: if $(d,u)\in G$, choose $(d',u')\in G$ with $(d',u')\prec(d,u)$, whence $S\subseteq X\setminus u'\subseteq d$.
Thus $G\subseteq\eta(x)$ for every $x\in S$.

Moreover, if $S\subseteq d$ for a token $(d,u)$, then $(d,u)\in G$: finitely many of the closed sets $X\setminus u_i$ already miss $X\setminus d$, their meet $(d_*,u_*)$ lies in $G$ and satisfies $(d_*,u_*)\prec(d,u)$, and $G$ is an upset. 
In particular, $S$ is an irreducible closed down-set: if $S\subseteq A\cup B$ for closed down-sets $A,B$ with neither containing $S$, separate ${\uparrow} y$ from $A$ and ${\uparrow} z$ from $B$ for $y\in S\setminus A$ and $z\in S\setminus B$ to obtain tokens $(d,\emptyset)$ and $(d',\emptyset)$ with $S\subseteq d\cup d'$, whence $(d\cup d',\,\emptyset)\in G$ and primeness forces a contradiction. 
Every nonempty closed subset of a compact ordered space has a maximal element \cite[Prop.~VI-5.3]{GierzEtAl}, so irreducibility yields a unique maximal $x\in S$ with $S={\downarrow} x$.
Then $G\subseteq\eta(x)$, and if $x\in d$ then $S\subseteq d$, so $(d,u)\in G$. Thus $\eta(x)=G$.

Since $X$ is compact and $\RPFilt(A_X)$ is Hausdorff, the continuous bijection $\eta$ is a homeomorphism. Combined with the order calculation, $\eta$ exhibits $X^o\cong\RPFilt(A_X)$ as ordered topological spaces.
\end{proof}
\begin{remark}\label{rem:abbadini-parallel-eta}
The surjectivity argument for $\eta$ in the proof above is similar to Abbadini's \cite[Thm.~5.2, Prop.~5.28]{Abbadini2020}, which recovers points from morphisms into $[0,1]$.
\end{remark}

\section{Adjoint relations in Priestley preorders}\label{sec:adjoint-relations-in-priestley-preorders}

Recall from Section~\ref{sec:dualities-relations} that $\Spl_\ge(\dRel(\Pri))$ has as objects Priestley preorders $(X,\sqsubseteq)$, that is, $X$ is a Priestley space and ${\sqsubseteq} : X \looparrowright X$ is an idempotent $\Pri$-relation above the identity. Functional morphisms (arrows) are Priestley morphisms that preserve the preorder, and relational morphisms (proarrows) are $\Pri$-relations $S : X \looparrowright X'$ satisfying ${\sqsubseteq} \then S \then {\sqsubseteq}' \,=\, S$. 

In this section, we are interested in the order-enriched category of relational morphisms in $\Spl_\ge(\dRel(\Pri))$ that have a right adjoint. This category will turn out to be equivalent to the category $\KOHaus$ of compact ordered spaces, extending Theorem~\ref{thm:cover-recovers-kohaus} (every $\KOHaus$ is covered by a $\PriPre$) and Proposition~\ref{prop:quotient-to-kohaus} (every $\PriPre$ quotients to a $\KOHaus$).

\begin{definition}\label{def:map-spl-pri}
  The order-enriched category $\Map(\Spl_\ge(\dRel(\Pri)))$ has objects Priestley preorders $(X,\sqsubseteq_X)$, $(Y,\sqsubseteq_Y)$, etc., and arrows $(X,\sqsubseteq_X) \to (Y,\sqsubseteq_Y)$ that are left-adjoint relations (proarrows) in $\Spl_\ge(\dRel(\Pri))$. Relations are ordered by inclusion.
  In general, if $\mathbb{C}$ is a double category of relations, $\Map(\mathbb{C})$ denotes the order-enriched category of \emph{maps} in $\mathbb{C}$: its arrows are the left-adjoint proarrows of $\mathbb{C}$.
\end{definition}

\begin{remark}
  We spell out the conditions on an arrow $L:(X,\sqsubseteq_X) \to (Y,\sqsubseteq_Y)$ in $\Map(\Spl_\ge(\dRel(\Pri)))$. 
  \begin{enumerate}
    \item Weakening with respect to identities: ${\le_X} \then L \then {\le_Y} = L$.
    \item Priestley relation: $L$ is a closed subspace of $X \times Y$.
    \item Weakening with respect to idempotents: ${\sqsubseteq_X} \then L \then {\sqsubseteq_Y} \,=\, L$.
    \item \dfn{$L$ has a right adjoint} $R \subseteq Y \times X$ satisfying the laws 1--3 above as well as the adjunction laws: for all $x,x' \in X$ and $y,y' \in Y$,
    \begin{align*}
      x \sqsubseteq_X x' \ &\Rightarrow \ \exists y \in Y.\, x \, L \, y \ \wedge \ y \, R \, x' & \text{(Unit)}\\
      \exists x \in X.\, y \, R \, x \, L \, y' \ &\Rightarrow \ y \sqsubseteq_Y y' & \text{(Counit)}
    \end{align*} 
  \end{enumerate}
\end{remark}

The quotient construction of Definition~\ref{def:quotient-functor} assigns to each Priestley preorder $(X,\sqsubseteq_X)$ the compact ordered space $X/{\sqsubseteq_X}$. 

The main result of this section is that morphisms in $\Map(\Spl_\ge(\dRel(\Pri)))$ correspond exactly to continuous monotone maps between these quotients.

\begin{theorem}\label{thm:map-kohaus-pri}
  Let $(X,\sqsubseteq_X)$ and $(Y,\sqsubseteq_Y)$ be Priestley preorders. There is a bijection between
  \begin{enumerate}
    \item arrows $(L \dashv R):(X,\sqsubseteq_X) \to (Y,\sqsubseteq_Y)$ in $\Map(\Spl_\ge(\dRel(\Pri)))$, and
    \item continuous monotone maps $f: X/{\sqsubseteq_X} \to Y/{\sqsubseteq_Y}$ in $\KOHaus$.
  \end{enumerate}
\end{theorem}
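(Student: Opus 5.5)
Write $q_X:X\to X/{\sqsubseteq_X}$ and $q_Y:Y\to Y/{\sqsubseteq_Y}$ for the quotient maps of Proposition~\ref{prop:quotient-to-kohaus}. Both are continuous, monotone and surjective, and $x\sqsubseteq_X x'$ iff $q_X x\le q_X x'$. The plan is to show that the adjunction $L\dashv R$ in $\Spl_\ge(\dRel(\Pri))$ is nothing but the companion/conjoint pair of a map between the quotients, pulled back along $q_X,q_Y$.

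\textbf{From maps to adjoint pairs.} Given a continuous monotone $f: X/{\sqsubseteq_X}\to Y/{\sqsubseteq_Y}$, define
\[ x\,L_f\,y \iff f(q_X x)\le q_Y y,\qquad y\,R_f\,x \iff q_Y y\le f(q_X x).\]
Both are preimages of the closed order of $Y/{\sqsubseteq_Y}$ under continuous maps $X\times Y\to (Y/{\sqsubseteq_Y})^2$, so they are closed. They are weakening for $\le$ and for $\sqsubseteq$ because $q$ is monotone, $\le\subseteq\sqsubseteq$, and $f$ is monotone. This gives ${\sqsubseteq_X}\then L_f\then{\sqsubseteq_Y}=L_f$ in one direction; the other inclusion uses reflexivity of $\sqsubseteq$.

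For (Unit), let $x\sqsubseteq_X x'$. Pick $y$ with $q_Y y=f(q_X x)$, using surjectivity of $q_Y$. Then $x L_f y$ holds, and $y R_f x'$ holds because $f$ is monotone. (Counit) is transitivity of the order on $Y/{\sqsubseteq_Y}$. Hence $L_f$ is an arrow of $\Map(\Spl_\ge(\dRel(\Pri)))$.

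\textbf{From adjoint pairs to maps.} Conversely, let $L\dashv R$ be given.
\begin{itemize}
\item For each $x$, (Unit) applied to $x\sqsubseteq_X x$ yields some $y_x$ with $x L y_x R x$.
\item Any two such witnesses satisfy $y_x\sqsubseteq_Y y'_x$ by (Counit) applied to $y_x R x L y'_x$. So the class $[y_x]$ is uniquely determined.
\item Set $f([x])=[y_x]$. To see that this is well defined and monotone, let $x\sqsubseteq x'$. Then $y_x R x\sqsubseteq x' L y_{x'}$, and $\sqsubseteq$-weakening of $R$ gives $y_x R x'$. Counit then yields $y_x\sqsubseteq y_{x'}$.
\item Next I show $L=L_f$ and $R=R_f$, where $L_f,R_f$ are built from this $f$ as in the first part.
  \begin{itemize}
  \item If $x L y$, then $y_x R x L y$, so $y_x\sqsubseteq y$ by (Counit).
  \item Conversely, $x L y_x\sqsubseteq y$ gives $x L y$ by weakening.
  \item The same two arguments apply to $R$.
  \end{itemize}
\item Since adjoints are unique and the equality $L=L_{f}$ shows $f$ is determined by $L$, the two constructions are mutually inverse. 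Applying the second construction to $L_f$ returns $f$ because $q_Y$ is surjective.
\end{itemize}

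\textbf{Main obstacle: continuity of $f$.} I expect this to be the hardest step, and I would prove it via a closed graph argument. The graph of $f$ is
\[\{([x],[y]) \mid x\,L\,y \ \wedge\ y\,R\,x\}\]
by the identification above. This is the image under $q_X\times q_Y$ of the closed, hence compact, set $L\cap R^{-1}\subseteq X\times Y$. It is therefore compact, hence closed in the Hausdorff space $X/{\sqsubseteq_X}\times Y/{\sqsubseteq_Y}$. A function between compact Hausdorff spaces with closed graph is continuous.

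The subtle point is to check that the image really is the graph. Every pair in the graph lifts to some $(x,y_x)\in L\cap R^{-1}$. Conversely, any pair $(x,y)$ in $L\cap R^{-1}$ has $y\equiv y_x$: the argument of the previous step gives both $y_x\sqsubseteq y$ and $y\sqsubseteq y_x$.
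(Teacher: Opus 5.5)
Your proof is correct. It differs from the paper's at the one substantive step, continuity of $f$.

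The paper argues via preimages. For a $\sqsubseteq_Y$-lower set $D$ it identifies $f^{-1}(D)$ with $L^\circ[D]$, and for a $\sqsubseteq_Y$-upper set $U$ it identifies $f^{-1}(U)$ with $R[U]$. Closed relations between compact spaces send closed sets to closed sets, so these preimages are closed. The paper then appeals to the fact that closed lower sets and closed upper sets form a subbasis of closed sets of a compact ordered space.

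Your closed-graph argument instead uses only compactness of $L\cap R^{-1}$, compact Hausdorffness of the quotients, and the closed graph theorem. It is shorter and avoids the subbasis fact, which the paper invokes without proof. The cost is that you use $L$ and $R$ jointly and depend essentially on Hausdorffness. The paper's route shows separately that $L$ controls the lower half of the topology and $R$ the upper half. That separated form is what would survive a generalisation to stably compact, non-Hausdorff spaces.

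You also explicitly check that the two constructions are mutually inverse: $L=L_f$, $R=R_f$, and $f$ is recovered from $(L_f,R_f)$. The paper leaves this implicit. Your check is correct, and it is what actually makes the correspondence a bijection.
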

\begin{proof}
  {(1) $\Rightarrow$ (2).} Given an adjoint pair $L \dashv R : (X,\sqsubseteq_X) \to (Y,\sqsubseteq_Y)$, we define $f: X/{\sqsubseteq_X} \to Y/{\sqsubseteq_Y}$ as $f([x]) = [y]$ where $y$ is any element such that $x L y R x$. 
  
  It follows from (Unit) and $\sqsubseteq_X$ being above the identity that such a $y$ exists and from (Counit) that any two such are equivalent. 
  
  Independence of choice of representative follows from monotonicity and the fact that the quotients are partially ordered. 
  
  For monotonicity, if $xLyRx$, $x'Ly'Rx'$ and $x\sqsubseteq_X x'$, then $y\sqsubseteq_Y y'$ because $yRx\sqsubseteq x'Ly'$ implies $y\sqsubseteq y'$, due to (Counit) and weakening with idempotents.
  
  For continuity, note that the lower sets of the quotient $X/{\sqsubseteq_X}$ correspond to subsets of $X$ that are downward closed with respect to $\sqsubseteq_X$; we call these $\sqsubseteq_X$-lower sets. Similarly for upper sets.
  $L^\circ[-]$ (the set-theoretic converse of $L$) sends $\sqsubseteq_Y$-lower sets to $\sqsubseteq_X$-lower sets, and $R[-]$ sends $\sqsubseteq_Y$-upper sets to $\sqsubseteq_X$-upper sets. 
  So the inverse image $f^{-1}(D)$ of a $\sqsubseteq_Y$-lower set $D$ is exactly $L^\circ[D]$, and the inverse image of a $\sqsubseteq_Y$-upper set is $R[U]$. Because $X$ and $Y$ are compact and $L^\circ$ and $R$ are closed, $L^\circ[-]$ and $R[-]$ send closed sets to closed sets (as projections of closed sets from compact spaces). So $f^{-1}$ sends closed $\sqsubseteq_Y$-lower/upper sets to closed $\sqsubseteq_X$-lower/upper sets. Since these form a subbasis of closed sets for the KOHaus topology, $f$ is continuous.

  {(2) $\Rightarrow$ (1).} Given a continuous monotone map $f: X/{\sqsubseteq_X} \to Y/{\sqsubseteq_Y}$, we construct an adjoint pair $L \dashv R$.
  \begin{align*}
    x \, L \, y \quad &\Longleftrightarrow \quad f([x]) \le [y] \text{ in } Y/{\sqsubseteq_Y}\\
    y \, R \, x \quad &\Longleftrightarrow \quad [y] \le f([x]) \text{ in } Y/{\sqsubseteq_Y}
  \end{align*}
  
  With these definitions, the quotient relations $\bar L$ of $L$ and $\bar R$ of $R$ are weakening relations satisfying $\bar L \dashv \bar R$ (because $\bar L=f_\diamond$ and $\bar R=f^\diamond$), from which it follows that $L$ and $R$ are weakening with respect to the idempotents satisfying $L\dashv R$.

  It remains to show that $L$ and $R$ are closed subsets of $X \times Y$. Let $q_X: X \to X/{\sqsubseteq_X}$ and $q_Y: Y \to Y/{\sqsubseteq_Y}$ be the quotient maps. The order $\le$ on $Y/{\sqsubseteq_Y}$ is closed (since the quotient is a compact ordered space). Now:
  \[
  L = \{(x,y) \in X \times Y \mid f(q_X(x)) \le q_Y(y)\} = (f \circ q_X, q_Y)^{-1}(\le)
  \]
  Since $(f \circ q_X, q_Y): X \times Y \to (Y/{\sqsubseteq_Y})^2$ is continuous, and $\le$ is closed, $L$ is closed. Similarly, $R = (q_Y, f \circ q_X)^{-1}(\le)$ is closed.
  \end{proof}

\begin{theorem}\label{thm:adj-splge-pri-equiv-kohaus}
  The category $\Map(\Spl_\ge(\dRel(\Pri)))$ of Priestley preorders and left-adjoint relations is equivalent to the category $\KOHaus$ of compact ordered spaces with continuous monotone maps.
\end{theorem}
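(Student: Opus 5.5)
I will build an order-enriched functor $Q:\Map(\Spl_\ge(\dRel(\Pri)))\to\KOHaus$ and show that it is fully faithful and essentially surjective. On objects, $Q(X,\sqsubseteq_X)=X/{\sqsubseteq_X}$, which is a compact ordered space by Proposition~\ref{prop:quotient-to-kohaus}. On arrows, $Q(L\dashv R)$ is the map $f$ produced in the direction (1)$\Rightarrow$(2) of Theorem~\ref{thm:map-kohaus-pri}, namely $f([x])=[y]$ for any $y$ with $xLyRx$.

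\textbf{Functoriality.} The identity arrow on $(X,\sqsubseteq_X)$ is $\sqsubseteq_X\dashv\sqsubseteq_X$. Since $x\sqsubseteq_X x\sqsubseteq_X x$, we may take $y=x$, so $Q$ sends it to the identity. For composition, suppose $xLyRx$ and $yL'zR'y$. Then $x(L\then L')z(R'\then R)x$, and so $Q(L\then L')=Q(L')\circ Q(L)$.

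\textbf{Order and fully faithful.} The key observation is that the inverse construction in (2)$\Rightarrow$(1) of Theorem~\ref{thm:map-kohaus-pri} recovers every left adjoint: $xLy\iff f([x])\le[y]$. To check this, let $xLyRx$ witness $f$. Given $xLy'$, we get $yRxLy'$, so $y\sqsubseteq_Y y'$ by (Counit). Conversely, if $y\sqsubseteq_Y y'$, then $xLy\sqsubseteq_Y y'$ gives $xLy'$ by weakening. This is exactly the bijection of Theorem~\ref{thm:map-kohaus-pri}, so $Q$ is bijective on hom-sets. The same description shows that $L\subseteq L'$ holds iff $f([x])\le[y]\Rightarrow f'([x])\le[y]$ for all $x,y$. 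Taking $[y]=f([x])$, this is iff $f'\le f$ pointwise. So $Q$ is order-reversing on homs, and it becomes locally monotone if one reads the target as $\KOHaus^{\co}$, or equivalently composes with the canonical identification of the hom-orders. This variance is the one point I expect to need care with, because it depends on which order convention the equivalence is stated with. If the statement intends an equivalence of ordinary categories, or the $\co$-variant, the bijection is already enough. Otherwise I would match the conventions in the same way as for $(-)_\diamond:\Pri\to\Rel(\Pri)^\co$.

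\textbf{Essential surjectivity.} Given a compact ordered space $X$, take its cover $(\Spec(A_X),\sqsubseteq)$ (Definition~\ref{def:cover-of-kohaus}). By Theorem~\ref{thm:cover-recovers-kohaus}, $\Spec(A_X)/{\sqsubseteq}\cong X^o$. To obtain $X$ itself rather than $X^o$, I apply the cover construction to $X^o$, which is again a compact ordered space; its cover then quotients to $(X^o)^o=X$. Alternatively one can reverse the orders throughout. Either way every object of $\KOHaus$ is isomorphic to some $Q(X,\sqsubseteq)$, and a fully faithful, essentially surjective functor is an equivalence.
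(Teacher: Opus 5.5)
Your proposal is correct and follows the paper's proof: the quotient functor $(X,\sqsubseteq_X)\mapsto X/{\sqsubseteq_X}$, full faithfulness from the bijection of Theorem~\ref{thm:map-kohaus-pri}, and essential surjectivity from the cover of Theorem~\ref{thm:cover-recovers-kohaus}. You are more careful than the paper, which ignores both that the cover of $X$ quotients to $X^o$ rather than $X$ and that $Q$ reverses the hom-orders; both points are fixed at once by composing $Q$ with $(-)^o$, which is an isomorphism $\KOHaus\cong\KOHaus^{\co}$ of order-enriched categories.
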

\begin{proof}
  We show that mapping $(X,\sqsubseteq_X)$ to $X/{\sqsubseteq_X}$ as defined above is an equivalence of categories. It is functorial because the identity relation $\sqsubseteq_X$ maps to the identity function on $X/{\sqsubseteq_X}$ and composition of adjoint relations $(L_1 ; L_2 \dashv R_2 ; R_1)$ maps to composition of functions $f_2 \circ f_1$. It is fully faithful by Theorem~\ref{thm:map-kohaus-pri}. It is essentially surjective because every compact ordered space is isomorphic to the quotient of a Priestley preorder (Theorem~\ref{thm:cover-recovers-kohaus}).
\end{proof}

\begin{corollary}\label{cor:rel-iso-kohaus}
  Priestley preorders $(X,\sqsubseteq_X)$ and $(Y,\sqsubseteq_Y)$ are relationally isomorphic if and only if their quotients $X/{\sqsubseteq_X}$ and $Y/{\sqsubseteq_Y}$ are isomorphic in $\KOHaus$.
\end{corollary}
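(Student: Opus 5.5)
The plan is to read this off Theorem~\ref{thm:adj-splge-pri-equiv-kohaus}. A relational isomorphism between Priestley preorders should be taken to mean an isomorphism in the order-enriched category $\Map(\Spl_\ge(\dRel(\Pri)))$. Concretely, this is a left-adjoint proarrow $L:(X,\sqsubseteq_X)\looparrowright(Y,\sqsubseteq_Y)$ with a left-adjoint $L':(Y,\sqsubseteq_Y)\looparrowright(X,\sqsubseteq_X)$ such that $L\then L'={\sqsubseteq_X}$ and $L'\then L={\sqsubseteq_Y}$. Here the identity proarrows in the split double category are the idempotents themselves, by Definition~\ref{def:splitting-idempotents}.

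First I will record a small lemma. Any isomorphism in the proarrow projection, i.e.\ proarrows $S,S'$ with $S\then S'={\sqsubseteq_X}$ and $S'\then S={\sqsubseteq_Y}$, is automatically an adjunction $S\dashv S'$ with unit and counit equalities. So $S$ is a map and the inverse pair lives in $\Map$. This ensures that ``relationally isomorphic'' is unambiguous: whether one reads it in $\Spl_\ge(\dRel(\Pri))$'s proarrow category or in $\Map$, the notions agree. Comparing with Corollary~\ref{cor:rel-iso-kohaus}'s phrasing, this is the only place where some care is needed.

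Then the argument is formal. Equivalences of categories (Theorem~\ref{thm:adj-splge-pri-equiv-kohaus}) preserve and reflect isomorphisms, since the functor is fully faithful. For $(\Rightarrow)$, the quotient functor $(X,\sqsubseteq_X)\mapsto X/{\sqsubseteq_X}$ sends the inverse pair $L,L'$ to continuous monotone maps $f,f'$ with $f'\circ f=\id$ and $f\circ f'=\id$. This uses the functoriality checked in that theorem, namely that identity relations go to identity functions and composites go to composites. For $(\Leftarrow)$, given an isomorphism $f$ in $\KOHaus$ with inverse $g$, Theorem~\ref{thm:map-kohaus-pri} produces left adjoints $L_f,L_g$. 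Their composites $L_f\then L_g$ and $L_g\then L_f$ correspond to $g\circ f=\id$ and $f\circ g=\id$. By the bijection of Theorem~\ref{thm:map-kohaus-pri} (faithfulness) they are the identity proarrows $\sqsubseteq_X$ and $\sqsubseteq_Y$. This can also be checked directly: $x\,(L_f\then L_g)\,x'$ iff there is $y$ with $f[x]\le[y]$ and $g[y]\le[x']$, which by monotonicity is equivalent to $[x]\le[x']$, i.e.\ $x\sqsubseteq_X x'$ (take $y$ representing $f[x]$).

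The only mild obstacle is that lemma identifying relational isomorphisms with isomorphisms in $\Map$. Its unit direction is immediate from $S\then S'={\sqsubseteq_X}$, and its counit direction from $S'\then S={\sqsubseteq_Y}$. Everything else is the standard fact that fully faithful functors reflect isomorphisms.
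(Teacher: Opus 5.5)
Your proposal is correct and follows the paper's route: the paper states this corollary without proof, as an immediate consequence of the equivalence in Theorem~\ref{thm:adj-splge-pri-equiv-kohaus}, since equivalences preserve and reflect isomorphisms. Your extra lemma, that mutually inverse proarrows in $\Spl_\ge(\dRel(\Pri))$ are automatically adjoint in both directions, spells out a step the paper leaves implicit: a relational isomorphism is the same thing as an isomorphism in $\Map(\Spl_\ge(\dRel(\Pri)))$.
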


\section{Adjoint relations in proximity lattices}
\label{sec:adjoint-relations-proximity-lattices}

This section dualises the results of the previous section from spaces to algebras.

By Theorem~\ref{thm:adj-splge-pri-equiv-kohaus}, Priestley preorders with adjoint relations are equivalent to compact ordered spaces; by Theorem~\ref{thm:spl-pri-dl}, they are dual to proximity lattices; and by Proposition~\ref{prop:kohaus-prox-strong}, the proximity lattice of a compact ordered space is strong.
Taken together, this shows that, as relational categories, proximity lattices are equivalent to strong proximity lattices.
In this section we give a purely algebraic construction of the result that every proximity lattice is relationally isomorphic to a strong one (Theorem~\ref{thm:strongification}). 

Recall from Section~\ref{sec:proxlat} that $\Spl_\le(\dRel(\DL))$ has as objects pairs $(A,\prec)$ where $A$ is a distributive lattice and ${\prec} : A \looparrowright A$ is an idempotent $\DL$-relation below the identity, that is, $(A,\prec)$ is a proximity lattice. 
Functional morphisms (arrows) are $\DL$-morphisms preserving $\prec$, and relational morphisms (proarrows) are $\DL$-relations $S : A \looparrowright A'$ satisfying ${\prec} \mathrel{;} S \mathrel{;} {\prec'} \,=\, S$.

\begin{definition}\label{def:map-spl-dl}
  The order-enriched category $\Map(\Spl_\le(\dRel(\DL)))$ has as objects proximity lattices $(A,\prec)$, and as arrows $(A,\prec) \to (A',\prec')$ left-adjoint relations (proarrows) in $\Spl_\le(\dRel(\DL))$. Relations are ordered by inclusion.
\end{definition}

\begin{remark}\label{rem:map-spl-dl}
  We spell out the conditions on an arrow $L : (A,\prec) \to (A',\prec')$ in $\Map(\Spl_\le(\dRel(\DL)))$. 
  \begin{enumerate}
    \item \dfn{$L$ is a weakening relation:} for all $a,a' \in A$ and $b,b' \in A'$,
    \[
      a' \prec a \ \wedge \ a \, L \, b \ \wedge \ b \prec' b' \ \Rightarrow \ a' \, L \, b'.
    \]
    \item \dfn{$L$ is a $\DL$-relation:} $L$ can be tabulated in $\DL$ and is compatible with the lattice structure (see Section~\ref{sec:dedekind-cuts}).
    \item \dfn{$L$ is a proximity relation:}
    \[
      {\prec} \mathrel{;} L \mathrel{;} {\prec'} \,=\, L.
    \]
    \item \dfn{$L$ has a right adjoint} $R \subseteq A' \times A$ satisfying ${\prec'} \mathrel{;} R \mathrel{;} {\prec} \,=\, R$ and the adjunction laws: for all $a,a' \in A$ and $b,b' \in A'$,
    \begin{align*}
      a \prec a' \ &\Rightarrow \ \exists b \in A'.\, a \, L \, b \ \wedge \ b \, R \, a' & \text{(Unit)}\\
      \exists a \in A.\, b \, R \, a \, L \, b' \ &\Rightarrow \ b \prec' b' & \text{(Counit)}
    \end{align*}
  \end{enumerate}
\end{remark}

It now follows from Theorem~\ref{thm:adj-splge-pri-equiv-kohaus} and the duality of the respective double categories (Theorem~\ref{thm:spl-pri-dl}):

\begin{theorem}\label{thm:adj-splle-dl-equiv-kohaus}
  The category $\Map(\Spl_\le(\dRel(\DL)))$ of proximity lattices with left-adjoint relations is dually equivalent to the category $\KOHaus$ of compact ordered spaces with continuous monotone maps.
\end{theorem}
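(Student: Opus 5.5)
The plan is to compose two equivalences that are already available. We do not build anything new on the algebra side. First, Theorem~\ref{thm:adj-splge-pri-equiv-kohaus} gives an equivalence $\Map(\Spl_\ge(\dRel(\Pri)))\simeq\KOHaus$. Second, Theorem~\ref{thm:spl-pri-dl} gives an equivalence of double categories $\Spl_\ge(\dRel(\Pri))\simeq\Spl_\le(\dRel(\DL))^\op$. So it suffices to show that this double-categorical equivalence restricts to an equivalence
\[
\Map(\Spl_\ge(\dRel(\Pri)))\simeq\Map(\Spl_\le(\dRel(\DL)))^{\op},
\]
and then compose.

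\textbf{Step 1: the duality preserves adjunctions of proarrows.} An adjunction $L\dashv R$ between proarrows is expressed entirely through proarrow structure:
\begin{itemize}
\item composition of proarrows;
\item the identity proarrows, which are the idempotents $\sqsubseteq$ and $\prec$;
\item the order on proarrows, via the unit ${\sqsubseteq_X}\le L\then R$ and the counit $R\then L\le{\sqsubseteq_Y}$.
\end{itemize}
By Theorem~\ref{thm:duality-for-relations} and Remark~\ref{rem:op-arrows-proarrows}, the functor $\twobb^-$ on proarrow projections has three properties. It preserves composition and identities. It does not reverse the direction of proarrows. It does reverse their order.

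Applying $\twobb^-$ to the two adjunction inequalities therefore gives
\[
\twobb^{L}\then\twobb^{R}\le\Id, \qquad \Id\le\twobb^{R}\then\twobb^{L}.
\]
That is, $\twobb^R\dashv\twobb^L$. The left adjoint on spaces becomes the \emph{right} adjoint on algebras. Since $\twobb^-$ is part of an equivalence (essentially surjective and full on proarrows), every adjoint pair on the algebra side arises this way up to isomorphism.

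\textbf{Step 2: reindex to get a contravariant functor on maps.} Sending a map $L$ to the right adjoint $\twobb^R$ does not yet produce left adjoints in the algebra category. To fix this we reindex. An adjoint pair $\twobb^R\dashv\twobb^L$ with $\twobb^R:\twobb^Y\looparrowright\twobb^X$ is exactly an arrow $\twobb^Y\to\twobb^X$ of $\Map(\Spl_\le(\dRel(\DL)))$.

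So we define the functor on maps by
\[
(L\dashv R)\longmapsto(\twobb^R\dashv\twobb^L).
\]
It reverses arrows. On composites it sends $L_1\then L_2$ to $\twobb^{R_2}\then\twobb^{R_1}$, which is the correct composite because $\twobb^-$ preserves proarrow composition. On 2-cells, the order on left adjoints is mirrored by the order on right adjoints ($L\le L'$ iff $R'\le R$). Combined with the order reversal of $\twobb^-$, this means the functor preserves the order. That is consistent with Remark~\ref{rem:op-arrows-proarrows}, where order between arrows is not reversed.

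Full faithfulness follows because $\twobb^-$ is bijective on proarrows between given objects and right adjoints are unique. Essential surjectivity on objects follows from Theorem~\ref{thm:spl-pri-dl}, since isomorphisms in $\Spl$ are in particular relational isomorphisms and hence maps.

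\textbf{Main obstacle.} The delicate point is the bookkeeping in Step 2. We must check that the direction and order conventions of $^\op$ and $^\co$ line up, so that we really get a dual equivalence of $\Map$ categories rather than a co-dual one. If the orientation comes out wrong, the fallback is to note that $\KOHaus$ is isomorphic to $\KOHaus^{\co}$ via $X\mapsto X^o$, and absorb the order reversal there. This matches the appearance of $X^o$ in Theorem~\ref{thm:cover-recovers-kohaus}.

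Finally, composing with Theorem~\ref{thm:adj-splge-pri-equiv-kohaus} yields the claimed dual equivalence $\Map(\Spl_\le(\dRel(\DL)))\simeq\KOHaus^{\op}$.
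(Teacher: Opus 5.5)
Your proposal is correct and follows the paper's own route: the paper derives the theorem in one line from Theorem~\ref{thm:adj-splge-pri-equiv-kohaus} together with the double-categorical duality of Theorem~\ref{thm:spl-pri-dl}. Your Steps 1 and 2 spell out what the paper leaves implicit: $\twobb^-$ preserves the direction of proarrows but reverses their order, so $L\dashv R$ becomes $\twobb^R\dashv\twobb^L$, and this swap of left and right adjoints is exactly what makes the induced equivalence on maps contravariant (your fallback via $X\mapsto X^o$ only affects the order-enrichment, not the statement as given).
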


\begin{remark}\label{rem:kohaus-prox-quasi-inverse}
  Under the equivalence, a compact ordered space $X$ corresponds to its proximity lattice $A_X$ (Definition~\ref{def:kohaus-proximity-lattice}) and $f:X\to Y$ to the rounded companion ${\prec_Y}\mathrel{;}(f^*)_\diamond\mathrel{;}{\prec_X}$ of the proximity morphism $f^*$ (Proposition~\ref{prop:kohaus-to-prox}).
  In particular, every proximity lattice is relationally isomorphic to the strong proximity lattice $A_X$ of its dual compact ordered space $X$.
\end{remark}

To conclude this section, we give a purely algebraic proof that every proximity lattice is relationally isomorphic (but not functionally isomorphic) to a strong one. 
The construction is reminiscent of the way \cite{JungSunderhauf1996} use pairs $(O,K)$ in order to get a strong proximity lattice from a stably compact space, see also Remark~\ref{rem:jung-sunderhauf-OK}. 

\begin{thm}\label{thm:strongification}
  In $\Map(\Spl_\le(\dRel(\DL)))$, every proximity lattice is naturally isomorphic to a strong proximity lattice.
\end{thm}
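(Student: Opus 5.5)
Given a proximity lattice $(A,\prec)$, I would build a strong proximity lattice $A^s$ algebraically, in the spirit of the Jung--S\"underhauf pairs $(O,K)$ and of Definition~\ref{def:kohaus-proximity-lattice}. Concretely, $A^s$ will consist of pairs $(I,F)$, where $I$ is a round ideal and $F$ is a round filter of $(A,\prec)$ that are \emph{disjoint}. These pairs play the role of tokens $(d,u)$, since a lower open corresponds to a round ideal and an upper open to a round filter. The order is $(I,F)\le(I',F')$ iff $I\subseteq I'$ and $F\supseteq F'$. The proximity is $(I,F)\prec^s(I',F')$ iff every $a\prec b$ has $b\in F$ or $a\in I'$, the algebraic analogue of $u\cup d'=X$ via the tightness clause of Proposition~\ref{prop:pointsarecuts}. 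To keep things finitary, I would instead take pairs generated by single elements: $(\dda a,\uua b)$ with $a\wedge b$ suitably disjoint, i.e.\ pairs $(a,b)$ with $a\wedge b\prec 0$. Then set $(a,b)\prec^s(a',b')$ iff $1\prec b\vee a'$, with lattice operations $(a\vee a', b\wedge b')$ and $(a\wedge a',b\vee b')$. I would first try this finitary version and fall back to the ideal/filter version if strongness fails.

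First I verify that $A^s$ is a proximity lattice. The relation $\prec^s$ is a $\DL$-relation below identity and weakening, and it is transitive by the same calculation as in Proposition~\ref{prop:kohaus-to-prox}. Interpolation replaces Nachbin separation by interpolation in $A$ together with the join/meet rules. Strongness is the point of the construction. For $\vee$-strongness, $x\prec^s y\vee z$ unfolds to $1\prec b_x\vee a_y\vee a_z$. Interpolating each component with the rules of Remark~\ref{rem:prox-axioms} (interpolating a join $\prec$-below a join) gives $a'_y\prec a_y$ and $a'_z\prec a_z$ with $1\prec b_x\vee a'_y\vee a'_z$. The second components must be shrunk compatibly, and the pair structure gives exactly the freedom needed. $\wedge$-strongness is dual.

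Next I define the relational isomorphism. Let $a\,L\,(a',b')$ iff $a\prec a'$, and let $(a',b')\,R\,a$ iff $a'\prec a$, with the $b'$-component constrained by weakening. I check that $L$ and $R$ are $\DL$-relations satisfying ${\prec}\then L\then{\prec^s}=L$, and similarly for $R$. Then I verify (Unit) and (Counit) of Remark~\ref{rem:map-spl-dl} in both directions, so that $L\then R={\prec}$ and $R\then L={\prec^s}$, mirroring Remark~\ref{rem:proxrel-iso-priestley}. Naturality in $\Map$ comes from transporting a left-adjoint $\DL$-relation $A\looparrowright B$ to $A^s\looparrowright B^s$ by conjugating with these isomorphisms, and checking that the result is again of the same form.

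The main obstacle will be $R\then L\subseteq{\prec^s}$ together with strongness. Both require that the first and second components of a pair be ``jointly tight'', and this is exactly where the finitary choice might be too weak. A failure would show up as a spurious $(a,b)\prec^s(a,b)$ without $1\prec b\vee a$. If that happens, I will switch to pairs (round ideal, round filter), take $\prec^s$ as defined via tightness above, and use Proposition~\ref{prop:pointsarecuts}-style arguments. I would also avoid Proposition~\ref{prop:prime-cut-theorem}, since it presupposes strongness.
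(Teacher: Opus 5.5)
The gap is in how the pairs and the proximity are encoded, not in strongness. Neither of your two candidate constructions works.

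\textbf{The finitary version.} Here the proximity of $A$ drops out entirely. In any proximity lattice, $1\prec x$ holds iff $x=1$ (by $1\prec 1$, ${\prec}\subseteq{\le}$ and weakening), and $a\wedge b\prec 0$ holds iff $a\wedge b=0$. So your $A^s$ is just the lattice of pairs with $a\wedge b=0$, with $(a,b)\prec^s(a',b')$ iff $b\vee a'=1$. This is computed from the lattice $A$ alone, so it cannot be relationally isomorphic to $(A,\prec)$ in general. For example, the clopen algebra $\mathbf A$ of Section~\ref{sec:cantor} gives the same $A^s$ whether it carries ${\le}$ or the interval proximity. Yet by Theorem~\ref{thm:adj-splle-dl-equiv-kohaus} these two are not isomorphic in $\Map(\Spl_\le(\dRel(\DL)))$, since their duals are Cantor space and $[0,1]$. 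Your strongness step is also circular. Producing $a'_y\prec a_y$ and $a'_z\prec a_z$ with $1\prec b_x\vee a'_y\vee a'_z$ is $\vee$-strongness of $A$ itself. Remark~\ref{rem:prox-axioms} does not supply this, and it fails for the interval proximity on $\mathbf A$. Finally, the $R$ you sketch constrains only the first component, so (Counit) cannot yield the required condition $b'\vee a''=1$.

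\textbf{The fallback version.} This mistranslates the tokens of Definition~\ref{def:kohaus-proximity-lattice}.
\begin{itemize}
\item A token $(d,u)$ corresponds to the pair $(\dda(d,u),\uua(d,u))$. Here $\uua(d,u)$ consists of the tokens whose lower part contains $X\setminus u$. So a round filter records a closed down-set, not the upper open $u$.
\item Hence $d\cap u=\emptyset$ translates to the elementwise condition $I\prec F$ (that is, $a\in I$ and $b\in F$ imply $a\prec b$), not to $I\cap F=\emptyset$.
\item Disjointness is unusable. Every ideal contains $0$ and every filter contains $1$, so the top $(A,\{1\})$ and bottom $(\{0\},A)$ are excluded. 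For $A=\twobb$ only one pair survives. In the four-element Boolean algebra $\{0,p,q,1\}$, the pairs $({\downarrow}p,{\uparrow}q)$ and $({\downarrow}q,{\uparrow}p)$ have no common lower bound.
\item Likewise, $u\cup d'=X$ translates, by compactness, into a single witness: $F\cap I'\neq\emptyset$. Your tightness clause quantifies over all $a\prec b$, so it describes a point rather than a cover. For ${\prec}={\le}$ it says $F\cup I'=A$.
\end{itemize}

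\textbf{With these corrections you reach the paper's proof.}
\begin{itemize}
\item $A^*$ consists of the pairs with $I\prec F$, and $(I,F)\prec_{A^*}(J,G)$ iff $F\cap J\neq\emptyset$.
\item The isomorphism is membership: $a\Vdash(I,F)$ iff $a\in I$, and $(I,F)\Vdash' a$ iff $a\in F$.
\item ${\Vdash}\then{\Vdash'}\subseteq{\prec_A}$ is exactly the side condition $I\prec F$.
\item (Unit) and the reverse inclusion use the principal pair $(\dda c,\uua c)$ for an interpolant $a\prec c\prec b$.
\item ${\Vdash'}\then{\Vdash}={\prec_{A^*}}$ holds by definition.
\item For $\vee$-strongness, take $b\vee c\in F\cap(J\sqcup K)$. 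Interpolate $b\prec b'\prec b''$ in $J$ and $c\prec c'\prec c''$ in $K$, then pass to $(\dda b',\uua b')$ and $(\dda c',\uua c')$.
\end{itemize}
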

\begin{proof}
  For a proximity lattice $(A,\prec_A)$, 
  define $A^*$ to consist of all pairs $(I,F)$ where $I$ is a round ideal, $F$ is a round filter, and $a\in I$, $b\in F$ implies $a\prec_A b$. 
  $A^*$ is a distributive lattice ordered by inclusion in the ideals and reverse inclusion in the filters. More precisely, if $F$ and $G$ are round filters, then $F\sqcup G$ -- the filter generated by $F\cup G$ -- is round, as is $F\cap G$.
  Likewise, (ideal) joins and intersections of round ideals are round. Thus the lattice operations in $A^*$ are
  \begin{align*}
      (I,F)\wedge (J,G) &= (I\cap J,F\sqcup G)\\
      (I,F)\vee(J,G) &= (I\sqcup J,F\cap G)\\
      \bot_{A^*} &= (\{\bot_A\},A)\\
      \top_{A^*} &= (A,\{\top_A\})
  \end{align*}
Define $\prec_{A^*}$ by $(I,F)\prec_{A^*}(J,G)$ if and only if  $F\cap J\neq\emptyset$.

Evidently, $\prec_{A^*}$ is a distributive lattice weakening relation, and if $(I,F)\prec_{A^*}(J,G)$ then $I\subseteq J$ and $F\supseteq G$. 

To show that $(A^*,\prec_{A^*})$ is join strong, suppose $$(I,F)\prec_{A^*} (J,G)\vee (K,H).$$
Find elements $b\in J$ and $c\in K$ such that $b\vee c \in F$. 
By roundness of $J$ and $K$, find $b',b''\in J$ and $c',c''\in K$ such that $b\prec_A b'\prec_Ab''$ and $c\prec_A c'\prec_A c''$. 
Thus $b''$ witnesses that $(\dda b',\uua b') \prec_{A^*} (J,G)$ and $c''$ witnesses that $(\dda c',\uua c')\prec_{A^*} (K,H)$. 
Therefore $b\vee c \in F \cap (\dda b' \sqcup \dda c')$,
which is exactly $(I,F) \prec_{A^*} (\dda b',\uua b')\vee (\dda c',\uua c')$.
This is $\vee$-strongness: interpolants below $(J,G)$ and $(K,H)$ with $(I,F)$ below their join. Dually, $\prec_{A^*}$ is also meet strong.
     
Now define distributive lattice weakening relations ${\Vdash}\colon A\looparrowright A^*$ and ${\Vdash'}\colon A^*\looparrowright A$ by suitable restrictions of membership: $a\Vdash (I,F)$ if and only if $a\in I$ and $(I,F)\Vdash' a$ if and only if $F\ni a$. 
Roundness of $I$ and $F$ ensures that these are compatible with $\prec_A$ and $\prec_{A^*}$. The relations are adjoint, ${\Vdash}\dashv{\Vdash'}$: 
(Unit) If $a\prec_A b$, by interpolation find $c$ in between. Then $a\Vdash (\dda c,\uua c)\Vdash' b$.
(Counit) If $(I,F)\Vdash' c\Vdash (J,G)$, then by definition $c\in F\cap J$, so $(I,F)\prec_{A^*} (J,G)$.
The adjunction is a relational isomorphism: if $a\Vdash (I,F)\Vdash' b$, then $a\in I$ and $b\in F$ implies $a\prec_A b$, so ${\prec_A}={\Vdash}\mathrel{;}{\Vdash'}$; and if $(I,F)\prec_{A^*} (J,G)$, then for $c\in F\cap J$ we have $(I,F)\Vdash' c\Vdash (J,G)$, so ${\prec_{A^*}}={\Vdash'}\mathrel{;}{\Vdash}$.
\end{proof}
\begin{corollary}\label{cor:round-filters-strongification}
The lattice of round filters (dually, round ideals) of a proximity lattice $A$ is isomorphic to the lattice of round filters (dually, round ideals) of $A^*$.
In particular, the dual compact ordered spaces of $A$ and $A^*$ are isomorphic.
\end{corollary}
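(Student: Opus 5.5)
The plan is to transport round filters along the relational isomorphism ${\Vdash}\dashv{\Vdash'}$ built in the proof of Theorem~\ref{thm:strongification}, using the characterisation of round filters in Proposition~\ref{prop:pointsarecuts}. That proposition identifies round filters with the upper parts $1R$ of adjoint pairs $L\dashv R:(A,\prec_A)\looparrowright(\twobb,\le)$ of lattice relations, i.e.\ with maps $A\to\twobb$ in $\Map(\Spl_\le(\dRel(\DL)))$. Since ${\Vdash}\dashv{\Vdash'}$ is an isomorphism in this category with ${\Vdash}\then{\Vdash'}={\prec_A}$ and ${\Vdash'}\then{\Vdash}={\prec_{A^*}}$, precomposition with it is a bijection between maps $A^*\to\twobb$ and maps $A\to\twobb$. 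Precomposition is monotone in the order on relations, so this bijection preserves inclusion of the $R$-parts, i.e.\ of round filters. Its inverse is precomposition with the inverse map, and the identity laws hold because the identity on $(A,\prec_A)$ is $\prec_A$.

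Concretely, I would write the correspondence down. A round filter $F$ of $A$ is sent to
\[ \Phi(F)=\{(I,G)\in A^*\mid G\cap F\neq\emptyset\}, \]
which is the $R$-part of the composite obtained by precomposing with ${\Vdash'}$, since $(I,G)\Vdash' a$ iff $a\in G$. A round filter $\mathcal G$ of $A^*$ is sent to
\[ \Psi(\mathcal G)=\{a\in A\mid \exists (I,G)\in\mathcal G.\ a\in I\}. \]
I would then verify directly three things. First, $\Phi(F)$ is a round filter; roundness comes from interpolation via the tokens $(\dda c,\uua c)$. Second, $\Psi(\mathcal G)$ is a round filter. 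Third, $\Psi\Phi=\id$ and $\Phi\Psi=\id$. These identities are exactly the equalities ${\Vdash}\then{\Vdash'}={\prec_A}$ and ${\Vdash'}\then{\Vdash}={\prec_{A^*}}$, together with roundness, which gives ${\prec}\then F=F$. Both $\Phi$ and $\Psi$ are evidently monotone, so they are mutually inverse order-isomorphisms. Hence they are lattice isomorphisms; by Remark~\ref{rem:box-locale} the lattice structure is determined by the order anyway. Round ideals follow by the order-dual argument, using the $L$-parts.

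For the final claim, the dual compact ordered space of a proximity lattice is obtained via Theorem~\ref{thm:adj-splle-dl-equiv-kohaus}, and dual equivalences preserve isomorphisms. The relational isomorphism $A\cong A^*$ of Theorem~\ref{thm:strongification} therefore yields an isomorphism of the dual compact ordered spaces. Alternatively, $A^*$ is strong, so one can use Theorem~\ref{thm:spec-quotient-kohaus} on the strong side.

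I expect the main obstacle to be the bookkeeping in the first step. One must check that composites of lattice relations remain lattice relations, so that the transported adjoint pair is again a pair of $\DL$-relations and Proposition~\ref{prop:pointsarecuts} applies to say its $R$-part is a round filter. One must also check that disjointness and tightness transfer. I would first try to obtain this abstractly from the fact that adjunctions compose and that $\DL$-relations are closed under composition. Only if that is unclear would I fall back on verifying the filter and roundness properties of $\Phi(F)$ and $\Psi(\mathcal G)$ by hand, using the lattice operations of $A^*$ listed in the proof of Theorem~\ref{thm:strongification}.
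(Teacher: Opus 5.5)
Your strategy is the paper's: transport along ${\Vdash}\dashv{\Vdash'}$, and use ${\Vdash}\then{\Vdash'}={\prec_A}$ and ${\Vdash'}\then{\Vdash}={\prec_{A^*}}$ for the identities. \textbf{There is a genuine gap:} your explicit maps have ${\Vdash}$ and ${\Vdash'}$ swapped, and as written they are constant.
\begin{itemize}
\item Since $G$ and $F$ are both filters of $A$, $\top_A\in G\cap F$, so $\Phi(F)=A^*$ for every $F$.
\item Since every filter $\mathcal G$ of $A^*$ contains $\top_{A^*}=(A,\{\top_A\})$, you get $\Psi(\mathcal G)=A$ for every $\mathcal G$.
\end{itemize}
Your first two checks therefore succeed trivially and the third, $\Psi\Phi=\id=\Phi\Psi$, fails outright. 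The variance is also wrong: $(J,H)\le(I,G)$ means $H\supseteq G$, so your $\Phi(F)$ is a down-set (in fact an ideal) of $A^*$.

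The slip is in ``the $R$-part of the composite obtained by precomposing with ${\Vdash'}$''. The right adjoint of the map ${\Vdash'}\colon A^*\to A$ is ${\Vdash}$, so that $R$-part is $R\then{\Vdash}$, namely $\Phi(F)=\{(I,G)\mid I\cap F\neq\emptyset\}$. Dually, precomposing with ${\Vdash}$ gives $R'\then{\Vdash'}$, namely $\Psi(\mathcal G)=\{a\mid \exists (I,G)\in\mathcal G.\ a\in G\}$. These are exactly the paper's ${\Vdash}(F)$ and ${\Vdash'}(\mathcal F)$, and with them your plan goes through:
\begin{itemize}
\item $\Psi\Phi(F)=\{a\mid\exists b\in F.\ b\prec_A a\}=F$, by roundness and upward closure.
\item $\Phi\Psi(\mathcal G)=\{y\mid \exists x\in\mathcal G.\ x\prec_{A^*}y\}=\mathcal G$, by roundness and the fact that $\prec_{A^*}$ lies below the order of $A^*$.
\item Roundness of $\Phi(F)$: if $a\in I\cap F$, pick $a'\in F$ with $a'\prec_A c\prec_A a$. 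Then $(\dda c,\uua c)\in\Phi(F)$ and $(\dda c,\uua c)\prec_{A^*}(I,G)$.
\end{itemize}

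There is also a caution about the abstract route. Maps $A\to\twobb$ in $\Map(\Spl_\le(\dRel(\DL)))$ correspond to cuts. Precomposition therefore yields an order-isomorphism between the upper halves of cuts of $A$ and of $A^*$. That is enough for the final claim about dual spaces, and your derivation of that claim from Theorem~\ref{thm:adj-splle-dl-equiv-kohaus} is fine. But the lattice in the statement is that of all filters satisfying roundness, the one closed under intersections in Remark~\ref{rem:box-locale}. Upper halves of cuts are not closed under intersection; in the strong case they are prime, by Proposition~\ref{prop:round-prime}. So the lattice isomorphism needs the hands-on verification with the corrected formulas, which works for arbitrary round filters, not just the categorical bijection.
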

\begin{proof}
Let ${\Vdash}\colon A\looparrowright A^*$ and ${\Vdash'}\colon A^*\looparrowright A$ be the relational isomorphism of Theorem~\ref{thm:strongification}.
For a round filter $F$ in $A$ define
\[
  \Vdash(F) \;=\; \bigl\{(I,G)\in A^*\ \mid\ \exists a\in F.\, a\Vdash(I,G)\bigr\},
\]
and for a round filter $\mathcal{F}$ in $A^*$ define
\[
  \Vdash'(\mathcal{F}) \;=\; \bigl\{a\in A\ \mid\ \exists (I,G)\in \mathcal{F}.\, (I,G)\Vdash' a\bigr\}.
\]
These maps send round filters to round filters, and their compositions are identities because ${\Vdash}\dashv{\Vdash'}$ is a relational isomorphism.
The dual claim for round ideals is symmetric.
\end{proof}

\section{Conclusion}\label{sec:conclusion}

The work reported here is closely related to the duality-theoretic perspective on canonical extension developed by Mai Gehrke and collaborators. 
Priestley duality supplies the zero-dimensional dual $X$ of a distributive lattice $A$; the canonical extension is then $A^\delta=\mathrm{Up}(X)$, the complete lattice of all up-sets of the underlying poset of $X$. 
In that step one retains the dual order and discards the Priestley topology.

Our constructions likewise start from Priestley duality and dualise relations.
The double category of relations lifts $\Pri\simeq\DL^\op$ to $\dRel(\Pri)\simeq\dRel(\DL)^\op$.
Splitting idempotent proximity relations (below identity on the lattice side) freely adds objects dual to quotients by closed equivalence relations on the dual spaces.
Proximity relations and Priestley preorders are thus relational structure on the same dual data that underlie canonical extension.

Because our double-categorical treatment generalises to a large class of order-enriched natural dualities, we expect that it will be fruitful to investigate the interaction of this work with the theory of canonical extensions of distributive lattices, posets, and related structures.

\section*{Acknowledgements}

We thank the anonymous reviewer for careful reading and for comments that improved the paper.

\bibliographystyle{alpha}
\bibliography{main}

\end{document}